\documentclass [10pt,article,oneside]{memoir}
\usepackage{mystandard}
\acinq
\aquatre

\usepackage[english]{babel}
\usepackage[utf8]{inputenc}
\usepackage{graphicx,csquotes,algorithm}
\usepackage{algpseudocode}

\usepackage[all]{xy}
\xyoption{poly}
\xyoption{arrow}
\SelectTips{xy}{12}
% \hyphenation{}

%\usepackage{hyperref}
%\hypersetup{colorlinks=false,allbordercolors={white}}

\bibliography{biblio.bib}

\newcommand{\V}{\mathcal{V}}
\newcommand{\ew}{\varepsilon}

\newcommand{\iid}{\textsl{i.i.d.}}
\newcommand{\B}{\partial}

\newcommand{\Hbar}{\overline{H}}
\newcommand{\Ybar}{\overline{Y}}
\newcommand{\Zbar}{\overline{Z}}

\renewcommand{\H}{\mathcal{H}}
\newcommand{\xbar}{\overline{x}}
\newcommand{\BSA}{{\normalfont\textsf{PSA}}}
\newcommand{\FSA}{{\normalfont\textsf{PFSA}}}

\newcommand{\unit}{\mathbf{e}}

\newcommand{\EOF}{\textsc{eof}}
\newcommand{\DL}{\textsc{dl}}
\newcommand{\WFI}{\textsc{wfi}}

\begin{document}

\title{\huge\bfseries Synchronization of Bernoulli \\sequences on shared letters}
\author{Samy Abbes% \thanks{Work done thanks to a
      % ``demi-d\'el\'egation INRIA'' at INRIA Rennes accepted by
      % Universit\'e Paris Diderot from March 2015 to
      % August 2015.}
    \\
 Universit\'e Paris Diderot/IRIF CNRS UMR 8243\\
    \normalsize\texttt{samy.abbes@univ-paris-diderot.fr}}
\date{November 2016}
\maketitle

\begin{abstract}
  The topic of this paper is the distributed and incremental
  generation of long executions of concurrent systems, uniformly or
  more generally with weights associated to elementary actions.

  Synchronizing sequences of letters on alphabets sharing letters are
  known to produce a trace in the concurrency theoretic sense,
  \emph{i.e.}, a labeled partially ordered set. We study the probabilistic
  aspects by considering the synchronization of Bernoulli sequences of
  letters, under the light of Bernoulli and uniform measures recently
  introduced for trace monoids.

  We introduce two algorithms that produce random traces, using only
  local random primitives.  We thoroughly study some specific
  examples, the path model and the ring model, both of arbitrary
  size. For these models, we show how to generate any Bernoulli
  distributed random traces, which includes the case of uniform
  generation.
\end{abstract}

\section{Introduction}
\label{sec:introduction}

The developments of concurrency theory and of model checking theory
have urged the development of a theory of probabilistic concurrent
systems. A central issue is the problem of the uniform generation of
long executions of concurrent systems. Executions of a concurrent
system can be represented as partial orders of events. Each partial
order has several sequentializations, the combinatorics of which is
non trivial. Therefore, the uniform generation of executions of a
concurrent system is much different from the uniform generation of
their sequentializations. The later can be done using uniform
generation techniques for runs of transition systems, at the expense
of an increasing amount of complexity due to the combinatorics of
sequentializations~\cite{oudinet07:_unifor,denise06:_unifor,sen07:_effec}.
Yet, it still misses the overall goal of uniform generation among
executions of the system, since it focuses on their
sequentializations.

% The problem of uniform generation of long exeuctions in concurrent
% systems is twofold. First, one needs to have at hand a probabilistic
% model combining probability and concurrency. And second, the efficient
% uniform generation of execution shall be studied under the light of
% the probabilistic model.

We consider the framework of \emph{trace monoids}~\cite{diekert90},
also called monoids with partial commutations~\cite{cartier69}. Trace
monoids have been studied as basic models of concurrent systems since
several decades~\cite{mazurkiewicz87:_trace,diekert95}. One uses the
algebraic commutation of generators of the monoid to render the
concurrency of elementary actions. The elements of the trace monoid,
called \emph{traces}, represent the finite executions of the
concurrent system~\cite{diekert90}. More sophisticated concurrency
models build on trace monoids, for instance executions of $1$-safe
Petri nets correspond to regular languages of trace
monoids~\cite{reisig98:_petri_i}.

The topic of this paper is the effective uniform generation of large
traces, \emph{i.e.}, large elements in a trace monoid. It has several
potential applications in the model checking and in the simulation of
concurrent systems.

In a recent work co-authored with J.~Mairesse~\cite{abbes_mair15}, we
have shown that the notion of Bernoulli measure for trace monoids
provides an analogous, in a framework with concurrency, of classical
Bernoulli sequences---\emph{i.e.}, the mathematical model of
memoryless coin tossing. In particular, Bernoulli measures encompass
the \emph{uniform measure}, an analogous for trace monoids of the
maximal entropy measure. Therefore Bernoulli measures are an adequate
theoretical ground to work with for the random generation of traces,
in particular for the uniform generation.

Bernoulli sequences are highly efficiently approximated by random
generators. An obvious, but nevertheless crucial feature of their
generation is that it is incremental. For the random generation
of large traces, we shall also insist that the generation procedure is
incremental. Furthermore, another desirable feature is that it is
distributed, in a sense that we explain now.

We consider trace monoids attached to networks of alphabets sharing
common letters.  The synchronization of several sequences of letters
on different local alphabets sharing common letters is known to be
entirely encoded by a unique element of a trace monoid. If $\Sigma$
denotes the union of all local alphabets, then the
\emph{synchronization trace monoid} is the monoid with the
presentation by generators and relations
$\M= \langle\Sigma\;|\; ab=ba \rangle$, where $(a,b)$ ranges over
pairs of letters that do not occur in any common local
alphabet. Hence, seeing local alphabets as ``resources'', two distinct
letters $a$ and $b$ commute in $\M$ if and only if they do not share
any common resource---a standard paradigm in concurrency theory.

In this framework, our problem rephrases as follows: given a
synchronization trace monoid, design a probabilistic protocol to
reconstruct a global random trace, uniformly among traces, and in a
distributed and incremental way. By ``distributed'', we mean that the
random primitives should only deal with the local alphabets. The
expression ``uniformly among traces'' deserves also an explanation,
since traces of a monoid are countably many. One interpretation is to
fix a size $k$ for target traces, and to retrieve a trace uniformly
distributed among those of size~$k$. Another interpretation is to
consider infinite traces, \emph{i.e.}, endless executions of the
concurrent system. It amounts in an idealization of the case with
large size traces.  We then rely on the notion of uniform measure for
infinite traces, which happens to have nicer properties than the
uniform distribution on traces of fixed size.  As explained above, the
uniform measure belongs to the largest class of Bernoulli measures for
trace monoids. Hence, a slightly more general problem is the
distributed and incremental reconstruction of any Bernoulli measure
attached to a synchronization trace monoid.

We introduce two algorithms that partly solve this problem, the
Probabilistic Synchronization Algorithm (\BSA) and the Probabilistic
Full Synchronization Algorithm (\FSA), the later building on the
former. According to the topology of the network, one algorithm or the
other shall be applied.  We show that the problem of generating traces
according to a Bernoulli measure is entirely solved for some specific
topologies of the network, namely for the path topology and for the
ring topology---it could also be applied successfully to a tree
topology. It is only partly solved for a general topology.  Yet, even
in the case of a general topology, our procedure outputs large random
traces according to a Bernoulli scheme, although it is unclear how to
tune the probabilistic parameters in order to obtain
uniformity. Furthermore, the amount of time needed to obtain a trace
of size $k$ is linear with $k$ in average.

Several works analyzing the exchange of information in concurrent
systems restrict their studies to tree topologies, see for
instance~\cite{genest13:_async}, and very few has been said on the
probabilistic aspects.  In particular, designing an incremental and
distributed procedure to uniformly randomize a system with a ring
topology has been an unsolved problem in the literature so far.

How does our method compare with standard generation methods based on
tools from Analytic Combinatorics, such as Boltzmann samplers
techniques? There exists a normal form for traces, the so-called
Cartier-Foata normal form, from which one derives a bijection between
traces of a given trace monoid and words of a regular language. This
seems to draw a direct connection with Boltzmann sampling of words
from regular languages
(\cite[p.590]{duchon04},~\cite{bernardi12}). This approach however
suffers from some drawbacks. First, the rejection mechanism which is
at the very heart of the Boltzmann sampling approach prevents the
construction to be incremental, as we seek. This could be avoided by
considering instead the generation of the Markov chain of the elements
of the normal form associated to the uniform measure on infinite
traces, as investigated partly in~\cite{abbes15_unif}. But this leads
to a second problem, namely that the randomness now concentrates on
the set of cliques of the trace monoid, a set which size grows
exponentially fast with the number of generators of the monoid in
general. It henceforth misses the point of being a distributed
generation.

\medskip \emph{Outline.}\quad Section~\ref{sec:illustrating-bsa-fsa}
illustrates on small examples the two algorithms for generating random
traces from a network of alphabets, the \BSA\ and the \FSA, to be
fully analyzed in forthcoming
sections. Sections~\ref{sec:prel-trace-mono} and
Section~\ref{sec:prel-prob} are two preliminary sections, gathering
material on the combinatorics of trace monoids for the first one, and
material on Bernoulli measures for trace monoids for the second
one. The non-random algorithms for the synchronization of sequences,
possibly infinite, that we present in
Section~\ref{sec:prel-trace-mono} seem to be new, although the
fundamental ideas on which they rest are not new. A new result on the
theory of M\"obius valuations is given at the end of
Section~\ref{sec:prel-prob}.

Our main contributions are organized in
Sections~\ref{sec:basic-synchr-algor}
and~\ref{sec:iter-meas-full}. Section~\ref{sec:basic-synchr-algor} is
devoted to the Probabilistic Synchronization Algorithm (\BSA). The
analysis of the algorithm itself is quite simple and short. The most
striking contributions are in the examples. In particular, for the
surprising case of the path model, the \BSA\ is shown to work the best
one could expect.

Section~\ref{sec:iter-meas-full} is devoted to the description and to
the analysis of the Probabilistic Full Synchronization Algorithm
(\FSA), both from the probabilistic point of view and from the
complexity point of view. Examples are examined; the ring model
of arbitrary size is precisely studied, and we obtain the satisfying
result of simulating any Bernoulli scheme on infinite traces by means
of the \FSA.

Finally, Section~\ref{sec:computational-issues} discusses some
additional complexity issues and suggests perspectives.

\section{Illustrating the \BSA\ and \FSA\ algorithms}
\label{sec:illustrating-bsa-fsa}

In this section, we illustrate our two generation algorithms on small
examples, as well as the tuning of their probabilistic parameters, at
an informal and descriptive level.

\subsection{Illustrating the Probabilistic Synchronization Algorithm (1)}
\label{sec:bsa-algorithm}

Let $a_0,a_1,a_2,a_3,a_4$ be five distinct symbols, and consider the
four alphabets:
\begin{align*}
  \Sigma_1&=\{a_0,a_1\},&
  \Sigma_2&=\{a_1,a_2\},&
  \Sigma_3&=\{a_2,a_3\},&
  \Sigma_4&=\{a_3,a_4\}.
\end{align*}

To each alphabet $\Sigma_i$ is attached a device able to produce a
random sequence
$Y_i=(Y_{i,1},Y_{i,2},\dots)$ of letters $Y_{i,j}\in\Sigma_i$. The random
letters are independent and identically distributed, according to a
distribution $p_i=\left(\xymatrix@1@C=1em{
p_i(a_{i-1})&p_i(a_i)}\right)$
to be determined later, but with positive coefficients. Furthermore,
we assume that the devices themselves are probabilistically
independent with respect to each other.  For instance, the beginnings
of the four sequences might be:
\begin{align*}
  Y_1&=(a_1a_0a_0a_1a_0\dots)&Y_2&=(a_1a_2a_2a_2a_2\dots)\\
  Y_3&=(a_2a_2a_3a_3a_2\dots)&Y_4&=(a_4a_3a_3a_3a_4\dots)
\end{align*}

Then we stack these four sequences in a unique vector $Y$ with four
coordinates, and we join the matching letters from one coordinate to
its neighbor coordinates, in their order of appearance. The border
elements $a_0$ and $a_4$ do not match with any other element; some
elements are not yet matched (for example, the second occurrence of
$a_1$ in~$Y_1$), and we do not picture them. This yields:

\begin{gather*}
  Y=\left(\begin{gathered}
\xymatrix@R=.5em@C=1ex{a_1\ar@{-}[d]&a_0&a_0&&&&\dots\\
a_1&a_2\ar@{-}[d]&a_2\ar@{-}[d]&&&a_2\ar@{-}[d]&\dots\\
&a_2&a_2&a_3\ar@{-}[d]&a_3\ar@{-}[d]&a_2&\dots\\
a_4&&&a_3&a_3&&\dots
}
\end{gathered}
\right)
\end{gather*}

Finally, we identify each matching pair with a single \emph{piece},
labeled with the matching letter and we impose a rotation of the
whole picture by a quarter turn counterclockwise. We obtain the
\emph{heap of pieces} depicted in Figure~\ref{fig:heapsosj}. It is
apparent on this picture that pieces $a_i$ and $a_{i+1}$ share a sort
of common resource. Therefore we depict the topology of the system
resulting from the synchronization of the network of alphabets
$(\Sigma_1,\Sigma_2,\Sigma_3,\Sigma_4)$ in Figure~\ref{fig:poqwpqq},
by drawing an undirected graph with pieces as vertices, and an edge
between two pieces whenever they share a common resource, or
equivalently, whenever they both appear in the same alphabet.

The \BSA\ algorithm consists, for each device attached to each
alphabet~$\Sigma_i$, to do the following:
\begin{inparaenum}[1)]
  \item Generate its own local random sequence;
  \item Communicate with the neighbors in order to tag the matching
    of each of the occurring letters.
\end{inparaenum}

\begin{figure}
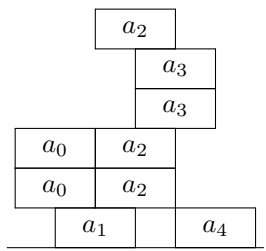

\begin{gather*}
  \xy<.15em,0em>:
(-2,0);(62,0)**@{-},
(10,0)="G",
@={"G","G"+(20,0),"G"+(20,10),"G"+(0,10)},
s0="prev" @@{;"prev";**@{-}="prev"}, "G"+(10,5)*{a_1},
(40,0)="G",
@={"G","G"+(20,0),"G"+(20,10),"G"+(0,10)},
s0="prev" @@{;"prev";**@{-}="prev"}, "G"+(10,5)*{a_4},
(0,10)="G",
@={"G","G"+(20,0),"G"+(20,10),"G"+(0,10)},
s0="prev" @@{;"prev";**@{-}="prev"}, "G"+(10,5)*{a_0},
(0,20)="G",
@={"G","G"+(20,0),"G"+(20,10),"G"+(0,10)},
s0="prev" @@{;"prev";**@{-}="prev"}, "G"+(10,5)*{a_0},
(20,10)="G",
@={"G","G"+(20,0),"G"+(20,10),"G"+(0,10)},
s0="prev" @@{;"prev";**@{-}="prev"}, "G"+(10,5)*{a_2},
(20,20)="G",
@={"G","G"+(20,0),"G"+(20,10),"G"+(0,10)},
s0="prev" @@{;"prev";**@{-}="prev"}, "G"+(10,5)*{a_2},
(30,30)="G",
@={"G","G"+(20,0),"G"+(20,10),"G"+(0,10)},
s0="prev" @@{;"prev";**@{-}="prev"}, "G"+(10,5)*{a_3},
(30,40)="G",
@={"G","G"+(20,0),"G"+(20,10),"G"+(0,10)},
s0="prev" @@{;"prev";**@{-}="prev"}, "G"+(10,5)*{a_3},
(20,50)="G",
@={"G","G"+(20,0),"G"+(20,10),"G"+(0,10)},
s0="prev" @@{;"prev";**@{-}="prev"}, "G"+(10,5)*{a_2},
\endxy
\end{gather*}
\caption{Heap of pieces corresponding to the vector $Y$}
  \label{fig:heapsosj}
\end{figure}

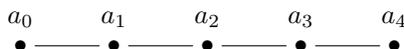
\begin{figure}
  \begin{gather*}
    \xymatrix{\bullet\ar@{-}[r]\POS!U(2)\drop{a_0}&
\bullet\ar@{-}[r]\POS!U(2)\drop{a_1}&
\bullet\ar@{-}[r]\POS!U(2)\drop{a_2}&
\bullet\ar@{-}[r]\POS!U(2)\drop{a_3}&
\bullet\POS!U(2)\drop{a_4}}
  \end{gather*}
  \caption{Synchronization graph of the path model with five generators}
  \label{fig:poqwpqq}
\end{figure}
Since the coefficients $p_i(a_{i-1})$ and $p_i(a_i)$ are positive,
each coordinate $Y_i$ has infinitely many occurrences of both letters
$a_{i-1}$ and~$a_i$. Furthermore, each occurrence of a letter $a_j$
with $j\neq0,4$ will eventually match a corresponding occurrence in a
neighbor coordinate. Therefore the random heap that we obtain,
say~$\xi$, is infinite if the algorithm keeps running forever. The
\BSA\ produces incremental finite approximations of~$\xi$. The
probabilistic analysis of the \BSA\ consists in determining the
probability distribution of the theoretical infinite random heap $\xi$
that the \BSA\ would produce if it was to run indefinitely.

In order to characterize the probability distribution of~$\xi$, we
shall denote by~$\up x$, for each finite possible heap~$x$, the
probabilistic event that $\xi$ \emph{starts with~$x$}. This means that
the finite heap $x$ can be seen at the bottom of~$\xi$. Equivalently:
$\xi\in\up x$ if, after waiting long enough, the finite heap $x$ can
be seen at the bottom of the current heap produced by the \BSA.

We aim at evaluating the probability $\pr(\up x)$ for every finite
heap~$x$. For this, consider an arbitrary possible sequentialization
of~$x$, seen as a successive piling of different occurrences of the
elementary pieces, which we write symbolically as
$x=x_1\cdot\ldots\cdot x_k$ with $x_j\in\{a_0,a_1,a_2,a_3,a_4\}$.
Then $\pr(\up x)$ is obtained as the following product:
\begin{gather*}
  \pr(\up x)=t_{x_1}\cdot\ldots\cdot t_{x_k},
\end{gather*}
where the coefficient $t_{x_j}$ corresponds to the probability
of having the letter $x_j$ appearing either on its single coordinate
if $x_j=a_0$ or $x_j=a_4$, or on both coordinates to which it belongs
otherwise. Hence:
\begin{align*}
  t_{a_0}&=p_1(a_0)&t_{a_1}&=p_1(a_1)\cdot
                             p_2(a_1)&t_{a_2}&=p_2(a_2)\cdot p_3(a_2)\\
  t_{a_3}&=p_3(a_3)\cdot
           p_4(a_3)& t_{a_4}&=p_4(a_4).
\end{align*}

For instance, if the local probabilities $p_i$ were uniform,
$p_i= \left(\xymatrix@1@C=1em{ 1/2&1/2}\right)$, we would have:
\begin{align*}
  t_{a_0}&=\frac12&t_{a_1}&=\frac14&t_{a_2}&=\frac14&t_{a_3}&=\frac14&t_{a_4}&=\frac12.
\end{align*}
This choice, which may seem natural at first, would actually produce
a random heap with a bias, namely it would favor the appearance of
pieces $a_0$ and~$a_4$. Since $t_{a_0}=1/2$ and $t_{a_1}=1/4$ in this
case, it would produce on average twice more occurrences of $a_0$ than
occurrences of~$a_1$.

However, for uniform generation purposes, it is desirable to obtain
all coefficients $t_{a_i}$ \emph{equal}. Can we tune the initial
probability distributions $p_i$ in order to achieve this result?
Introduce the M\"obius polynomial $\mu(z)=1-5z+6z^2-z^3$, and consider
its root of smallest modulus, namely $q_0\approx0.308$. It turns out
that the only way to have all coefficients $t_{a_i}$ equal is to make
them precisely equal to~$q_0$. In turn, this imposes conditions on the
local probability distributions $p_1,p_2,p_3,p_4$ with only one
solution, yielding for this example:
\begin{align*}
\xymatrix@R=1em@C=7em{%
\strut  p_1(a_0)=q_0\approx 0.308\ar@{-}[r]^(.58){\cdot+\cdot=1}
&p_1(a_1)=1-q_0\approx0.692\strut\\
\strut p_2(a_1)=\frac{q_0}{1-q_0}\approx0.445
\ar@{-}[r]^(.58){\cdot+\cdot=1}%
\POS!R!U(.5)\ar@{-}[ur]!L!D(.5)^(.4){\cdot\times\cdot=q_0}
&p_2(a_2)=\frac{1-2q_0}{1-q_0}\approx0.555\strut
\\
\strut p_3(a_2)=\frac{1-2q_0}{1-q_0}\approx0.555
\ar@{-}[r]^(.58){\cdot+\cdot=1}%
\POS!R!U(.5)\ar@{-}[ur]!L!D(.5)^(.4){\cdot\times\cdot=q_0}
&p_3(a_3)=\frac{q_0}{1-q_0}\approx0.445\strut
\\
\strut p_4(a_3)=1-q_0\approx0.692
\ar@{-}[r]^(.58){\cdot+\cdot=1}%
\POS!U(.5)!R\ar@{-}[ur]!L!D(.5)^(.4){\cdot\times\cdot=q_0}
&\strut p_4(a_4)=q_0\approx0.308
}
\end{align*}

This array of positive numbers has the sought property that the
product of any two numbers along the depicted diagonals equals~$q_0$,
and the sum of every line equals~$1$; and $q_0$ is the only real
allowing this property for an array of this size. Running the \BSA\
with these values for the local probability distributions produces a
growing random heap which is uniform, in a precise meaning that will
be formalized later in the paper.

\subsection{Illustrating the Probabilistic Synchronization Algorithm
  (2)}
\label{sec:illustrating-bsa-2}

Consider the following network of alphabets with the four letters $a_0,a_1,a_2,a_3$:
\begin{align*}
  \Sigma_1&=\{a_0,a_1\},&\Sigma_2&=\{a_1,a_2\},&\Sigma_3&=\{a_2,a_3\},&\Sigma_4&=\{a_3,a_0\}.
\end{align*}

\begin{figure}
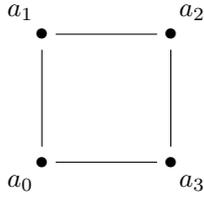

  \begin{align*}
\xy{%
{\xypolygon4"A"{~:{(2,2):(3,3):}{\hbox{\strut$\;\bullet\;$}}}%
%\xypolygon4"B"{~:{(2,2):(4,4):}~><{@{}}{a_{\xypolynode}}}
\xypolygon4"B"{~:{(2,2):(4,4):}~><{@{}}{}}}
,"B2",*{a_0},
,"B3",*{a_3},
,"B4",*{a_2},
,"B1",*{a_1}
}%
\endxy
\end{align*}
  \caption{Synchronization graph of the ring model with four generators}
  \label{fig:qwqpijqwq}
\end{figure}

Analogously to the previous example, the synchronization graph for
this network is depicted in Figure~\ref{fig:qwqpijqwq}. Let us try to
apply the same generation technique as in the previous example. For
each alphabet~$\Sigma_i$, we consider a random sequence $Y_i$ of
letters of this alphabet. Given the symmetry of the network of
alphabets, in order to obtain a uniform distribution we need to
consider this time uniform local distributions
$p_i=\left(\xymatrix@1@C=1em{1/2&1/2}\right)$ for $i=1,2,3,4$. This
yields for instance:
\begin{align*}
  Y_1&=a_0a_1a_1a_0\dots&
  Y_2&=a_1a_2a_2a_1\dots&
  Y_3&=a_3a_2a_3a_2\dots&
Y_4&=a_0a_3a_0a_3\dots
\end{align*}
The construction of the stacking vector $Y$ with the matching of
letters yields:
\begin{gather*}
  Y=
\left(
  \begin{gathered}
\xymatrix@R=.5em@C=1ex{
a_0\ar@{-}[]!U+<0ex,1ex> &a_1\ar@{-}[d]&\dots\\
\ &a_1&a_2\ar@{-}[d]&\dots\\
&a_3\ar@{-}[d]&a_2&\dots\\
a_0\ar@{-}[]!D-<0ex,1ex>&a_3&\dots
}
  \end{gathered}
\right)\qquad\text{the two occurrences of $a_0$ being connected.}
\end{gather*}

\begin{figure}
\centering  \input{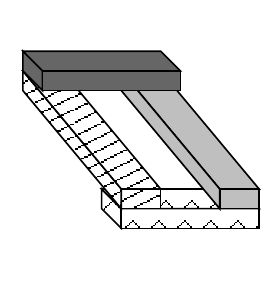_t}
  \caption{Output heap resulting from an instance of the \BSA\ for the ring
    model with four generators}
  \label{fig:poqqqhja}
\end{figure}

The corresponding heap is depicted on Figure~\ref{fig:poqqqhja}. It is
then impossible to extend this heap while still respecting the
sequences $Y_1,Y_2,Y_3,Y_4$; not because some letter is waiting for
its matching, but because the different letters already present
are blocking each other, creating a cycle that prevents to interpret
the rest of the vector as an extension of the heap of
Figure~\ref{fig:poqqqhja}. Such a cycle will appear with probability
$1$ at some point, whatever the sequences
$Y_1,Y_2,Y_3,Y_4$. Henceforth the \BSA\ only outputs finite heaps for
this model.

\subsection{Illustrating the Probabilistic Full Synchronization Algorithm}
\label{sec:illustrating-fsa}

In the previous example, we have seen that the \BSA\ outputs a finite
heap with probability~$1$, whereas we would like to obtain heaps
arbitrary large. The Probabilistic Full Synchronization Algorithm
(\FSA) is designed to solve this issue. We will now describe how to
tune it, for the ring model with four generators introduced above and
illustrated in Figure~\ref{fig:qwqpijqwq}, in order to produce
infinite heaps uniformly distributed.

First, the theory of Bernoulli measures for trace monoids (see below
in Section~\ref{sec:prel-prob}) tells us that each piece of the monoid
must be given the probabilistic weight~$q_1$, root of smallest
modulus of the M\"obius polynomial $\mu(z)=1-4z+2z^2$, hence
$q_1=1-\sqrt2/2$.

Second, we arbitrarily choose a piece, say~$a_3$, to be removed. We are
left with the following network of alphabets: $\Sigma'_1=\{a_0,a_1\}$,
$\Sigma'_2=\{a_1,a_2\}$, $\Sigma'_3=\{a_2\}$,
$\Sigma'_4=\{a_0\}$. From the heaps point of view, this is equivalent
to the network with only two alphabets $(\Sigma'_1,\Sigma'_2)$, and
sharing thus the only letter~$a_1$. We will now design a variant of
the \BSA\ algorithm, to be applied to the network
$(\Sigma'_1,\Sigma'_2)$, outputting \emph{finite} heaps with
probability~$1$, and attributing the probabilistic parameter $q_1$ to
each of the three pieces $a_0,a_1,a_2$. This is possible by
considering \emph{sub-probability} distributions $p'_1,p'_2$
associated to $\Sigma'_1,\Sigma'_2$ instead of \emph{probability}
distributions for the local generations. A possible choice, although
not the unique one, is the following:
\begin{align*}
\xymatrix@R=1em@C=7em{%
\strut  p'_1(a_0)=q_1=1-\sqrt2/2\ar@{-}[r]^(.58){\cdot+\cdot=1}
&p'_1(a_1)=1-q_1=\sqrt2/2\strut\\
\strut p'_2(a_1)=\frac{q_1}{1-q_1}=\sqrt2-1
\ar@{-}[r]^(.58){\cdot+\cdot\,<1}%
\POS!R!U(.5)\ar@{-}[ur]!L!D(.5)^(.4){\cdot\times\cdot=q_1}
&p'_2(a_2)=q_1=1-\sqrt2/2\strut
}
\end{align*}
Note that the first line sums up to~$1$ whereas the second line sums
up to less than~$1$, and this guaranties that the \BSA\ executed with
these values will output a finite heap and stop with probability~$1$.

Let $\xi'_1$ be the output of the \BSA\ on $(\Sigma'_1,\Sigma'_2)$
with the above parameters. It is a finite heap built with occurrences
of $a_0$, $a_1$ and $a_2$ only. Then, we add the piece $a_3$
on top of~$\xi'_1$, and we check whether the obtained heap is
\emph{pyramidal}, which means that no piece can be removed from it
without moving the last piece~$a_3$ (see
Figure~\ref{fig:fasoaisjdq}). If it is not pyramidal, we reject it,
and re-run the \BSA, producing another instance of~$\xi'_1$, until
$\xi'_1\cdot a_3$ is pyramidal. At the end of this process, we obtain
a pyramidal heap $\xi_1=\xi'_1\cdot a_3$ built with the four
generators $a_0,a_1,a_2,a_3$, and a unique occurrence of~$a_3$.

\begin{figure}
\centering
\begin{tabular}{cc|cc}
  \input{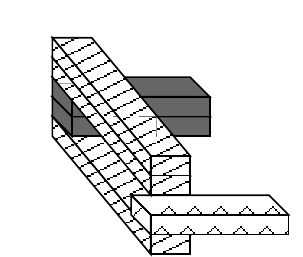_t}\quad&
                        \input{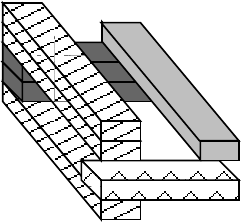_t}\quad\strut&
                   \quad \input{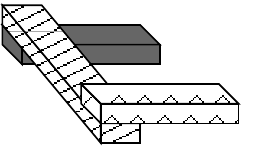_t}\quad&
\input{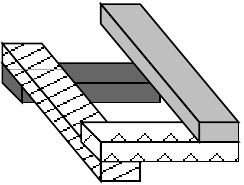_t}
\\
(i) $\xi'_1$&(ii) reject&(iii) $\xi'_1$&(iv) accept
\end{tabular}
\caption{First iteration of the \FSA. (i) The finite output $\xi'_1$
  of the \BSA\ on three generators $a_0,a_1,a_2$. (ii)~Adding the
  piece~$a_3$ on top of~$\xi'_1$, the obtained heap is not pyramidal
  since occurrences of $a_1$ can be removed without
  moving~$a_3$. (iii)~Another instance of the heap $\xi'_1$ on three
  generators $a_0,a_1,a_2$. (iv)~This time, the new heap $\xi_1$
  obtained by adding $a_3$ on top of $\xi'_1$ is pyramidal.}
  \label{fig:fasoaisjdq}
\end{figure}

The process of computing this heap $\xi_1$ is the first loop of the
\FSA. We initialize what will be the final output of the algorithm by
setting $y_1=\xi_1$.  We then reproduce the above random procedure,
yielding a pyramidal heap~$\xi_2$ and then we form the heap
$y_2=y_1\cdot\xi_2$, obtained by simply piling up $\xi_2$ on top
of~$y_1$. We iterate this procedure: outputting~$\xi_3$ pyramidal, we
put $y_3=y_2\cdot\xi_3$, and so on. Then the increasing heaps
$y_1,y_2,\ldots$ approximate an infinite heap on the four generators
$a_0,a_1,a_2,a_3$ which we claim to be uniformly distributed.

\section{Preliminaries on trace monoids}
\label{sec:prel-trace-mono}

An \emph{alphabet} is a finite set, usually denoted by~$\Sigma$, the
elements of which are called \emph{letters}. The free monoid generated
by $\Sigma$ is denoted by~$\Sigma^*$\,.

\subsection{Combinatorics of trace monoids}
\label{sec:trace-monoids}

\subsubsection{Definitions}
\label{sec:definitions}

An \emph{independence pair} on the alphabet $\Sigma$ is a binary,
irreflexive and symmetric relation on~$\Sigma$, denoted~$I$. The
\emph{trace monoid} $\M=\M(\Sigma,I)$ is the presented monoid
$\M=\langle\Sigma\;|\; ab=ba\text{ for all $(a,b)\in
  I$}\rangle$.
Hence, if $\R$ is the smallest congruence on $\Sigma^*$ that contains
all pairs $(ab,ba)$ for $(a,b)$ ranging over~$I$, then $\M$ is the
quotient monoid $\M=\Sigma^*/\R$. Elements of $\M$ are called
\emph{traces}~\cite{diekert90}. The unit element is denoted
by~$\unit$, and the concatenation in $\M$ is denoted by~``$\cdot$''.

The \emph{length} $|x|$ of a trace $x\in\M$ is the length of any word
in the equivalence class~$x$. The left divisibility relation on~$\M$
is denoted by~$\leq$, it is defined by $x\leq y\iff\exists z\in\M\quad
y=x\cdot z$.

\subsubsection{Cliques}
\label{sec:cliques}

A \emph{clique} of $\M$ is any element $x\in\M$ of the form
$x=a_1\cdot\ldots\cdot a_n$ with $a_i\in\Sigma$ and such that
$(a_i,a_j)\in I$ for all $i\neq j$. Cliques thus defined are in
bijection with the cliques, in the graph-theoretic sense, of the pair
$(\Sigma,I)$ seen as an undirected graph, that is to say, with the set of
complete sub-graphs of~$(\Sigma,I)$.

The set of cliques is denoted by~$\C$. The unit element is a clique,
called the empty clique. The set of non empty cliques is denoted
by~$\Cstar$. 

For instance, for
$\M=\langle a_0,a_1,a_2,a_3,a_4\;|\;a_ia_j=a_ja_i\ \text{for }
|i-j|>1\rangle$,
we have
$\C=\{\unit,\;a_0,\;a_1,\;a_2,\;a_3,\;a_4,\;a_0a_2,\;a_0a_3,\;a_0a_4,\;a_1a_3,\;a_1a_4,\;a_2a_4,\;
a_0a_2a_4\}$.
We call this trace monoid the \emph{path model with five generators},
it corresponds to the example introduced in
Section~\ref{sec:bsa-algorithm}. Note that the synchronization graph
depicted in Figure~\ref{fig:poqwpqq} is not $(\Sigma,I)$ but its
complementary.

\subsubsection{Growth series and M\"obius polynomial}
\label{sec:growth-series-mobius}

The \emph{growth series} of $\M$ is the formal series
\begin{gather*}
  Z_\M(t)=\sum_{x\in\M}t^{|x|}\,.
\end{gather*}
The \emph{M\"obius polynomial}~\cite{cartier69} is the polynomial $\mu_\M(t)$ defined
by
\begin{gather*}
  \mu_\M(t)=\sum_{c\in\C}(-1)^{|c|}t^{|c|}\,.
\end{gather*}

For the path model with five generators introduced above, we have $\mu_\M(t)=1-5t+6t^2-t^3$.

The M\"obius polynomial is the formal inverse of the growth series~\cite{cartier69,viennot86}:
\begin{gather*}
  Z_\M(t)=1/\mu_\M(t)\,.
\end{gather*}

The M\"obius polynomial has a unique root of smallest modulus,
say~$p_0$\,. This root is real and lies in $(0,1]$, and coincides with
the radius of convergence of the power
series~$Z_\M(t)$~\cite{krob03,goldwurm00:_clique}.

\subsubsection{Multivariate M\"obius polynomial}
\label{sec:multi-variate-mobius}

The M\"obius polynomial has a multivariate version,
$\mu_\M(t_1,\ldots,t_N)$ where $t_1,\ldots,t_N$ are formal variables
associated with the generators $a_1,\ldots,a_N$ of~$\M$. It is defined by:
\begin{gather*}
  \mu_\M(t_1,\ldots,t_N)=\sum_{c\in\C}(-1)^{|c|} t_{j_1}\cdot\ldots\cdot t_{j_c}\,,
\end{gather*}
where the variables $t_{j_1},\ldots,t_{j_c}$ correspond to the letters
such that $c=a_{j_1}\cdot\ldots\cdot a_{j_c}$\,.

\subsubsection{Irreducibility}
\label{sec:irreducibility}

The \emph{dependence relation} associated with $\M=\M(\Sigma,I)$ is
the binary, symmetric and reflexive relation
$D=(\Sigma\times\Sigma)\setminus I$. The trace monoid $\M$ is
\emph{irreducible} if the pair $(\Sigma,D)$ is connected as a
non oriented graph.

If $\M$ is irreducible, then the root $p_0$ of $\M$ is simple~\cite{krob03}.

\subsection{The heap of pieces interpretation of traces}
\label{sec:heap-piec-interpr}

\subsubsection{The picture}
\label{sec:picture}

Viennot's theory provides a visualization of traces as \emph{heaps of
  pieces}~\cite{viennot86}.  Picture each trace as the piling of
dominoes labeled by the letters of the alphabet, and such that
dominoes associated with two letters $a$ and $b$ fall to the ground
according to parallel lanes, and disjoint if and only if $(a,b)\in I$.
See an illustration on Figure~\ref{fig:congruenyt}, (i)--(iii), for
$\M=\langle a,b,c,d\;|\; ac=ca,\ ad=da,\ bd=db\rangle$\,.

\begin{figure}
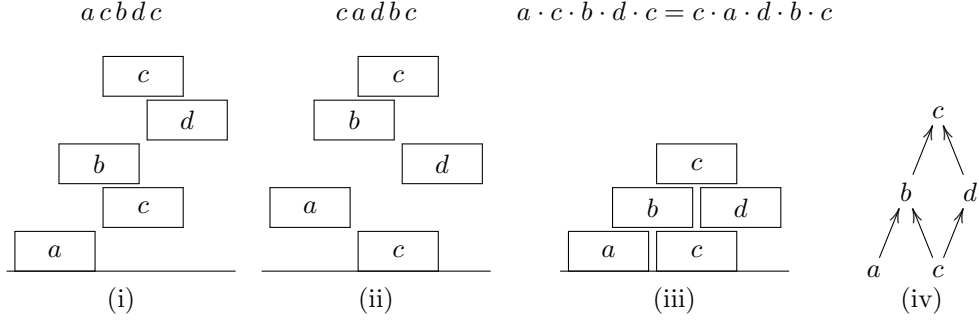

  \centering
\begin{gather*}
\begin{array}{cccc}
a\,c\,b\,d\,c&c\,a\,d\,b\,c&a\cdot c\cdot b\cdot d\cdot c=c\cdot
                             a\cdot d\cdot b\cdot c\\[1em]
\xy<.15em,0em>:
(-2,0);(55,0)**@{-},
0="G",
@={"G","G"+(20,0),"G"+(20,10),"G"+(0,10)},
s0="prev" @@{;"prev";**@{-}="prev"}, "G"+(10,5)*{a},
(22,11)="G",
@={"G","G"+(20,0),"G"+(20,10),"G"+(0,10)},
s0="prev" @@{;"prev";**@{-}="prev"}, "G"+(10,5)*{c},
(11,22)="G",
@={"G","G"+(20,0),"G"+(20,10),"G"+(0,10)},
s0="prev" @@{;"prev";**@{-}="prev"}, "G"+(10,5)*{b},
(33,33)="G",
@={"G","G"+(20,0),"G"+(20,10),"G"+(0,10)},
s0="prev" @@{;"prev";**@{-}="prev"}, "G"+(10,5)*{d},
(22,44)="G",
@={"G","G"+(20,0),"G"+(20,10),"G"+(0,10)},
s0="prev" @@{;"prev";**@{-}="prev"}, "G"+(10,5)*{c},
\endxy
&
\xy<.15em,0em>:
(-2,0);(55,0)**@{-},
(0,11)="G",
@={"G","G"+(20,0),"G"+(20,10),"G"+(0,10)},
s0="prev" @@{;"prev";**@{-}="prev"}, "G"+(10,5)*{a},
(22,0)="G",
@={"G","G"+(20,0),"G"+(20,10),"G"+(0,10)},
s0="prev" @@{;"prev";**@{-}="prev"}, "G"+(10,5)*{c},
(11,33)="G",
@={"G","G"+(20,0),"G"+(20,10),"G"+(0,10)},
s0="prev" @@{;"prev";**@{-}="prev"}, "G"+(10,5)*{b},
(33,22)="G",
@={"G","G"+(20,0),"G"+(20,10),"G"+(0,10)},
s0="prev" @@{;"prev";**@{-}="prev"}, "G"+(10,5)*{d},
(22,44)="G",
@={"G","G"+(20,0),"G"+(20,10),"G"+(0,10)},
s0="prev" @@{;"prev";**@{-}="prev"}, "G"+(10,5)*{c},
\endxy
&
\xy<.15em,0em>:
(-2,0);(55,0)**@{-},
0="G",
@={"G","G"+(20,0),"G"+(20,10),"G"+(0,10)},
s0="prev" @@{;"prev";**@{-}="prev"}, "G"+(10,5)*{a},
(22,0)="G",
@={"G","G"+(20,0),"G"+(20,10),"G"+(0,10)},
s0="prev" @@{;"prev";**@{-}="prev"}, "G"+(10,5)*{c},
(11,11)="G",
@={"G","G"+(20,0),"G"+(20,10),"G"+(0,10)},
s0="prev" @@{;"prev";**@{-}="prev"}, "G"+(10,5)*{b},
(33,11)="G",
@={"G","G"+(20,0),"G"+(20,10),"G"+(0,10)},
s0="prev" @@{;"prev";**@{-}="prev"}, "G"+(10,5)*{d},
(22,22)="G",
@={"G","G"+(20,0),"G"+(20,10),"G"+(0,10)},
s0="prev" @@{;"prev";**@{-}="prev"}, "G"+(10,5)*{c},
\endxy
&
\xy<.15em,0em>:
0*+{a};(8,20)*+{b}**@{-}?>*\dir{>},
(16,0)*+{c};(24,20)*+{d}**@{-}?>*\dir{>},(8,20)*+{\phantom{b}}**@{-}?>*\dir{>},
(8,20)*+{\phantom{b}};(16,40)*+{c}**@{-}?>*\dir{>},
(24,20)*+{\phantom{d}};(16,40)*+{\phantom{c}}**@{-}?>*\dir{>},
\endxy
% \xy<.15em,0em>:
% 0*+{a};(0,20)*+{b}**@{-}?>*\dir{>},
% (20,0)*+{c};(20,20)*+{d}**@{-}?>*\dir{>},(0,20)*+{\phantom{b}}**@{-}?>*\dir{>},
% (0,20)*+{\phantom{b}};(20,40)*+{c}**@{-}?>*\dir{>},
% (20,20)*+{\phantom{d}};(20,40)*+{\phantom{c}}**@{-}?>*\dir{>},
% \endxy
\\
\text{(i)}&\text{(ii)}&\text{(iii)}&\text{(iv)}
\end{array}
\end{gather*}
\caption{{\normalfont(i)--(ii)}: representation of the two
    congruent words $acbdc$ and $cadbc$ in $\langle a,b,c,d\; |\;
    ac=ca,\ ad=da,\ bd=db\rangle$.\quad  {\normalfont(iii)}: the resulting
    trace $a\cdot c\cdot b\cdot d\cdot c$, represented as a heap of
    pieces.\quad {\normalfont(iv)}: the associated labeled ordered
    poset. For instance $a<d$ does not hold since $acbdc\to cabdc\to
    cadbc\to cdabc$.}
\label{fig:congruenyt}
\end{figure}

\subsubsection{Traces as labeled ordered sets}
\label{sec:traces-as-labelled}

The formalization of this picture is done by interpreting each trace
$x$ of a trace monoid as an equivalence class, up to isomorphism, of a
$\Sigma$-labeled partial order, which we describe now.

Let $x$ be a trace, written as a product of letters
$x=a_1\cdot\ldots\cdot a_n$\,. Let $\xbar=\{1,\ldots,n\}$, and define
a labeling $\phi:\xbar\to\Sigma$ by $\phi(i)=a_i$\,. Equip $\xbar$
with the natural ordering on integers~$\leq$. Then remove the pair
$(i,j)$ from the strict ordering relation $<$ on $\xbar$ whenever, by
a finite number of adjacent commutations of distinct letters, the
sequence $(a_1,\ldots,a_n)$ can be transformed into a sequence
$(a_{\sigma(1)},\ldots,a_{\sigma(n)})$ such that
$\sigma(i)>\sigma(j)$. 

The set of remaining ordered pairs $(i,j)$ is a partial ordering
on~$\xbar$, which only depends on~$x$, and which defines the heap
associated with~$x$. See Figure~\ref{fig:congruenyt},~(iv).

\subsection{Completing trace monoids with infinite traces}
\label{sec:infinite-traces-1}

\subsubsection{Infinite traces}
\label{sec:infinite-traces}

Let $\seq xn$ and $\seq yn$ be two nondecreasing sequences of traces: $x_n\leq
x_{n+1}$ and $y_n\leq y_{n+1}$ for all integers $n\geq1$. We identify
$\seq xn$ and $\seq yn$ whenever they satisfy:
\begin{align*}
(  \forall n\geq1\quad\exists m\geq1\quad x_n\leq y_m)\wedge
(\forall n\geq1\quad\exists m\geq1\quad y_n\leq x_m)\,.
\end{align*}

This identification is an equivalence relation between nondecreasing
sequences. The quotient set is denoted by~$\Mbar$. It is naturally
equipped with a partial ordering, such that the mapping $\M\to\Mbar$
associating the (equivalence class of the) constant sequence $x_n=x$
to any trace $x\in\M$, is an embedding of partial orders.

We identify thus $\M$ with its image in $\Mbar$ through the embedding
$\M\to\Mbar$. The set $\BM$ of \emph{infinite traces} is defined by:
\begin{gather*}
  \BM=\Mbar\setminus\M\,.
\end{gather*}

The set $\BM$ is called the \emph{boundary} of~$\M$~\cite{abbes_mair15}. Visually,
elements of $\BM$ correspond to \emph{infinite countable heaps}, that
is to say, limiting heaps obtained by piling up countable many
pieces.

\subsubsection{Properties of $(\Mbar,\leq)$}
\label{sec:properties-mbar-leq}

The partially ordered set $(\Mbar,\leq)$ is complete with respect to
least upper bound of nondecreasing sequences. And any element
$\xi\in\Mbar$ is the least upper bound of a nondecreasing sequence of
elements of~$\M$~\cite{abbes_mair15}.

% Finite traces correspond exactly to finite elements in the domain
% theoretic sense~\cite{gierz03}. And the finite traces form a basis of
% the ordering:
% \begin{gather}
%   \label{eq:35}
%   \forall\xi,\zeta\in\Mbar\quad \xi\leq\zeta\iff (\forall u\in\M\quad
%   u\leq\xi\implies u\leq\zeta)\,.
% \end{gather}

\subsection{Synchronization of sequences}
\label{sec:prod-free-autom}

In this section we introduce an alternative way of looking at trace
monoids, by means of vectors of words. This emphasizes the distributed
point of view on trace monoids.

\subsubsection{Network of alphabets and synchronizing trace
  monoid.}
\label{sec:synchr-alph-synchr}

A family $(\Sigma_1,\ldots,\Sigma_N)$ of $N$ alphabets, not
necessarily disjoint, is called a \emph{network of alphabets}. Let
$\Sigma=\Sigma_1\cup\ldots\cup\Sigma_N$\,. Define $R=\{1,\ldots,N\}$
as the set of \emph{resources}, and consider the product monoid
\begin{gather}
\label{eq:36}
  \H=(\Sigma_1)^*\times\ldots\times(\Sigma_N)^*\,.
\end{gather}

Identify each $a\in\Sigma$ with the element $H(a)=(a_i)_{i\in R}$ of $\H$
defined by:
\begin{gather*}
  \forall i\in R\quad a_i=
  \begin{cases}
    \epsilon\text{ (the empty word)},&\text{if $a\notin\Sigma_i$\,,}\\
    a,&\text{if $a\in\Sigma_i$\,.}
  \end{cases}
\end{gather*}

For each letter $a\in\Sigma$, the \emph{set of resources} associated
with $a$ is the subset $R(a)$ defined by
\begin{gather*}
  R(a)=\{i\in R\tq a\in\Sigma_i\}\,.
\end{gather*}

Let $\G$ be the sub-monoid of $\H$ generated by the collection $\{H(a)\tq
a\in\Sigma\}$. Then $\G$ is isomorphic with the trace monoid
$\M=\M(\Sigma,I)$, where the independence relation $I$ on $\Sigma$ is
defined by
\begin{gather}
  \label{eq:37}
  (a,b)\in I\iff R(a)\cap R(b)=\emptyset.
\end{gather}

The monoid $\M$ is called the \emph{synchronization trace monoid}, or
simply the \emph{synchronization monoid}, of the network
$(\Sigma_1,\ldots,\Sigma_N)$. Examples have been given in
Section~\ref{sec:illustrating-bsa-fsa}. It can be proved that every
trace monoid is isomorphic to a synchronization trace
monoid~\cite{cori85}.

\subsubsection{Synchronization of sequences.}
\label{sec:synchr-trace-assoc}

Let $(\Sigma_1,\dots,\Sigma_N)$ be a network of alphabets. Each monoid
$(\Sigma_i)^*$ being equipped with the prefix ordering, we equip the
product monoid $\H$ defined in~(\ref{eq:36}) with the product order,
and the sub-monoid $\G$ with the induced order.

Assume given a vector of sequences $Y=(Y_1,\dots,Y_N)\in\H$. The
\emph{synchronization trace} of $Y$ is defined as the following least
upper bound in~$\G$:
\begin{gather}
\label{eq:39}
  X=\bigvee\{Z\in\G\tq Z\leq Y\}.
\end{gather}
This least upper bound does indeed exist in~$\G$: it is obtained by
taking least upper bounds component-wise, which are all well defined.

Identifying $\G$ with the trace monoid $\M$ defined in~(\ref{eq:37})
and above, the element $X$ is thus the largest \emph{trace} among
those tuples below~$Y$.

For example, consider the ring model with four generators introduced
in Section~\ref{sec:illustrating-bsa-2}, with $\Sigma_1=\{a_0,a_1\}$,
$\Sigma_2=\{a_1,a_2\}$, $\Sigma_3=\{a_2,a_3\}$ and
$\Sigma_4=\{a_3,a_0\}$, and the vector:
\begin{gather*}
  Y=
  \begin{pmatrix}
a_0a_1a_1a_0\\
a_1a_2a_2a_1\\
a_3a_2a_3a_2\\
a_0a_3a_0a_3    
  \end{pmatrix}
\end{gather*}

One easily convinces oneself that the synchronization of $Y$ is the following:
\begin{gather*}
  X=
  \begin{pmatrix}
a_0a_1\\
a_1a_2\\
a_3a_2\\
a_0a_3    
  \end{pmatrix}\quad\text{corresponding to $a_0\cdot a_1\cdot a_3\cdot
    a_2$ in the trace monoid.}
\end{gather*}
We shall see below an algorithmic way to determine the synchronization
of any given vector $Y\in\H$.

\subsubsection{On-line computation features.}
\label{sec:algorithmic-issues}

The algorithmic computation of the synchronization trace of a given
vector of sequences might belong to the folklore of concurrency
theory, see for instance~\cite{cori85}. In what follows, an special
emphasis is given to the distributed and on-line nature of this
computation.

The remaining of this section is devoted to provide an algorithm
taking a vector of sequences $Y\in\H$ as input, and outputting the
synchronization trace of~$Y$. Furthermore, the input trace $Y$ might
not be entirely known at the time of computation. Instead, we assume
that only a sub-trace $Y'\leq Y$ feeds the algorithm, together with
the knowledge whether some further input is about to come or not.
Therefore, we need our algorithm to produce the \emph{two} following
outputs:
\begin{enumerate}
\item The best approximation $X'$ of the synchronization trace~$X$,
  given only the input~$Y'$, which is merely the synchronization trace
  of~$Y'$.
\item And one of the following tags:
  \begin{enumerate}
  \item[\DL] for ``Deadlock'' if some more input is about to come, and
    yet the algorithm has already reached the synchronization trace
    of~$Y$, \emph{i.e.}, if $X'=X$ whatever the continuation of~$Y'$.
  \item[\WFI] for ``Waiting for Input'' if some more input is about to
    come, and the trace $X'$ might be extended.
  \item[\EOF] for ``End of file'' if there is no more input to come.
  \end{enumerate}
\end{enumerate}

\subsubsection{Detecting deadlocks}
\label{sec:detecting-deadlocks}

Consider, as in
Sections~\ref{sec:synchr-alph-synchr}---\ref{sec:synchr-trace-assoc},
a network $(\Sigma_1,\dots,\Sigma_N)$ of alphabets. Given a finite
word $Y_i\in(\Sigma_i)^*$, we denote by $\Ybar_i$ the word
$\Ybar_i=Y_i\cdot\dag_i$, where $\dag_i$ is an additional symbol which
can be either $\EOF$ of~$\WFI$, to be interpreted as ``no more input
will ever come'' if $\dag_i=\EOF$, or as ``some more input might
arrive'' if $\dag_i=\WFI$. Finally, let
$\Ybar=(\Ybar_1,\dots,\Ybar_N)$.

We first introduce a basic routine described in pseudo-code below
(Algorithm~\ref{algo1}). The input of the routine is the
vector~$\Ybar$. Its output is as follows:
\begin{enumerate}
\item\label{item:13} If the synchronization trace $X$ of $Y$ is non empty, then the
  routine outputs a minimal piece of~$X$.
\item\label{item:14} If the synchronization trace $X$ of $Y$ is empty, then the
  routine outputs a flag $\ddag$ explaining why $X$ is empty:
\begin{enumerate}
\item\label{item:15} If $\Ybar=(\EOF,\dots,\EOF)$, then $\ddag=\EOF$. Note that this
  necessarily entails that $Y$ is empty.
\item\label{item:16} In case at least one input component of $\Ybar$
  does not carry the symbol $\dag_i=\EOF$, then:
  \begin{enumerate}
  \item\label{item:17} If adding some letters to some of the components of $Y$
    carrying $\dag_i=\WFI$ could yield a non empty synchronization
    trace, then $\ddag=\WFI$.
  \item\label{item:18} If no letter can be added to the components of $Y$ carrying
    $\dag_i=\WFI$ to yield a non empty synchronization trace, then
    $\ddag=\DL$.
  \end{enumerate}
\end{enumerate}
\end{enumerate}

To detect if a piece $u$ is minimal in a heap, when the heap is given
by its representation in~$\G$, it is necessary and sufficient to check
that it is minimal in all the coordinates where it occurs, that is to
say, in all the components belonging to the set of
resources~$\R(u)$. This justifies that Algorithm~\ref{algo1} is
sound. Observe that the algorithm is non deterministic, as there may
be several minimal pieces.

\begin{algorithm}
\caption{Determines a minimal piece of the synchronizing trace $X$ of $Y=(Y_1,\dots,Y_N)$}
\label{algo1}
\begin{algorithmic}[1]
\Require{$\Ybar_1,\dots,\Ybar_N$}
\Comment{Recall that $\Ybar_i=Y_i\cdot\dag_i$}
\ForAll{$i\in\{1,\dots,N\}$}
\State
$u_i\gets{\text{first letter of $\Ybar_i$}}$\Comment{$u_i$ is either a
real letter or $\dag_i$}
\EndFor
\State $H\gets\{i\tq u_i\neq\EOF\}$\Comment{The set of indices of
  interest}
\State $\Hbar\gets\{i\tq u_i=\EOF\}$\Comment{The complementary set of $H$}
\If{$H=\emptyset$}\Comment{Case~\ref{item:15} of the above discussion}
\State\Return{\EOF}
\EndIf
\State $K\gets\{i\in H\tq u_i\neq\WFI\}$\Comment{The set of indices
  with real letters} 
\If{$K=\emptyset$}\Comment{One instance of Case~\ref{item:17}}
\State\Return{\WFI}
\EndIf
\ForAll{$i\in K$}
\If{$\R(u_i)\cap\Hbar\neq\emptyset$}\Comment{No chance to obtain later
  the\\
  \hfill expected synchronization for $u_i$} 
\State $M_i\gets\DL$
\Else
\If{$u_j=u_i$ for all $j\in\R(u_i)$}\Comment{Case where $u_i$ is minimal\\
  \hfill in all the expected components of $Y$}
\State $M_i\gets u_i$
\Else 
\If{$u_j\in\{u_i,\WFI\}$ for all
  $j\in\R(u_i)$}\Comment{Synchronization\\
  \hfill for $u_i$ is possible in the future}
\State $M_i\gets\WFI$
\Else
\State
$M_i\gets\DL$\Comment{The piece $u_i$ is not minimal in $X$}
\EndIf
\EndIf
\EndIf
\EndFor
\If{$M_i=\DL$ for all $i\in K$}\Comment{Case~\ref{item:18}:  the synchronization\\
  \hfill trace $X$ is empty since it has no minimal piece} 
\State\Return{\DL}
\Else
\If{$M_i=\WFI$ for all $i\in K$}\Comment{Case~\ref{item:17} (again)}
\State
\Return $\WFI$
\Else\Comment{Case~\ref{item:13}}
\State\Return one $M_i$ with
$M_i\notin\{\DL,\WFI\}$
\Comment{Any
  $M_i\notin\{\DL,\WFI\}$ is minimal\\
  \hfill in the synchronization trace $X$}
\EndIf
\EndIf
\end{algorithmic}
\end{algorithm}

Algorithm~\ref{algo1} executes in constant time, the constant growing
linearly with the number of edges in the dependence graph $(\Sigma,D)$
of the associated trace monoid. If communicating processes are
devised, one per each alphabet, and able to write to a common
register, they will be able to perform Algorithm~\ref{algo1} in a
distributed way. We shall not work out the details of the distributed
implementation, since it would be both outside the scope of the paper
and out of the range of expertise of the author. People from the
distributed algorithms community will probably find it routine.

\subsubsection{Computation of the synchronization trace.}
\label{sec:comp-synchr-trac}

With Algorithm~\ref{algo1} at hand, we can now give an algorithm to
compute the synchronization trace of a given vector of sequences
$Y\in\H$. This is the topic of the Synchronization Algorithm
(Algorithm~\ref{algo2}), of which we give the pseudo-code below, and
which takes as input a vector $\Ybar=(\Ybar_1,\dots,\Ybar_N)$ of the
same kind as Algorithm~\ref{algo1}.

The Synchronization Algorithm iteratively executes
Algorithm~\ref{algo1}, and collects the minimal pieces thus obtained
to form it own output. If $Y=(Y_1,\dots,Y_N)\in\H$ and if $M\leq Y$,
then we denote by $Z=M\backslash Y$ the unique vector $Z\in\H$ such that
$M\cdot Z =Y$.

As in the requirements stated above in
Section~\ref{sec:algorithmic-issues}, the Synchronization Algorithm
outputs both the synchronization trace of its input and a tag
advertising if the computation is over, either because a deadlock has
been reached or because the input feed is over, or if it is still
waiting for some input that might cause the synchronization trace
computed so far to be extended. This feature will be crucial when
considering its execution on sequences of letters which are possibly
infinite, see below in Section~\ref{sec:feed-synchr-algor}.

The Synchronization Algorithm can be executed in a distributed way by
$N$ communicating processes, one for each coordinate, and it runs in
time linear with the size of the synchronization trace~$X$.

\begin{algorithm}
\caption{Synchronization Algorithm: computes the synchronization trace
of $Y$}
\label{algo2}
\begin{algorithmic}[1]
\Require $\Ybar=(\Ybar_1,\dots,\Ybar_N)$\Comment{$\Ybar_i=Y_i\cdot\dag_i$}
\State $X\gets\unit$\Comment{Initialize the variable $X$ with the empty
  heap}
\State \textbf{call} Algorithm~\ref{algo1} with input $\Ybar$
\State $M\gets\text{output of Algorithm~\ref{algo1}}$
\While{$M\notin\{\DL,\WFI,\EOF\}$}
\State $X\gets X\cdot M$
\State $\Ybar\gets M\backslash \Ybar$
\State \textbf{call} Algorithm~\ref{algo1} with input $\Ybar$
\State $M\gets\text{output of Algorithm~\ref{algo1}}$
\EndWhile
\State\Return $(X,M)$
\end{algorithmic}
\end{algorithm}

\subsubsection{Feeding the Synchronization Algorithm with a possibly
  infinite input}
\label{sec:feed-synchr-algor}

We have defined the synchronization trace of a vector $Y\in\H$ in
Section~\ref{sec:synchr-trace-assoc}. The same definition applies if
one or several components of $Y$ are infinite, by putting:
\begin{gather}
  \label{eq:38}
  X=\bigvee\{Z\in\G\tq Z\leq Y\},
\end{gather}
and this least upper bound is always well defined in $\Mbar$ since
least upper bounds of nondecreasing sequences always exist in
$(\Mbar,\leq)$ as recalled in
Section~\ref{sec:properties-mbar-leq}. We call the element $X$ of
$\Mbar$ thus defined the \emph{generalized synchronization trace}
of~$Y$.

Adapting the algorithmic point of view to an infinite input introduces
obviously some issues, which can be addressed by observing that the
output of the Synchronization Algorithm is nondecreasing with its
input. We will idealize the situation where the Synchronization
Algorithm is repeatedly fed with a nondecreasing input by
saying that it is fed with a vector with possibly infinite
components. The output, possibly infinite, is defined as the least
upper bound in $\Mbar$ of the nondecreasing sequence of finite output
heaps. It coincides of course with the generalized synchronization
trace $X$ defined in~(\ref{eq:38}).

To describe more precisely this algorithmic procedure, consider a
vector $Y$ of sequences, some of which may be infinite. We assume that
$Y$ is effectively given through some \emph{sampling}, \emph{i.e.}, as
an \emph{infinite} concatenation of \emph{finite} vectors $Z_k\in\H$:
\begin{gather*}
  Y=Z_1\cdot Z_2\cdot Z_3\cdots
\end{gather*}
If a component of $Y$ is finite, the corresponding component of the
vector $Z_k$ will be the empty word for $k$ large enough. 

Furthermore, we assume that a primitive is able to produce, for each
integer~$k\geq1$, a vector $\Zbar{}_k=Z_k\cdot\dag_k$, where $\dag_k$
is itself a vector $\dag_k=(\dag_{k,i})_i$ with
$\dag_{k,i}\in\{\EOF,\WFI\}$, in such a way that an occurrence of $\EOF$
marks the finiteness of the corresponding component of~$Y$. Formally,
we assume that the following two properties hold, for all components:
\begin{gather*}
  (\text{$Y_i$ is a finite sequence})\iff
(\exists k\geq1\quad
  \dag_{k,i}=\EOF)\\
\text{and}\quad
\forall k\geq 1\quad \dag_{k,i}=\EOF\implies(\forall k'\geq k\quad\dag_{k',i}=\EOF)
\end{gather*}

The Generalized Synchronization Algorithm, the pseudo-code of which is
given below in Algorithm~\ref{algo3}, is then recursively fed with
$\Zbar_1,\Zbar_2,\dots$, writing out to its output register~$X$. The
{output} of Algorithm~\ref{algo3} is the least upper bound,
in~$\Mbar$, of the sequence of heaps that recursively appear in the
register~$X$.

The Generalized Synchronization Algorithm exits its \textbf{while}
loop if and only if the generalized synchronization trace is
finite. In all cases, its output (as defined above) is the generalized
synchronization trace of the vector~$Y$, regardless of the
decomposition $(\Zbar_k)_{k\geq1}$ that feeds its input.

\begin{algorithm}
\caption{Generalized Synchronization Algorithm}
\label{algo3}
\begin{algorithmic}[1]
\Require $(\Zbar_k)_{k\geq1}$\Comment{The $Z_k$s recursively feed the
  input}
\State $X\gets\unit$\Comment{Initialize the variable $X$ with the
  empty heap}
\State $k\gets1$
\State \textbf{call} Algorithm~\ref{algo2} with input $\Zbar_1$
\State $(U,\dag)\gets\text{output of
  Algorithm~\ref{algo2}}$\Comment{$\dag\in\{\EOF,\DL,\WFI\}$}
\While{$\dag=\WFI$}
\State $X\gets X\cdot U$
\State $k\gets k+1$
\State \textbf{call} Algorithm~\ref{algo2} with input $\Zbar_k$
\State $(U,\dag)\gets\text{output of
  Algorithm~\ref{algo2}}$
\EndWhile
\end{algorithmic}
\end{algorithm}

\section{Preliminaries on probabilistic trace monoids}
\label{sec:prel-prob}

We denote by $\bbR_+^*$ the set of positive reals.

\subsection{Bernoulli and finite Bernoulli sequences}
\label{sec:bern-finite-bern}

In this section we collect classical material found in many
textbooks~\cite{billingsley95}. We pay a special attention to
presenting this material so as to prepare for its generalization to
trace monoids.

\subsubsection{Bernoulli sequences}
\label{sec:bernoulli-sequences}

Classically, a \emph{Bernoulli sequence} on an alphabet $\Sigma$ is an
infinite sequence $\seq Xn$ of independent and identically distributed
(\iid) random variables, where each $X_i$ takes its values
in~$\Sigma$. In order to eliminate degenerated cases, we assume that
the common probability distribution, say~$\rho$, over~$\Sigma$ of
all~$X_i$ is \emph{positive} on~$\Sigma$\,; hence $\rho$ is bound to
satisfy:
\begin{align*}
  \forall a\in\Sigma\quad \rho_a&>0\,,&\sum_{a\in\Sigma}\rho_a&=1\,.
\end{align*}

\subsubsection{Bernoulli measures}
\label{sec:stand-prob-space}

The canonical probability space associated with the Bernoulli sequence
$\seq Xn$ is the triple $(\B\Sigma^*,\FFF,\pr)$ defined as follows: the
set $\B\Sigma^*$ is the set of infinite sequences with values
in~$\Sigma$. The \slgb\ $\FFF$ is the \slgb\ generated by the
countable collection of \emph{elementary cylinders}~$\up x$, for $x$
ranging over the free monoid~$\Sigma^*$, and defined by
\begin{gather*}
  \up x=\{\omega\in\B\Sigma^*\tq x\leq\omega\}\,,
\end{gather*}
where $x\leq\omega$ means that the infinite sequence $\omega$ starts
with the finite word~$x$. Finally, $\pr$~is the unique probability
measure on $(\B\Sigma^*,\FFF)$ which takes the following values on
elementary cylinders:
\begin{gather*}
  \forall x\in\Sigma^*\quad\pr(\up x)=f(x)\,;
\end{gather*}
here $f:\Sigma^*\to\bbR_+^*$ is the unique positive function
satisfying:
\begin{align}
\label{eq:1}
  \forall a\in\Sigma\quad f(a)&=\rho_a\,,&\forall x,y\in\Sigma^*\quad f(xy)&=f(x)f(y)\,,
\end{align}
where $xy$ denotes the concatenation of words $x$ and~$y$. Bernoulli
sequences correspond exactly to probability measures $\pr$ on
$(\B\Sigma^*,\FFF)$ with the following property:
\begin{gather*}
\forall x\in\Sigma^*\quad\pr(\up x)>0\,,\\
  \forall x,y\in\Sigma^*\quad\pr\bigl(\up(xy)\bigr)=\pr(\up x)\pr(\up y)\,.
\end{gather*}
Such probability measures on $(\B\Sigma^*,\FFF)$ are called
\emph{Bernoulli measures}.

\subsubsection{Uniform Bernoulli measure}
\label{sec:unif-bern-meas}

Among Bernoulli measures associated with the alphabet~$\Sigma$, one
and only one is \emph{uniform}, in the following sense:
\begin{gather*}
  \forall x,y\in\Sigma^*\quad|x|=|y|\implies\pr(\up x)=\pr(\up y)\,,
\end{gather*}
where $|x|$ denotes the length of the word~$x$.  It is characterized
by $\pr(\up x)=p_0^{|x|}$\,, where $p_0=1/|\Sigma|$\,.

\subsubsection{Finite Bernoulli sequences}
\label{sec:finite-bern-sequ}

Let $\rho=(\rho_a)_{a\in\Sigma}$ be a sub-probability distribution
over~$\Sigma$, hence bound to satisfy:
\begin{align*}
  \forall a\in\Sigma\quad\rho_a&>0\,,&
\sum_{a\in\Sigma}\rho_a&<1\,.
\end{align*}

To each sub-probability distribution~$\rho$ we associate the function
$f:\Sigma^*\to\bbR$ defined as in~(\ref{eq:1}), and also the following quantities:
\begin{align*}
  \ew&=1-\sum_{a\in\Sigma}\rho_a\,,&Z&=\sum_{x\in\Sigma^*}f(x)=\frac1\ew<\infty\,.
\end{align*}
Finally we define the \emph{sub-Bernoulli measure} $\nu_\rho$ as the
probability distribution over the countable set~$\Sigma^*$, equipped
with the discrete \slgb, and defined by:
\begin{gather*}
  \forall x\in\Sigma^*\quad\nu_\rho(\{x\})=\frac1Zf(x)\,.
\end{gather*}

A \emph{finite Bernoulli sequence} is the random sequence of letters
that compose a word $x\in\Sigma^*$, drawn at random according to the
probability measure~$\nu_\rho$ on~$\Sigma^*$\,.

An \emph{effective} way to produce a finite Bernoulli sequence
according to a sub-probability distribution $p=(p_i)_i$ is the
following. Consider a stopping symbol~$\EOF$, and extend $p$ to a
probability distribution by setting $p(\EOF)=1-\sum_ip_i$. Then output a
Bernoulli sequence according to~$p$, until the symbol $\EOF$ first
occurs. The letters before the first occurrence of $\EOF$ form the sought
finite sequence.

\subsubsection{Full elementary cylinders}
\label{sec:full-elem-cylind}

It is convenient to consider the following completion of~$\Sigma^*$:
\begin{gather*}
\overline{\Sigma^*}=\Sigma^*\cup\B\Sigma^*\,.  
\end{gather*}

Hence both Bernoulli measures and sub-Bernoulli measures are now
defined on the same space~$\overline{\Sigma^*}$.  For each word
$x\in\Sigma^*$, we define the \emph{full elementary cylinder} $\Up x$
as follows:
\begin{gather*}
  \Up x=\{\xi\in\overline{\Sigma^*}\tq x\leq\xi\}\,.
\end{gather*}
Here, $x\leq\xi$ has the same meaning as above if $\xi$ is an infinite
sequence; and it means that $x$ is a prefix of $\xi$ if $\xi$ is a
finite word. We gather the description of both Bernoulli and
sub-Bernoulli measures in the following result.

\begin{theorem}
  \label{thr:1}
  Let $\pr$ be a probability measure on
  $\overline{\Sigma^*}=\Sigma^*\cup\B\Sigma^*$. Assume that the
  function $f:\Sigma^*\to\bbR$ defined by $f(x)=\pr(\Up x)$ is
  positive multiplicative, that is to say, satisfies:
\begin{gather*}
\forall x\in\Sigma^*\quad f(x)>0\\
  \forall x,y\in\Sigma^*\quad f(xy)=f(x)f(y)\,.
\end{gather*}
Define $\rho=(\rho_a)_{a\in\Sigma}$ and $\ew$ by:
\begin{align*}
  \forall a\in\Sigma\quad\rho_a&=f(a)\,,&\ew&=1-\sum_{a\in\Sigma}\rho_a\,.
\end{align*}
Then one and only one of the two following possibilities occurs:
\begin{enumerate}
\item\label{item:5} $\ew=0$. In this case, $\pr$~is concentrated on~$\B\Sigma^*$,
  and characterized by:
  \begin{gather*}
    \forall x\in\Sigma^*\quad\pr(\up x)=\pr(\Up x)=f(x)\,.
  \end{gather*}
The series $\sum_{x\in\Sigma^*}f(x)$ is divergent, and\/ $\pr$ is a
Bernoulli measure.
\item\label{item:6} $\ew>0$. In this case, $\pr$~is concentrated on~$\Sigma^*$.  The
  series $Z=\sum_{x\in\Sigma^*}f(x)$ is convergent, and satisfies:
\begin{align*}
  \ew&=\frac1 Z\,,&
\forall x\in\Sigma^*\quad\pr(\{x\})&=\ew f(x)\,.
\end{align*}
The measure $\pr$ is a sub-Bernoulli measure.
\end{enumerate}
\end{theorem}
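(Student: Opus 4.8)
The plan is to show that the function $f$ is entirely determined by the data $\rho=(\rho_a)_{a\in\Sigma}$, that $\varepsilon=0$ and $\varepsilon>0$ are the only two cases, and that in each case $\pr$ is forced to be the measure described. First, since $f$ is positive multiplicative on the free monoid $\Sigma^*$, one has $f(x)=\prod_{i}\rho_{x_i}$ for any word $x=x_1\cdots x_n$; in particular $f$ coincides with the function attached to the (sub-)probability datum $\rho$ as in~\eqref{eq:1}. Next I would analyse the series $Z=\sum_{x\in\Sigma^*}f(x)$. Grouping words by length gives
\begin{gather*}
  Z=\sum_{n\geq0}\Bigl(\sum_{a\in\Sigma}\rho_a\Bigr)^{\!n}=\sum_{n\geq0}(1-\varepsilon)^n,
\end{gather*}
so $Z$ is finite iff $1-\varepsilon<1$, i.e.\ iff $\varepsilon>0$, in which case $Z=1/\varepsilon$; and $Z=+\infty$ exactly when $\varepsilon=0$ (the case $\varepsilon<0$ being excluded since each $\rho_a=f(a)>0$ and, summing $f$ over length-$1$ words inside the probability measure $\pr$, $\sum_a\rho_a=\sum_a\pr(\Up a)\leq1$). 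This already establishes the dichotomy and pins down $Z$ and $\varepsilon$ in terms of $f$.

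For the measure-theoretic identification I would use the decomposition $\overline{\Sigma^*}=\Sigma^*\sqcup\partial\Sigma^*$ together with the relation between full and ordinary cylinders: for $x\in\Sigma^*$,
\begin{gather*}
  \Up x=\{x\}\ \sqcup\ \bigsqcup_{a\in\Sigma}\Up(xa),
\end{gather*}
since a word or infinite sequence extending $x$ either equals $x$ or extends $xa$ for a unique $a$. Applying $\pr$ and using multiplicativity of $f$,
\begin{gather*}
  f(x)=\pr(\{x\})+f(x)\sum_{a\in\Sigma}\rho_a=\pr(\{x\})+f(x)(1-\varepsilon),
\end{gather*}
hence $\pr(\{x\})=\varepsilon f(x)$ for every $x\in\Sigma^*$. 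In particular $\pr(\Sigma^*)=\varepsilon\sum_{x}f(x)=\varepsilon Z$, which equals $1$ when $\varepsilon>0$ (as $Z=1/\varepsilon$) and equals $0$ when $\varepsilon=0$. This gives case~\ref{item:6} completely: $\pr$ is concentrated on $\Sigma^*$ with $\pr(\{x\})=\varepsilon f(x)$, so $\pr=\nu_\rho$ is the sub-Bernoulli measure. In case~\ref{item:5}, $\pr$ is concentrated on $\partial\Sigma^*$, so $\pr(\up x)=\pr(\Up x\setminus\{x\})=\pr(\Up x)=f(x)$, and the relation $f(xy)=f(x)f(y)$ together with positivity is exactly the defining property of a Bernoulli measure recalled in Section~\ref{sec:stand-prob-space}; divergence of $\sum_x f(x)$ was shown above.

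The one genuinely nontrivial point, and the step I expect to be the main obstacle, is the uniqueness/consistency part of the measure-theoretic argument: I must make sure that a probability measure $\pr$ on $\overline{\Sigma^*}$ with the prescribed values $\pr(\Up x)=f(x)$ on full cylinders is uniquely determined, and that $\pr$ is indeed countably additive rather than merely finitely additive on the generating algebra. For this I would invoke the $\pi$-$\lambda$ theorem: the full cylinders $\Up x$ together with the singletons $\{x\}$ and $\emptyset$ form a $\pi$-system generating the relevant $\sigma$-algebra on $\overline{\Sigma^*}$, and the identity $\pr(\{x\})=\varepsilon f(x)$ derived above shows that $\pr$ is already pinned down on this $\pi$-system by $f$ alone; since a probability measure is determined by its values on a generating $\pi$-system, uniqueness follows, and the only remaining content is that the candidate values are consistent, which is precisely the classical Kolmogorov/Carathéodory extension argument already underlying the constructions in Sections~\ref{sec:stand-prob-space} and~\ref{sec:finite-bern-sequ}. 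Everything else is the routine bookkeeping carried out above.
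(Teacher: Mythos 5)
Your proposal is correct. Note that the paper itself offers no proof of Theorem~\ref{thr:1}: it is presented as classical material (Section~\ref{sec:bern-finite-bern} cites standard textbooks), so there is no argument of the paper to compare against; your route---the disjoint decomposition $\Up x=\{x\}\sqcup\bigsqcup_{a\in\Sigma}\Up(xa)$ giving $\pr(\{x\})=\ew f(x)$, the geometric-series evaluation $Z=\sum_{n\geq0}(1-\ew)^n$ settling the dichotomy, and countable additivity over the countable set $\Sigma^*$ to locate the mass---is exactly the standard argument, and it also mirrors the proof technique of the trace-monoid analogue (Theorem~\ref{thr:2}) in the cited references. Two small remarks: the identity $\up x=\Up x\setminus\{x\}$ is not a set equality (the right-hand side also contains the finite strict extensions of~$x$; correctly, $\up x=\Up x\cap\B\Sigma^*$), but your conclusion $\pr(\up x)=f(x)$ stands since you have already shown $\pr(\Sigma^*)=0$ when $\ew=0$; and the Carath\'eodory/Kolmogorov extension step you flag as the main obstacle is not actually needed here, because $\pr$ is given as a probability measure and the statement only requires identification and uniqueness, for which your $\pi$-system argument (values $\pr(\Up x)=f(x)$ on a generating $\pi$-system determine $\pr$) already suffices.
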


\subsection{Bernoulli and sub-Bernoulli measures on trace monoids}
\label{sec:bern-finite-bern-1}

The notions of Bernoulli measure and of sub-Bernoulli measure extend
to trace monoids the notions of Bernoulli sequences and of finite
Bernoulli sequences. They provide a theoretical ground for concurrency
probabilistic models, in the framework of trace monoids.

\subsubsection{Valuations}
\label{sec:valuations}

Let $\bbR_+^*$ be equipped with the monoid structure
$(\bbR_+^*,\times,1)$.  A \emph{valuation} on a trace monoid
$\M=\M(\Sigma,I)$ is a morphism of monoids $f:\M\to\bbR_+^*$. It is
thus a function $f:\M\to\bbR_+^*$ satisfying:
\begin{align*}
  f(\unit)&=1\,,&\forall x,y\in\M\quad f(x\cdot y)&=f(x)f(y).
\end{align*}

\subsubsection{M\"obius transform; M\"obius and sub-M\"obius valuations}
\label{sec:mobius-transform}

Let $f:\M\to\bbR$ be a valuation. The \emph{M\"obius transform} of $f$
is the function $h:\C\to\bbR$ defined by:
\begin{gather}
\label{eq:2}
  \forall c\in\C\quad h(c)=\sum_{c'\in\C\tq c'\geq
    c}(-1)^{|c'|-|c|}f(c'),
\quad\text{only defined on cliques.}
\end{gather}

An alternative expression for the M\"obius transform is the
following. For each clique $c\in\C$, let $\M^{(c)}$ be the sub-trace
monoid generated by those letters $a\in\Sigma$ such that $a\parallel
c$. Here, $a\parallel c$ reads as ``$a$~parallel to~$c$'', and means that
$(a,b)\in I$ for all letters $b$ that occur in the clique~$c$. Then:
\begin{align}
\label{eq:10}
h(c)&=f(c)\cdot \mu_{\M^{(c)}}(t_1,\ldots,t_N)\,,&\text{with }t_i&=f(a_i)\,,
\end{align}
where $\mu_{\M^{(c)}}$ denotes the multivariate M\"obius polynomial
(see Section~\ref{sec:multi-variate-mobius}) of the trace
monoid~$\M^{(c)}$\,. By convention, the expression
$\mu_{\M^{(c)}}(t_1,\ldots,t_N)$ actually involves only the variables
$t_i$ associated with those generators belonging to~$\M^{(c)}$.

In particular, $h(\unit)=\mu_\M(t_1,\ldots,t_N)$---that was already
observable on~(\ref{eq:2}).

A valuation $f$ is said to be~\cite{abbes_mair15}:
\begin{enumerate}
\item A \emph{M\"obius} valuation if:
  \begin{align}
\label{eq:20}
h(\unit)&=0\,,&\forall c\in\Cstar\quad h(c)>0\,.
  \end{align}
\item A \emph{sub-M\"obius} valuation if:
  \begin{align}
\label{eq:21}
h(\unit)&>0\,,&\forall c\in\Cstar\quad h(c)&>0\,.
\end{align}
\end{enumerate}

Equivalently, as seen from~(\ref{eq:10}), if
$\Sigma=\{a_1,\ldots,a_N\}$ and if we put $t_i=f(a_i)$ for
$i\in\{1,\ldots,N\}$, then $f$ is:
\begin{enumerate}
\item A \emph{M\"obius} valuation if:
  \begin{align}
\label{eq:11}
    \mu_\M(t_1,\ldots,t_N)&=0\,,&\forall c\in\Cstar\quad \mu_{\M^{(c)}}(t_1,\ldots,t_N)>0\,.
  \end{align}
\item A \emph{sub-M\"obius} valuation if:
  \begin{align}
\label{eq:19}
    \mu_\M(t_1,\ldots,t_N)&>0\,,&\forall c\in\Cstar\quad \mu_{\M^{(c)}}(t_1,\ldots,t_N)>0\,.
  \end{align}
\end{enumerate}

\subsubsection{Cylinders and \slgb s on $\Mbar$ and on $\BM$}
\label{sec:cylinders}

Recall that we have introduced infinite traces in
Section~\ref{sec:infinite-traces-1}, yielding the completion
$\Mbar=\M\cup\BM$.

To each trace $x\in\M$, we associate the \emph{elementary cylinder}
$\up x\subseteq\BM$ and the \emph{full elementary cylinder}
$\Up x\subseteq\Mbar$, defined as follows:
\begin{align*}
\Up x&=\{\xi\in\Mbar\tq x\leq\xi\}\,,&  \up x&=\{\xi\in\BM\tq
x\leq\xi\}=\Up x\cap\BM\,.
\end{align*}

The set $\Mbar$ is equipped with the \slgb\ $\FFFbar$ generated by the
collection of full elementary cylinders, and the set $\BM$ is equipped
with the \slgb\ $\FFF$ induced by $\FFFbar$ on~$\BM$. Both \slgb s are
Borel \slgb s for compact and metrisable topologies. The restriction of
$\FFFbar$ to $\M$ is the discrete \slgb.

\subsubsection{Multiplicative probability measures}
\label{sec:mult-prob-meas}

In order to state an analogous result to
Theorem~\ref{thr:1} for trace monoids, we introduce the following
definition, borrowed from~\cite{abbes_mair15}.

\begin{definition}
  \label{def:1}
A probability measure\/ $\pr$ on $(\Mbar,\FFFbar)$ is said to be
\emph{multiplicative} whenever it satisfies the following property:
\begin{gather*}
\forall x\in\M\quad\pr(\Up x)>0\,,\\
  \forall x,y\in\M\quad\pr\bigl(\Up(x\cdot y)\bigr)=\pr(\Up x)\cdot\pr(\Up
  y)\,.
\end{gather*}

We define the valuation $f:\M\to\bbR$ associated with\/ $\pr$ and the
number $\ew$ by:
\begin{align*}
  \forall x\in\M\quad f(x)&=\pr(\Up x)\,,&\ew&=h(\unit)\,,
\end{align*}
where $h:\C\to\bbR$ is the M\"obius transform of~$f$. 
\end{definition}

The relationship between multiplicative measures and M\"obius and
sub-M\"obius valuations is as
follows~\cite{abbes_mair15,abbes15_unif}.

\begin{theorem}
  \label{thr:2}
  There is a bijective correspondence between multiplicative
  measures\/~$\pr$ and valuations which are either M\"obius or
  sub-M\"obius.  The alternative is the following:
\begin{enumerate}
\item\label{item:7} $\ew=0$. The valuation is M\"obius. In this case,
  $\pr$~is concentrated on~$\BM$ and characterized by:
  \begin{gather*}
    \forall x\in\M\quad \pr(\up x)=f(x)\,.
  \end{gather*}
  The series $\sum_{x\in\M}f(x)$ is divergent.  We say that\/ $\pr$ is
  a Bernoulli measure.
\item\label{item:8} $\ew>0$. The valuation is sub-M\"obius. In this
  case, $\pr$~is concentrated on~$\M$ and satisfies:
  \begin{gather*}
    \forall x\in\M\quad \pr(\{x\})=\ew f(x)\,.
  \end{gather*}
  The series $Z=\sum_{x\in\M}f(x)$ is convergent and satisfies
  $Z=1/\ew$\,.  We say that\/ $\pr$ is a sub-Bernoulli measure.
\end{enumerate}
\end{theorem}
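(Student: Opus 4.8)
\medskip\noindent\emph{Idea.} I would read both sides of the correspondence through the Cartier--Foata decomposition of (possibly infinite) traces into cliques, the pivotal observation being that the M\"obius transform $h$ of a valuation $f=\pr(\Up\cdot)$ is nothing but the law of the \emph{bottom clique} (the first clique of the Cartier--Foata normal form) of a random element of $\Mbar$. The argument then splits into the two directions of the bijection, plus one combinatorial lemma about the multivariate growth series.

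\medskip\noindent\emph{From a measure to a valuation.} Let $\pr$ be multiplicative and $f(x)=\pr(\Up x)$, a positive valuation by Definition~\ref{def:1}. For a clique $c\in\C$ I would apply finite inclusion--exclusion to the cylinders $\Up(c\cdot a)$, with $a$ ranging over the generators parallel to $c$, using that $\bigcap_{a\in S}\Up(c\cdot a)=\Up(c\cdot d)$ when $S$ is the letter set of a clique $d$ of $\M^{(c)}$ and is empty otherwise (two distinct dependent letters cannot both be minimal in one trace). Comparison with the closed form~\eqref{eq:10} then identifies
\[
h(c)=\pr\Bigl(\Up c\setminus\textstyle\bigcup_{a\parallel c}\Up(c\cdot a)\Bigr)=\pr\bigl(\{\xi\in\Mbar\tq\text{the bottom clique of }\xi\text{ equals }c\}\bigr),
\]
so $h(c)\ge 0$ for every clique, these events partition $\Mbar$, and $\ew=h(\unit)=\pr(\{\unit\})\in[0,1]$. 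A second, one-shot inclusion--exclusion, now over \emph{all} generators extending a trace $x$, gives $\pr(\Up x\setminus\{x\})=f(x)(1-\ew)$, hence $\pr(\{x\})=\ew\, f(x)$ for every $x\in\M$. The dichotomy follows: if $\ew=0$ then $\pr$ charges no trace, so it is carried by $\BM$ with $\pr(\up x)=\pr(\Up x)=f(x)$ and $\sum_{x\in\M}f(x)=+\infty$; if $\ew>0$ then $\sum_{x\in\M}\pr(\{x\})=1$ forces $\sum_{x\in\M}f(x)=1/\ew<\infty$ and $\pr$ is carried by $\M$. The one genuinely delicate point is the strict inequality $h(c)>0$ for $c\in\Cstar$ (so that $f$ is really M\"obius, resp.\ sub-M\"obius, and not merely ``weakly'' so); I would derive it from positivity of all cylinders together with irreducibility, as in~\cite{abbes_mair15}.

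\medskip\noindent\emph{From a valuation to a measure; bijectivity.} The formulas above show that $\pr$ is \emph{forced} by $f$, so uniqueness --- hence injectivity of $\pr\mapsto f$ --- is immediate, the full cylinders forming a $\pi$-system that generates $\FFFbar$. For existence I would first settle the sub-M\"obius case by a direct count: the lemma that $\sum_{x\in\M}f(x)$ converges, to $1/h(\unit)$, \emph{exactly} when $f$ is sub-M\"obius --- the multivariate avatar of $Z_\M=1/\mu_\M$, which does require controlling the positivity of all the $\mu_{\M^{(c)}}$ simultaneously --- lets one put $\pr(\{x\})=\ew\,f(x)$ on the countable set $\M$ and check at once that $\pr(\Up x)=\ew f(x)\sum_{z\in\M}f(z)=f(x)$, so $\pr$ is multiplicative with valuation $f$. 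For a M\"obius $f$ I would perturb: the valuation $f_s$ given by $f_s(a_i)=s\,f(a_i)$ is sub-M\"obius for $s<1$ close enough to $1$ (by continuity of the polynomials $\mu_{\M^{(c)}}$ and of $h(c)/f(c)$), giving sub-Bernoulli measures $\pr_s$; since $(\Mbar,\FFFbar)$ is a compact metrisable space with clopen cylinders, any weak limit $\pr$ of the $\pr_s$ satisfies $\pr(\Up x)=\lim_s f_s(x)=f(x)$, hence is multiplicative, while $\pr(\{x\})=\lim_s\ew_s f_s(x)=0$ shows it lives on $\BM$.

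\medskip\noindent\emph{Main obstacle.} I expect the hard part to be the convergence lemma $\sum_{x\in\M}f(x)=1/h(\unit)$ and, more precisely, the claim that convergence holds \emph{iff} $f$ is sub-M\"obius: the multivariate growth identity is only a formal identity in general, and pinning down its region of convergence forces one to track the positivity of the M\"obius polynomials of all the sub-monoids $\M^{(c)}$ at once. The secondary difficulties are the compactness/clopenness input that legitimises the weak-limit argument in the M\"obius case (already supplied by the properties of $(\Mbar,\leq)$ recalled above) and the strict positivity $h(c)>0$ on $\Cstar$ in the forward direction.
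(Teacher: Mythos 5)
A preliminary remark on the comparison: the paper does not prove Theorem~\ref{thr:2} at all --- it recalls it from~\cite{abbes_mair15,abbes15_unif} --- so the only machinery in the paper close to what you need is in the proof of Theorem~\ref{thr:4} (Cartier--Foata normal form, the nonnegative clique matrix, Perron--Frobenius). Your forward direction (measure $\to$ valuation) is sound and follows the same conceptual route as the cited works: reading $h(c)$ as the law of the bottom Cartier--Foata clique via inclusion--exclusion, deriving $\pr(\{x\})=\ew f(x)$, and splitting according to $\ew$; the points you leave implicit there are easy to fill (for instance, when $\ew>0$, the bound $\sum_{x\in\M}f(x)\leq1/\ew$ forces $\sum_{|x|=n}f(x)\to0$ and hence $\pr(\BM)=0$), except for the strict positivity $h(c)>0$ on $\Cstar$, which you defer to the reference just as the paper defers the whole theorem.

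The genuine gaps are in the existence half of the correspondence. First, the convergence lemma you isolate --- sub-M\"obius implies $\sum_{x\in\M}f(x)=1/h(\unit)<\infty$ --- is not a side technicality but the mathematical core of the sub-Bernoulli case, and your proposal leaves it unproved; the way such bounds are actually obtained (see the proof of Theorem~\ref{thr:4}) is to decompose traces by their Cartier--Foata normal form and control the spectral radius of the matrix $B_{\gamma,\gamma'}=\un(\gamma\to\gamma')f(\gamma')$ by Perron--Frobenius together with~\cite[Lemma~6.4]{abbes15}. Second, and more seriously, your M\"obius case is flawed as written: you assert that $f_s(a_i)=s\,f(a_i)$ is sub-M\"obius for $s<1$ close to~$1$ ``by continuity''. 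Continuity does give $\mu_{\M^{(c)}}(st_1,\dots,st_N)>0$ for the finitely many nonempty cliques~$c$, but it gives nothing for the decisive condition $h_s(\unit)=\mu_\M(st_1,\dots,st_N)>0$, since that quantity tends to $\mu_\M(t_1,\dots,t_N)=0$; a function vanishing at $s=1$ need not be positive just below~$1$. The claim is true, but proving it amounts to showing that $s=1$ is the smallest positive root of $s\mapsto\mu_\M(st_1,\dots,st_N)$, equivalently that the weighted growth series $\sum_{x\in\M}f(x)s^{|x|}$ has radius of convergence~$1$ --- which is precisely the spectral-radius input above, i.e.\ the very lemma you postponed. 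So both halves of your existence argument hinge on one unproved statement, and the ``continuity'' shortcut hides rather than avoids it. (A minor caveat: your appeal to irreducibility for $h(c)>0$ is reasonable, but irreducibility is not among the stated hypotheses; it is implicit in the setting of the cited references, and the paper reduces to it in Section~\ref{sec:reduct-irred-trace}.)
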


\subsubsection{Uniform multiplicative measures}
\label{sec:uniform-sub-uniform}

A valuation $f:\M\to\bbR$ is said to be \emph{uniform} if $f(a)$ is
constant, for $a$ ranging over~$\Sigma$. This is equivalent to saying
that $f(x)$ only depends on the length of~$x$, and also equivalent to
saying that $f(x)=p^{|x|}$ for some real~$p$.

A multiplicative measure is \emph{uniform} if the associated valuation
is uniform. The following result describes uniform multiplicative
measures~\cite{abbes_mair15,abbes15_unif}. They are related to the root
of smallest modulus of the M\"obius polynomial of the trace monoid
(see Section~\ref{sec:growth-series-mobius}).

\begin{theorem}
  \label{thr:3}
  Uniform multiplicative measures\/ $\pr$ on a trace monoid $\M$ are
  in bijection with the half closed interval $(0,p_0]$, where $p_0$ is
  the root of smallest modulus of~$\mu_\M$\,. The correspondence
  associates with\/ $\pr$ the unique real $p$ such that\/
  $\pr(\Up x)=p^{|x|}$ for all $x\in\M$.

  The alternative is the following:
\begin{enumerate}
\item $p=p_0$\,. In this case, $\pr$~is Bernoulli (concentrated on the boundary).
\item $p<p_0$\,. In this case, $\pr$~is sub-Bernoulli (concentrated on
  the monoid).
\end{enumerate}
\end{theorem}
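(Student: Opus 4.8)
The plan is to derive this as a specialization of Theorem~\ref{thr:2} restricted to uniform valuations. First I would observe that a uniform valuation on $\M$ is entirely parametrized by a single real $p>0$, via $f(a_i)=p$ for all generators, hence $f(x)=p^{|x|}$; this is essentially the content of the definition of uniformity already recorded in Section~\ref{sec:uniform-sub-uniform}. So the task reduces to determining, among all $p>0$, which ones make $f$ a M\"obius valuation and which make it sub-M\"obius, in the sense of conditions~(\ref{eq:11}) and~(\ref{eq:19}). Plugging $t_i=p$ into the multivariate M\"obius polynomial collapses it to the ordinary one: $\mu_\M(p,\dots,p)=\mu_\M(p)$, and likewise $\mu_{\M^{(c)}}(p,\dots,p)=\mu_{\M^{(c)}}(p)$ for each sub-monoid $\M^{(c)}$. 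Thus $f$ is M\"obius iff $\mu_\M(p)=0$ and $\mu_{\M^{(c)}}(p)>0$ for every non-empty clique $c$, and sub-M\"obius iff $\mu_\M(p)>0$ together with the same positivity constraints on the $\mu_{\M^{(c)}}(p)$.

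Next I would analyze these constraints using the basic facts about M\"obius polynomials recalled in Section~\ref{sec:growth-series-mobius}: $\mu_\M$ has a unique root of smallest modulus $p_0\in(0,1]$, which is real and equals the radius of convergence of $Z_\M=1/\mu_\M$. In particular $\mu_\M(t)>0$ on $[0,p_0)$ and $\mu_\M(p_0)=0$. The key auxiliary point is that for $p\in(0,p_0]$, one also has $\mu_{\M^{(c)}}(p)>0$ for every non-empty clique $c$. This should follow because $\M^{(c)}$ is a sub-trace-monoid, so its growth series $Z_{\M^{(c)}}(t)=1/\mu_{\M^{(c)}}(t)$ is dominated coefficientwise by $Z_\M(t)$, hence has radius of convergence at least $p_0$; therefore the smallest root of $\mu_{\M^{(c)}}$ is $\geq p_0$, so $\mu_{\M^{(c)}}$ stays positive on $(0,p_0]$. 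Combining: for $p=p_0$ we get a M\"obius valuation (concentrated on $\BM$ by Theorem~\ref{thr:2}\eqref{item:7}); for $p\in(0,p_0)$ we get a sub-M\"obius valuation (concentrated on $\M$ by Theorem~\ref{thr:2}\eqref{item:8}); and for $p>p_0$, the valuation is neither, since either $\mu_\M(p)<0$ (if $p$ is just past $p_0$, using simplicity of $p_0$ in the irreducible case, or more generally because $\mu_\M$ changes sign there) or some clique constraint fails, so no multiplicative measure corresponds to it. That $p\mapsto\pr$ is a genuine bijection onto $(0,p_0]$ is then immediate from Theorem~\ref{thr:2}: distinct $p$ give distinct values $f(x)=p^{|x|}$, hence distinct measures, and every uniform multiplicative measure arises this way by reading off $p=f(a)=\pr(\Up a)$.

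The main obstacle I anticipate is the claim that $\mu_{\M^{(c)}}(p)>0$ for all $p\in(0,p_0]$ and all non-empty cliques $c$ — i.e.\ that the positivity side-conditions in~(\ref{eq:11}) and~(\ref{eq:19}) are automatically satisfied once $p\le p_0$. The coefficientwise domination $Z_{\M^{(c)}}\preccurlyeq Z_\M$ is intuitively clear since every trace of $\M^{(c)}$ is a trace of $\M$ of the same length, and this gives the radius-of-convergence comparison cleanly; one must be slightly careful that $\mu_{\M^{(c)}}$ does not vanish \emph{before} its radius of convergence, which is guaranteed because the smallest-modulus root of a M\"obius polynomial is real and coincides with that radius (again by~\cite{krob03,goldwurm00:_clique}). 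A secondary point requiring care is the behaviour strictly above $p_0$: showing no new uniform multiplicative measures appear there amounts to checking that for $p>p_0$ the pair of conditions defining M\"obius/sub-M\"obius valuations fails, which follows from $\mu_\M(p_0)=0$ together with the fact that $1/\mu_\M$ is a power series with nonnegative coefficients and radius $p_0$, forcing $\mu_\M$ to be negative on some interval immediately to the right of $p_0$ (indeed on all of $(p_0,\infty)\cap\{|t|\le$ next root$\}$), so $f$ is neither M\"obius nor sub-M\"obius there. The rest is bookkeeping via Theorems~\ref{thr:2} and~\ref{thr:1}.
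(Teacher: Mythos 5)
The paper does not actually prove Theorem~\ref{thr:3}; it imports it from~\cite{abbes_mair15,abbes15_unif}. So your proposal has to stand on its own, and as written it has a genuine gap, located exactly at the point you flag as your ``main obstacle'': the claim that $\mu_{\M^{(c)}}(p)>0$ for every non-empty clique $c$ and every $p\in(0,p_0]$. Your coefficientwise domination $Z_{\M^{(c)}}\preccurlyeq Z_\M$ only gives that the smallest root of $\mu_{\M^{(c)}}$ is $\geq p_0$, hence positivity of $\mu_{\M^{(c)}}$ on $(0,p_0)$; it says nothing at the endpoint $p=p_0$, which is precisely the Bernoulli case, i.e.\ the heart of the theorem. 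The inequality must be strict, and strictness is not automatic: take $\M=\langle a,b\mid ab=ba\rangle$, where $\mu_\M(t)=(1-t)^2$, $p_0=1$, and $\M^{(a)}=\{b\}^*$ has $\mu_{\M^{(a)}}(t)=1-t$, which vanishes at $p_0$; the uniform valuation at $p_0$ then has $h(a)=0$ and is not M\"obius. So the endpoint claim genuinely requires irreducibility of $\M$ (a hypothesis the cited sources carry and which your argument must invoke), and even then it needs an argument beyond domination of growth series: the standard route is the strict spectral-radius comparison for the Cartier--Foata clique matrices via Perron--Frobenius, which is exactly the step $r_\delta>t_0$ in the author's proof of Theorem~\ref{thr:4} (the submonoid $\M^{(c)}$ omits the letters of $c$, so its matrix is a proper nonnegative minorant of the primitive matrix of $\M$). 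Without this, your case $p=p_0$ is unproved.

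A secondary soft spot is the exclusion of $p>p_0$. Simplicity of $p_0$ again holds only for irreducible $\M$ (for $\langle a,b\mid ab=ba\rangle$, $\mu_\M\geq0$ everywhere), and even in the irreducible case a sign change just to the right of $p_0$ does not by itself rule out a larger real root of $\mu_\M$ at which all clique conditions happen to hold. The cleaner argument inside your own framework is: if $f(x)=p^{|x|}$ were sub-M\"obius, Theorem~\ref{thr:2} forces $\sum_{x\in\M}p^{|x|}=Z_\M(p)<\infty$, impossible for $p>p_0$ since $p_0$ is the radius of convergence of $Z_\M$; and if it were M\"obius one still needs a separate argument (e.g.\ the prefix-counting bound $1\leq\sum_{|x|=n}\pr(\up x)=\lambda_\M(n)p^n$ combined with an upper bound, or the uniqueness statements of the cited references) to force $p=p_0$. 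The rest of your plan --- specializing conditions~(\ref{eq:11}) and~(\ref{eq:19}) to $t_i=p$, positivity of $\mu_\M$ on $[0,p_0)$, vanishing at $p_0$, injectivity and surjectivity of $p\mapsto\pr$ --- is sound; it is the two points above, and above all the strict positivity of $\mu_{\M^{(c)}}(p_0)$, that need to be supplied.
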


\subsubsection{Extension and restriction of valuations}
\label{sec:extens-restr-valu}

We shall need the following result for the construction of the \FSA\
in Section~\ref{sec:full-synchr-algor}.

\begin{theorem}
  \label{thr:4}
Let $\M=\M(\Sigma,I)$ be an irreducible trace monoid.
\begin{enumerate}
\item\label{item:9} Let $f:\M\to\bbR_+^*$ be a M\"obius valuation,
  let\/ $\Sigma'$ be any proper subset of\/~$\Sigma$ and let $\M'$ be
  the submonoid of $\M$ generated by\/~$\Sigma'$. Then the restriction
  $f':\M'\to\bbR_+^*$ of $f$ to $\M'$ is a sub-M\"obius valuation.
\item\label{item:10} Let $\Sigma'=\Sigma\setminus\{a\}$, where $a$ is
  any element of\/~$\Sigma$, let $\M'$ be the submonoid of $\M$ generated
  by~$\Sigma'$, and let $f':\M'\to\bbR_+^*$ be a sub-M\"obius
  valuation. Then there exists a unique M\"obius valuation
  $f:\M\to\bbR_+^*$ that extends $f'$ on~$\M'$.
\end{enumerate}
\end{theorem}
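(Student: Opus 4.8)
The plan is to translate everything into the language of multivariate Möbius polynomials via the identity~\eqref{eq:10}, and to exploit the fact that, because $\M$ is irreducible, the Möbius polynomial $\mu_\M$ is affine in each variable separately (each generator occurs in at most the cliques it belongs to, and never squared). Let me set up notation: write $\Sigma=\{a_1,\dots,a_N\}$, put $t_i=f(a_i)$, and recall that a valuation is determined by the tuple $(t_1,\dots,t_N)\in(\bbR_+^*)^N$, being M\"obius iff $\mu_\M(t_1,\dots,t_N)=0$ together with $\mu_{\M^{(c)}}(t_1,\dots,t_N)>0$ for every $c\in\Cstar$, and sub-M\"obius iff $\mu_\M(t_1,\dots,t_N)>0$ with the same positivity on proper cliques.

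For item~\eqref{item:9}: given a M\"obius valuation $f$ on $\M$ and a proper subset $\Sigma'\subsetneq\Sigma$, I would first reduce to the case $\Sigma'=\Sigma\setminus\{a\}$ by removing one generator at a time (the general case follows by iterating, once the single-generator case is established, since a submonoid generated by $\Sigma''\subseteq\Sigma'$ inside $\M'$ is the same as the one generated inside $\M$). So suppose $\Sigma'=\Sigma\setminus\{a\}$. The restriction $f'$ has associated tuple obtained by dropping the coordinate for $a$. I must show $\mu_{\M'}(t_i)_{i\ne a}>0$ and $\mu_{\M'^{(c)}}>0$ for all non-empty cliques $c$ of~$\M'$. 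The key is the deletion-contraction type identity for the Möbius polynomial: splitting the cliques of $\M$ according to whether they contain $a$ gives
\begin{gather*}
  \mu_\M(t_1,\dots,t_N)=\mu_{\M'}\bigl((t_i)_{i\ne a}\bigr)-t_a\,\mu_{\M^{(a)}}\bigl((t_i)_{i\ne a}\bigr),
\end{gather*}
where $\M^{(a)}$ is the submonoid generated by letters parallel to~$a$. Since $f$ is M\"obius, the left side is $0$, so $\mu_{\M'}((t_i)_{i\ne a})=t_a\,\mu_{\M^{(a)}}((t_i)_{i\ne a})$. Now $\M^{(a)}=\M^{(c)}$ with $c$ the one-letter clique $a\in\Cstar$, hence $\mu_{\M^{(a)}}>0$ by the M\"obius hypothesis, and $t_a>0$; therefore $\mu_{\M'}((t_i)_{i\ne a})>0$, giving the first condition. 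For the clique condition: a non-empty clique $c$ of $\M'$ is also a clique of $\M$, and $\M'^{(c)}$ is the submonoid of $\M^{(c)}$ generated by $\Sigma'\cap(\text{letters parallel to }c)$; applying the same deletion identity inside $\M^{(c)}$ (deleting $a$ if $a\parallel c$, otherwise $\M'^{(c)}=\M^{(c)}$ already) together with $\mu_{\M^{(c\cdot a)}}>0$ when $c\cdot a\in\Cstar$, one gets $\mu_{\M'^{(c)}}>0$. Thus $f'$ is sub-M\"obius.

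For item~\eqref{item:10}, the existence-and-uniqueness of the M\"obius extension: we are given the sub-M\"obius tuple $(t_i)_{i\ne a}$ on $\Sigma'$ and must produce $t_a>0$ making the full tuple M\"obius. By the same affine splitting identity, the M\"obius equation $\mu_\M=0$ reads $\mu_{\M'}((t_i)_{i\ne a})-t_a\,\mu_{\M^{(a)}}((t_i)_{i\ne a})=0$, i.e.
\begin{gather*}
  t_a=\frac{\mu_{\M'}\bigl((t_i)_{i\ne a}\bigr)}{\mu_{\M^{(a)}}\bigl((t_i)_{i\ne a}\bigr)}.
\end{gather*}
This already forces uniqueness. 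For existence I must check $t_a>0$: the numerator is $\mu_{\M'}$ evaluated at a sub-M\"obius point of $\M'$, which is $>0$ by definition of sub-M\"obius (this is the $h(\unit)>0$ condition for $f'$); the denominator is $\mu_{\M^{(a)}}$ evaluated at the $(t_i)_{i\ne a}$, and since $\M^{(a)}$ is generated by a subset of $\Sigma'$, its Möbius polynomial at a sub-M\"obius point of the ambient $\M'$ is again $>0$ — here I need that sub-M\"obius-ness on $\M'$ implies strict positivity of $\mu_{\M''}$ for every submonoid $\M''$ generated by a subset of $\Sigma'$, which follows because any clique of $\M''$ is a clique of $\M'$ and the relevant $\mu$ appears as $\mu_{\M'^{(c)}}$ for a suitable clique, or by a further deletion argument; in any case it is positive. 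Hence $t_a>0$, and $f$ is well-defined as a valuation. It then remains to verify that this $f$ is actually M\"obius, i.e.\ that $\mu_{\M^{(c)}}(t_1,\dots,t_N)>0$ for every $c\in\Cstar$ of~$\M$. Split into two cases: if $a\notin\Sigma^{(c)}$ the value does not involve $t_a$ and positivity is inherited from sub-M\"obius-ness of $f'$; if $a$ is parallel to $c$, apply once more the deletion identity $\mu_{\M^{(c)}}=\mu_{\M^{(c)}\cap\M'}-t_a\,\mu_{\M^{(c\cdot a)}}$ and substitute the chosen value of $t_a$, reducing to a ratio of positive quantities plus combinatorial bookkeeping; the borderline case to watch is $c$ together with $a$ forming a clique whose parallel submonoid degenerates.

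The main obstacle, and the step I would spend the most care on, is the very last verification in item~\eqref{item:10}: that plugging the forced value of $t_a$ back in keeps \emph{all} the clique polynomials $\mu_{\M^{(c)}}$ strictly positive, not just $\mu_\M$ itself zero. This is where irreducibility of $\M$ is genuinely used — it guarantees $p_0$ is a simple root of $\mu_\M$ and, more relevantly, rules out the pathological splittings in which $\mu_{\M^{(a)}}$ could vanish at the given point. I expect the cleanest route is to set up the single-generator deletion identity $\mu_\M(\,\cdot\,)=\mu_{\M\setminus a}(\,\cdot\,)-t_a\mu_{\M^{(a)}}(\,\cdot\,)$ once and for all as a lemma, prove by an easy induction on $|\Sigma|$ that at any sub-M\"obius point of an irreducible (or even arbitrary) monoid every sub-clique-Möbius-polynomial is positive, and then read both items~\eqref{item:9} and~\eqref{item:10} off that identity almost mechanically.
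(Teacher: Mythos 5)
Your deletion identity $\mu_\M=\mu_{\M'}-t_a\,\mu_{\M^{(a)}}$ is sound, and the parts of your argument that rest on it alone are correct and genuinely different from the paper's proof. For the first item, the paper proves convergence of $\sum_{x\in\M'}f(x)$ via the Cartier--Foata normal form, a comparison of nonnegative matrices and the Perron--Frobenius theorem, and then applies M\"obius inversion; your purely algebraic splitting ($\mu_{\M'}=\mu_\M+t_a\mu_{\M^{(a)}}$, and $\mu_{\M'^{(c)}}=\mu_{\M^{(c)}}+t_a\mu_{\M^{(c\cdot a)}}$ when $a\parallel c$) reaches the same conclusion more elementarily, and in fact without using irreducibility. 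Likewise, your uniqueness argument for the second item is the paper's computation $h_t(\unit)=h'(\unit)-tK$ with $K=\mu_{\M^{(a)}}(t')$, except that you obtain $K>0$ from the (correct, and provable by iterating the same identity) propagation of sub-M\"obius-ness to subset-generated submonoids, where the paper uses an inclusion--exclusion/probabilistic argument relying on irreducibility.

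The gap is in the existence half of the second item, exactly at the step you defer. Having set $t_a=\mu_{\M'}(t')/\mu_{\M^{(a)}}(t')$, you must verify strict positivity of $\mu_{\M^{(c)}}$ at the \emph{extended} point for every $c\in\Cstar$; for cliques $c$ of $\M'$ with $a\parallel c$ this reads, after clearing the positive denominator,
\begin{gather*}
\mu_{\M'^{(c)}}(t')\,\mu_{\M^{(a)}}(t')-\mu_{\M'}(t')\,\mu_{\M^{(c\cdot a)}}(t')>0,
\end{gather*}
and your proposed auxiliary lemma (positivity of all submonoid M\"obius polynomials at a sub-M\"obius point) does not apply: the point being tested is the would-be M\"obius point, not a sub-M\"obius one. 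This inequality is not combinatorial bookkeeping, and it fails without irreducibility: for $\M=\langle a,b\mid ab=ba\rangle$ with $\Sigma'=\{b\}$, any $t_b\in(0,1)$ is sub-M\"obius, the forced value is $t_a=1$, and then $\mu_{\M^{(b)}}=1-t_a=0$, so no M\"obius extension exists and the displayed inequality degenerates to an equality. So irreducibility has to enter this step through a real argument---in the paper it does so via the strict radius-of-convergence comparison $r_\delta>t_0$ for the series $G_\delta$, obtained from primitivity of the Cartier--Foata incidence matrix and Perron--Frobenius, followed by M\"obius inversion---and your sketch contains no substitute for it. Until you prove the displayed inequality (equivalently, strict positivity of $\mu_{\M^{(c)}}$ at the extended point for all $c$ parallel to $a$) using irreducibility, the existence claim remains unproven; the first item and the uniqueness claim, by contrast, are fine as you argue them.
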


\begin{proof}
  \emph{Proof of point~\ref{item:9}.}\quad Let $f$, $f'$ and $\Sigma'$
  be as in the statement. Let also $h$ and $h'$ denote the M\"obius
  transforms of $f$ and of~$f'$. Let $S$ be the series with
  nonnegative terms:
\begin{gather*}
  S=\sum_{x\in\M'}f(x).
\end{gather*}
We claim that this series is convergent.  To prove it, recall
from~\cite{cartier69} that a pair $(\gamma,\gamma')$ of cliques is
said to be in normal form, denoted by $\gamma\to\gamma'$, if
$\forall b\in\gamma'\ \exists a\in\gamma\ (a,b)\notin I$. Any trace
$x\in\M$ can be uniquely written as a product
$x=\gamma_1\cdot\ldots\cdot \gamma_k$ of cliques such that
$\gamma_i\to\gamma_{i+1}$ holds for all $i=1,\dots,k-1$.

Let the nonnegative matrices
$A=(A_{\gamma,\gamma'})_{(\gamma,\gamma')\in\Cstar\times\Cstar}$ and
$B=(B_{\gamma,\gamma'})_{(\gamma,\gamma')\in\Cstar\times\Cstar}$,
where $\Cstar$ denotes the set of nonempty cliques of~$\M$, be defined
by:
\begin{align*}
  A_{\gamma,\gamma'}&=\un(\gamma\in\M')\cdot\un(\gamma'\in\M')\cdot\un(\gamma\to\gamma')\cdot
                      f(\gamma'),
&B_{\gamma,\gamma'}&=\un(\gamma\to\gamma')\cdot f(\gamma').
\end{align*}

It follows from the existence and uniqueness of the normal form for
traces, decomposing the traces $x\in\M$ according to the number of
terms of their normal form, that the series $S$ writes as:
\begin{gather*}
S=1+\sum_{k\geq1}I\cdot A^k\cdot
  J=1+I\cdot\Bigl(\sum_{k\geq1} A^k\Bigr)\cdot J,
\end{gather*}
where $I$ and $J$ are row and column vectors filled with~$1$s. We know
by \cite[Lemma~6.4]{abbes15} that the matrix $B$ has spectral
radius~$1$, since $f$ is assumed to be M\"obius, and that it is a
primitive matrix since $\M$ is irreducible. But $A\leq B$ with
$A\neq B$. Therefore, it follows from the Perron-Frobenius Theorem
\cite{seneta81} that $A$ has spectral radius $<1$ and thus that the
series $S$ is convergent.

By the M\"obius inversion formula~\cite{viennot86}, it entails that
the relation $\bigl(\sum_{x\in\M'}f(x)\bigr)\cdot h'(\unit)=1$ holds
in the fields of reals. Hence $h'(\unit)>0$. It remains to prove that
$h'(\delta)>0$ also holds for any non empty clique $\delta$
of~$\M'$. This is a bit easier to prove. For any non empty
clique~$\delta$ of~$\M'$, let $\M'^{(\delta)}$ and $\M^{(\delta)}$ be
the submonoids of $\M'$ and of $\M$ respectively, be defined as in
Section~\ref{sec:mobius-transform}. Then
$\M'^{(\delta)}\subseteq\M^{(\delta)}$, hence the following
inequalities between series with nonnegative terms hold:
\begin{gather*}
  \sum_{x\in\M'^{(\delta)}}f(x)\leq\sum_{x\in\M^{(\delta)}}f(x)=\frac1{h(\delta)}<\infty.
\end{gather*}
Therefore, as above, the equality
$h'(\delta)\cdot\bigl(\sum_{x\in\M'^{(\delta)}}f(x)\bigr)=1$ holds in
the field of reals, proving that $h'(\delta)>0$, which completes the
proof that $f'$ is a sub-M\"obius valuation.

\medskip \emph{Proof of point~\ref{item:10}.}\quad Let $a$, $\Sigma'$
and $f':\M'\to\bbR_+^*$ be as in the statement. Let also
$h':\C'\to\bbR$ the M\"obius transform of~$f'$, where
$\C'$ is the set of cliques of~$\M'$.  Assuming
that a M\"obius extension $f:\M\to\bbR_+^*$ of $f'$ exists, we first
prove its uniqueness. For each positive real~$t$, let $f_t:\M\to\bbR$
be the valuation defined by $f_t(a)=t$ and $f_t(\alpha)=f'(\alpha)$
for $\alpha\in\Sigma'$. Then any valuation on $\M$ extending $f'$ is
of the form $f_t$ for some~$t$. Let $h_t:\C\to\bbR$ denote the
M\"obius transform of~$f_t$. We evaluate $h_t(\unit)$ as follows:
\begin{align*}
  h_t(\unit)&=\sum_{\gamma\in\C\tq
              a\in\gamma}(-1)^{|\gamma|}f_t(\gamma)+\sum_{\gamma\in\C\tq a\notin\gamma}(-1)^{|\gamma|}f_t(\gamma).
\end{align*}

On the one hand, observing the equality of sets
$\C'=\{\gamma\in\C\tq a\notin\gamma\}$, we recognize $h'(\unit)$ in
the second sum. On the other hand, using the notation already
introduced $a\parallel\gamma$, for $\gamma\in\C$, to denote that
$a\notin\gamma$ and $a\cdot\gamma\in\C$, the range of the first sum
above is in bijection with the set of cliques $\delta\in\C'$ such that
$a\parallel\delta$, the bijection associating $\delta$ with
$\gamma=a\cdot\delta$. We then have
$(-1)^{|\gamma|}f_t(\gamma)=(-t)\cdot (-1)^{|\delta|}f'(\delta)$.
Henceforth:
\begin{align*}
  h_t(\unit)&=(-t)\cdot K+h'(\unit),&\text{with }K&=1-K'\text{ and
                                                         }K'=\sum_{\delta\in\C'\tq
                                                             a\parallel\delta\wedge
                                                             \delta\neq\unit}(-1)^{|\delta|+1}f'(\delta).
\end{align*}

Let $X$ be a random trace associated with the sub-M\"obius
valuation~$f'$. We evaluate the probability of the event $U=\{\exists
b\in\Sigma'\tq a\parallel b\wedge a\leq X\}$. The inclusion-exclusion
principle yields:
\begin{gather*}
  \pr(U)=\sum_{\delta\in\C'\tq
    a\parallel\delta\wedge\delta\neq\unit}(-1)^{|\delta|+1}\pr(X\geq\delta)=K',
\end{gather*}
the later equality since $\pr(X\geq\delta)=f'(\delta)$ by definition
of~$X$. The event $U$ has probability less than~$1$, otherwise all
pieces of $\Sigma'$ would be parallel to~$a$, contradicting that $\M$
is irreducible. We deduce that $K'\in[0,1)$ and thus $K\in(0,1]$. In
particular, if $f_t$ is M\"obius, it entails that $h_t(\unit)=0$, and
since we have just seen that $K\neq0$, it implies that
$t=h'(\unit)/K$, proving the sought uniqueness.

Let us now prove the existence of a M\"obius extension~$f$. Let
$\M^{(\delta)}$ denote as above, for any clique $\delta\in\C$, the
sub-monoid of $\M$ generated by those letters $b\in\Sigma$ such that
$b\parallel\delta$. For each $\delta\in\C$, we introduce the formal
series:
\begin{gather*}
  G_\delta(t)=\sum_{x\in\M^{(\delta)}}f_t(x).
\end{gather*}

Now, let $t_0$ be the radius of convergence of the power series
$G_{\unit}(t)=\sum_{x\in\M}f_t(x)$.  Since all the power series
$G_\delta(t)$ have non negative coefficients, they satisfy
$G_\delta(t)\leq G_{\unit}(t)<\infty$ for all $t\in[0,t_0)$. In
particular, the radius of convergence $r_\delta$ of $G_\delta$
satisfies $r_\delta\geq t_0$. Actually, reasoning as in the proof of
point~\ref{item:9} and invoking the Perron-Frobenius Theorem, we see
that the strict inequality $r_\delta>t_0$ holds for all
$\delta\neq\unit$. Therefore, for all $\delta\neq\unit$, the series
$G_\delta(t_0)$ is convergent, and thus the M\'obius inversion formula
yields the following equality in the field of reals:
$G_\delta(t_0)\cdot h_{t_0}(\delta)=1$. We conclude that
$h_{t_0}(\delta)>0$ for all cliques $\delta\neq\unit$.  Since
$h_{t_0}(\unit)=0$, we conclude that $f_{t_0}$ is the sought M\"obius
valuation extending~$f'$.
\end{proof}

\section{The Probabilistic Synchronization Algorithm}
\label{sec:basic-synchr-algor}

In this section we consider a network of alphabets sharing some common
letters. We then wish to generate random traces of the synchronization
monoid, in a distributed and incremental way. The first idea that
comes in mind is to generate local Bernoulli sequences, and to see
what is the synchronization trace of these sequences. This constitutes
the Probabilistic Synchronization Algorithm, which is thus a
probabilistic variant of the Generalized Synchronization Algorithm
which was described in Algorithm~\ref{algo3}.

The random traces thus obtained can be either finite or infinite. In
all cases their probability distribution is multiplicative, hence the
theory of Bernoulli and sub-Bernoulli measures for trace monoids
is the adequate tool for their study. After having established this
rather easy result, we turn to specific examples. We obtain the non
trivial result that for path models, any Bernoulli or sub-Bernoulli
measure can be generated by this simple technique.

\subsection{Description of the \BSA}
\label{sec:descr-algor}

Let $(\Sigma_1,\ldots,\Sigma_N)$ be a network of $N$ alphabets with
$N\geq2$, let $\Sigma=\Sigma_1\cup\ldots\cup\Sigma_N$\,, and let
$\M=\M(\Sigma,I)$ be the synchronization trace monoid, as described
in Section~\ref{sec:synchr-alph-synchr}. Recall that we identify $\M$ with
the sub-monoid $\G\subseteq\H$ defined
in Section~\ref{sec:synchr-alph-synchr}, and that indices in $1,\dots,N$
are seen as resources.

Assume that, to each resource $i\in\{1,\ldots,N\}$\,, is attached a
device able to produce a $\Sigma_i$-Bernoulli sequences~$Y_i$, either finite
or infinite, with a specified probability or sub-probability
distribution $p_i$ on~$\Sigma_i$\,. As we have seen in
Section~\ref{sec:finite-bern-sequ}, the generation of such Bernoulli
sequences is effective, together with the information that the
sequence is over in case where $p_i$ is a sub-probability
distribution. 

Henceforth, it is straightforward for each device to produce a
sampling of~$Y_i$ under the form $Z_{i,1}\cdot Z_{i,2}\cdot\ldots$,
and moreover to tag each sub-sequence $Z_{i,k}$ with an additional
symbol $\dag_{i,k}\in\{\EOF,\WFI\}$ in order to deliver a sequence
$(\Zbar_{i,k})_{k\geq1}$ with $\Zbar_{i,k}=Z_{i,k}\cdot\dag_{i,k}$ as
specified in Section~\ref{sec:feed-synchr-algor}. The symbol
$\dag_{i,k}$ is given the value $\WFI$ until the device decides the
sequence is over (if it ever does), after which the symbol
$\dag_{i,k}$ is given the value~$\EOF$.

The Probabilistic Synchronization Algorithm (\BSA), the pseudo-code of
which is given below in Algorithm~\ref{algo4}, consists in executing
in parallel both the local generation of the sequences
$Y_1,\dots,Y_N$, and the Generalized Synchronization Algorithm
described previously in Algorithm~\ref{algo3}.

\begin{algorithm}
\caption{Probabilistic Synchronization Algorithm}
\label{algo4}
\begin{algorithmic}[1]
\Require $p_1,\dots,p_N$
\Comment{Probability or sub-probability distributions}
\While{Algorithm~\ref{algo3} does not exit}
\ForAll {$i\in\{1,\dots,N\}$}
\ForAll{$k=1,2,\dots$}
\State generate the $k^\text{th}$  sampling  $\Zbar_{i,k}$ of a
Bernoulli  sequence according to~$p_i$
\State{feed} Algorithm~\ref{algo3} with
$\Zbar_{i,k}$
\EndFor
\EndFor
\EndWhile
\end{algorithmic}
\end{algorithm}

\subsection{Analysis of the algorithm}
\label{sec:analysis-algorithm}

\subsubsection{Distribution of\/ \BSA\ random traces}
\label{sec:qualitative-result}

The trace $y\in\Mbar$, output of the execution of the \BSA\
(Algorithm~\ref{algo4}), is random. What is its distribution?

\begin{theorem}
  \label{thr:5}
  The probability distribution\/ $\pr$ of the random trace produced by the \BSA\ is a
  multiplicative probability measure on~$\Mbar$. The valuation
  $f:\M\to\bbR$ associated with this measure by $f(x)=\pr(\Up x)$ is
  such that:
  \begin{gather}
    \label{eq:9}
\forall a\in\Sigma\quad f(a)=\prod_{i\in R(a)}p_i(a)\,,
  \end{gather}
  where $p_i$ is the probability or sub-probability distribution
  on~$\Sigma_i$\,, and $R(a)$ is the set of resources associated
  with~$a$.
\end{theorem}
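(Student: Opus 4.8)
The plan is to identify the probability of each full elementary cylinder $\Up x$, $x\in\M$, with an explicit product over a sequentialization of $x$, and then to check that this product is multiplicative in $x$ and gives the announced value on generators. Recall that the \BSA\ runs the Generalized Synchronization Algorithm on the random vector $Y=(Y_1,\dots,Y_N)$ of independent local Bernoulli (or finite Bernoulli) sequences, and that its output is, by the analysis in Section~\ref{sec:feed-synchr-algor}, exactly the generalized synchronization trace $X=\bigvee\{Z\in\G\tq Z\leq Y\}$ of $Y$. So the random trace produced is $X$, and we must compute $\pr(X\geq x)$ for every $x\in\M$.

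The key step is the combinatorial identity
\begin{gather*}
  \{X\geq x\}=\bigcap_{i\in R}\{Y_i\text{ has the projection of }x\text{ on }\Sigma_i\text{ as a prefix}\}\,.
\end{gather*}
Here the projection $\pi_i(x)$ of a trace $x$ onto a resource $i$ is the word obtained by erasing from any representative word of $x$ all letters not in $\Sigma_i$; this is well defined on traces since two letters that commute in $\M$ never share a resource, so adjacent commutations in $\Sigma^*$ do not change any projection. The inclusion ``$\subseteq$'' is immediate: $x\leq X\leq Y$ in $\G\subseteq\H$ means precisely that $\pi_i(x)$ is a prefix of $Y_i$ for each $i$. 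For ``$\supseteq$'', observe that $x$, seen as $H(x)\in\G$, is the componentwise supremum in $\H$ of its projections, and therefore if $\pi_i(x)\leq Y_i$ for all $i$ then $H(x)\leq Y$ in $\H$; since $H(x)\in\G$, this gives $H(x)\leq X$ by the defining supremum~(\ref{eq:38}), i.e. $x\leq X$. Once this identity is in hand, independence of the devices gives
\begin{gather*}
  f(x)=\pr(X\geq x)=\prod_{i\in R}\pr\bigl(\pi_i(x)\leq Y_i\bigr)\,,
\end{gather*}
and for each $i$, $\pr(\pi_i(x)\leq Y_i)$ is the value on the cylinder $\Up{\pi_i(x)}$ of the Bernoulli or sub-Bernoulli measure on $(\Sigma_i)^*$ governing $Y_i$; by Theorem~\ref{thr:1} this value equals the multiplicative function $g_i(\pi_i(x))$ with $g_i(b)=p_i(b)$ for $b\in\Sigma_i$.

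It remains to assemble the pieces. Multiplicativity of $f$ follows from multiplicativity of each $g_i$ together with the fact that projection is a monoid morphism $\pi_i:\M\to(\Sigma_i)^*$, so $\pi_i(x\cdot y)=\pi_i(x)\cdot\pi_i(y)$; hence $f(x\cdot y)=\prod_i g_i(\pi_i(x))g_i(\pi_i(y))=f(x)f(y)$, and $f(\unit)=1$. Positivity $f(x)>0$ holds because each $p_i$ has positive coefficients. The value on a generator $a\in\Sigma$ is $f(a)=\prod_i g_i(\pi_i(a))=\prod_{i\in R(a)}p_i(a)$, since $\pi_i(a)=a$ when $i\in R(a)$ and $\pi_i(a)=\epsilon$ otherwise, with $g_i(\epsilon)=1$; this is exactly~(\ref{eq:9}). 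Thus $f$ is a valuation, and it remains only to note that the law $\pr$ of $X$ on $(\Mbar,\FFFbar)$ is a well-defined probability measure whose values on full cylinders are given by $f$ — which is precisely the statement that $\pr$ is multiplicative in the sense of Definition~\ref{def:1}. (That $X$ is $\FFFbar$-measurable is routine, since $X$ is the limit of the nondecreasing sequence of finite outputs of Algorithm~\ref{algo3}, each a measurable function of finitely many input letters.)

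The main obstacle is the proof of the cylinder identity $\{X\geq x\}=\bigcap_i\{\pi_i(x)\leq Y_i\}$, and in particular the ``$\supseteq$'' direction, which is where the characterization of $\G$ inside $\H$ — that $H(x)$ is recovered as the supremum of its projections, and that membership of a supremum in $\G$ forces it below $X$ — is really used; the rest is bookkeeping with morphisms and independence, together with an appeal to Theorem~\ref{thr:1} to read off the one-dimensional marginals.
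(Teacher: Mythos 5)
Your proposal is correct and follows essentially the same route as the paper's proof: the paper likewise rests on the equivalence that the output trace dominates $x\in\M$ if and only if each local sequence $Y_i$ starts with the $i$-th component of the representation of $x$ in $\G$ (your projection $\pi_i(x)$), and then concludes by independence of the devices and the multiplicative structure of the local Bernoulli or sub-Bernoulli measures. The only difference is presentational: you prove that key equivalence and the measurability point in detail, while the paper asserts them and writes the resulting product directly over a sequentialization $x=a_1\cdot\ldots\cdot a_j$.
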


\begin{proof}
  For each letter $a\in\Sigma$, let $q_a$ be the real number defined
  by:
\begin{gather*}
  q_a=\prod_{i\in R(a)}p_i(a)\,.
\end{gather*}

Let $y\in\Mbar$ be the random trace produced by the \BSA. Fix $z\in\M$
a trace, and let $(z_1,\ldots,z_N)$ be the representation of $z$
in~$\G$.  Then $y\geq z$ holds if and only if, for every index
$i\in\{1,\ldots,N\}$, the corresponding sequence $x_i$ starts
with~$z_i$\,.

Let $z=a_1\cdot\ldots\cdot a_j$ be any decomposition of $z$ as a
product of generators $a_k\in\Sigma$. Since each sequence $x_i$
produced by the device number~$i$ is Bernoulli or sub-Bernoulli with
distribution or sub-distribution~$p_i$\,, and since the sequences are
mutually independent, we have thus:
  \begin{align*}
    \pr(y\geq z)&=\prod_{k=1}^jq_{a_k}\,.
  \end{align*}

  In other words, if $f:\M\to\bbR$ is the valuation defined
  by~(\ref{eq:9}), one has $\pr(\Up z)=f(z)$. This shows that $\pr$ is
  multiplicative.
\end{proof}

\subsubsection{Small values}
\label{sec:small-values}

We shall see that not all multiplicative probabilities on $\Mbar$ can
be reached by this technique. However, as long as only ``small
values'' are concerned, the \BSA\ can reach any target multiplicative
measure. Let us first introduce the following convention: if
$f:\M\to\bbR$ is a valuation, and if $\ew$ is a real number, we denote
by $\varepsilon f$ the valuation which values on generators are given
by:
\begin{gather*}
  \forall a\in\Sigma\quad (\ew f)(a)=\ew f(a)\,.
\end{gather*}
Hence the value of $\ew f$ on arbitrary traces are given by:
\begin{gather*}
  \forall x\in\M\quad (\ew f)(x)=\ew^{|x|}f(x)\,.
\end{gather*}

The following result states that, at the expense of having small
synchronizing traces, the relative frequency of letters in traces
produced by the \BSA\ suffers from no constraint.

\begin{proposition}
  \label{prop:1}
  Let $(\Sigma_1,\ldots,\Sigma_N)$ be $N$ alphabets, let
  $\M=\M(\Sigma,I)$ be the synchronization trace monoid, and let
  $t:\M\to\bbR$ be a valuation on~$\M$. Then there exists $\ew>0$ and
  sub-probability distributions $(p_i)_{1\leq i\leq N}$\,, with $p_i$
  a sub-probability distribution on~$\Sigma_i$\,, such that the
  valuation $f$ characterizing the random trace $y\in\M$ produced by
  the \BSA\ with respect to $(p_i)_{1\leq i\leq N}$\,, satisfies:
  \begin{gather*}
    f=\ew t\,.
  \end{gather*}
\end{proposition}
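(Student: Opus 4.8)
The plan is to reverse-engineer the local sub-probability distributions $p_i$ from the target valuation $t$, using the freedom in the scaling factor $\ew$ to satisfy the sub-probability constraints. First I would fix notation: by Theorem~\ref{thr:5}, the valuation $f$ produced by the \BSA\ with local (sub-)distributions $(p_i)$ is characterized by $f(a)=\prod_{i\in R(a)}p_i(a)$ for each $a\in\Sigma$. So the task is to choose, for each $i$ and each $a\in\Sigma_i$, a positive real $p_i(a)$ with $\sum_{a\in\Sigma_i}p_i(a)<1$, such that $\prod_{i\in R(a)}p_i(a)=\ew\,t(a)$ for every $a\in\Sigma$, for some $\ew>0$ to be determined.

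The natural ansatz is to put $p_i(a)=\ew^{1/|R(a)|}\,t(a)^{1/|R(a)|}$ for each $a\in\Sigma_i$; this obviously gives $\prod_{i\in R(a)}p_i(a)=\ew\,t(a)$ regardless of $\ew$, since there are exactly $|R(a)|$ factors, one per resource containing $a$. With this choice, for each fixed $i$ we have
\begin{gather*}
  \sum_{a\in\Sigma_i}p_i(a)=\sum_{a\in\Sigma_i}\bigl(\ew\,t(a)\bigr)^{1/|R(a)|}.
\end{gather*}
Each term of this finite sum tends to $0$ as $\ew\to0^+$, because $t(a)>0$ and $1/|R(a)|>0$; hence the whole sum tends to $0$. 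Since there are finitely many resources $i\in\{1,\dots,N\}$, I can pick $\ew>0$ small enough that $\sum_{a\in\Sigma_i}p_i(a)<1$ holds simultaneously for all $i$. For that $\ew$, each $p_i$ is a genuine sub-probability distribution on $\Sigma_i$ with strictly positive coefficients, and the associated \BSA\ valuation $f$ satisfies $f(a)=\ew\,t(a)$ for all generators $a$, hence $f=\ew t$ as valuations on $\M$ (both being morphisms agreeing on generators). This yields exactly the claimed conclusion.

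There is essentially no hard obstacle here: the only point requiring a moment's care is that the $|R(a)|$-th root is applied consistently — one must note that $a$ belongs to precisely the alphabets $\Sigma_i$ with $i\in R(a)$, so each letter $a$ contributes one factor $p_i(a)=(\ew t(a))^{1/|R(a)|}$ to the product for each such $i$, and these multiply to $\ew t(a)$. One should also remark that $R(a)\neq\emptyset$ for every $a\in\Sigma$ (since $\Sigma=\Sigma_1\cup\dots\cup\Sigma_N$), so the exponent $1/|R(a)|$ is well defined and positive. The finiteness of both $\Sigma$ and the index set $\{1,\dots,N\}$ is what makes the ``small enough $\ew$'' argument work uniformly; since Proposition~\ref{prop:1} only asserts existence of \emph{some} $\ew>0$, no quantitative control is needed. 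Note finally that the resulting $f$ is automatically sub-M\"obius for $\ew$ small enough as well, consistently with Theorem~\ref{thr:2}, although the statement does not require us to verify this.
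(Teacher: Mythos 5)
Your proof is correct and follows essentially the same route as the paper: the same ansatz $p_i(a)=(\ew\,t(a))^{1/|R(a)|}$, the same observation that for $\ew$ small enough each $p_i$ becomes a sub-probability distribution, and the same appeal to Theorem~\ref{thr:5} to conclude $f=\ew t$. The extra remarks you add (finiteness of $\Sigma$ and of the index set, $R(a)\neq\emptyset$) are just explicit versions of what the paper leaves implicit.
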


\begin{proof}
  For $\ew>0$ to be specified later, let $(p_i)_{1\leq i\leq N}$ be
  the family of real valued functions, $p_i:\Sigma_i\to\bbR$\,,
  defined by:
  \begin{gather*}
    \forall a\in\Sigma_i\quad p_i(a)=\bigl(\ew t(a)\bigr)^{1/|R(a)|}\,.
  \end{gather*}

  For $\ew>0$ small enough, all $p_i$ are sub-probability
  distributions over~$\Sigma_i$\,. According to Th.~\ref{thr:5}, the
  valuation $f$ describing the distribution of the random trace produced by the
  \BSA\ satisfies:
  \begin{align*}
    f(a)&=\prod_{i\in R(a)}p_i(a)=\ew t(a)\,.
  \end{align*}
The proof is complete.
\end{proof}

\subsubsection{Reduction to irreducible trace monoids}
\label{sec:reduct-irred-trace}

Assume that the trace monoid is not irreducible. Hence the dependence
relation $(\Sigma,D)$ has several connected components, let us say
that it has two components $(S_1,D_1)$ and $(S_2,D_2)$ to simplify the
discussion. 

Let $I_1$ and $I_2$ be the dependence relations on $S_1$ and $S_2$
defined by:
\begin{align*}
  I_1&=(S_1\times S_1)\cap I=(S_1\times S_1)\setminus D_1\,,&
  I_2&=(S_2\times S_2)\cap I=(S_2\times S_2)\setminus D_2\,.&
\end{align*}

Putting $\M_1=\M(S_1,I_1)$ and $\M_2=\M(S_2,I_2)$, we have:
\begin{align*}
  \M&=\M_1\times\M_2\,,&\Mbar&=\overline{\M_1}\times\overline{\M_2}\,.
\end{align*}

It follows from Theorem~\ref{thr:5} that the distribution $\pr$ of the random
trace $y$ produced by the \BSA\ is a tensor product
$\pr_1\otimes\pr_2$; probabilistically speaking, $y$~is obtained as
the concatenation of two independent traces $y_1\in\M_1$ and
$y_2\in\M_2$\,.  The probability measures $\pr_1$ and~$\pr_2$\,, of $y_1$ and $y_2$
respectively, are identical as those deriving from \BSA\ algorithms
restricted to the alphabets concerning $S_1$ and $S_2$ respectively.

In conclusion, the \BSA\ algorithm decomposes as a product of sub-\BSA\
algorithms on the irreducible components of the synchronization trace
monoid. Hence there is no loss of generality in assuming, for the sake
of analysis, that the synchronization trace monoid is irreducible.

\subsection{Example: the path model}
\label{sec:examples}

The path model is close to the dimer model, a topic of numerous
studies in Combinatorics and in Statistical
Physics~\cite{viennot85,khanin10}. Here we shall see that the path
model is a framework where the \BSA\ works at its best, in the sense
that any multiplicative measure on $\Mbar$ can be obtained through the
\BSA.

The path model is defined as the trace monoid $\M=\M(\Sigma,I)$ on
$N+1$ generators: $\Sigma=\{a_0,\ldots,a_N\}$, where $N\geq0$ is a
fixed integer. The independence relation is defined by:
\begin{gather*}
  I=\{(a_i,a_j)\tq |i-j|>1\}\,.
\end{gather*}

Then $\M$ is the synchronization monoid of the $N$-tuple of alphabets
$(\Sigma_1,\ldots,\Sigma_N)$ with $\Sigma_i=\{a_{i-1},a_i\}$ for
$i\in\{1,\ldots,N\}$. The set of resources of $a_i$ is:
\begin{gather}
\label{eq:7}
  R(a_i)=
  \begin{cases}
\{1\},&\text{if   $i=0$},\\
\{N\},&\text{if $i=N$},\\
\{i,i+1\},&\text{if $0<i<N$.}
  \end{cases}
\end{gather}

Assume that each of the $N$ alphabets $\Sigma_1,\ldots,\Sigma_N$ is
equipped with a positive probability distribution, say
$p_1,\ldots,p_N$\,. Then, with probability~$1$, all tuples
$(x_1,\ldots,x_N)$, where $x_i$ is a Bernoulli infinite sequence
distributed according to~$p_i$\,, are synchronizing. Indeed, each
$x_i$ contains infinitely many occurrences of $a_{i-1}$ and
of~$a_i$\,, and this is enough to ensure the
synchronization. Therefore, the \BSA\ yields a multiplicative measure
entirely supported by the set of infinite traces of~$\M$, hence a
Bernoulli measure on~$\BM$.

Actually, the following result shows that \emph{every} Bernoulli
measure on the path model can be obtained through the execution of
the~\BSA.

\begin{theorem}
  \label{thr:6}
  Let\/ $\pr$ be a Bernoulli measure on~$\BM$, where $\M$ is the
  synchronization monoid associated with the path model. For
  $i\in\{0,\ldots,N\}$, put:
  \begin{gather*}
    t_i=\pr(\up a_i)\,.
  \end{gather*}

  Then there exists a unique tuple $(p_1,\ldots,p_N)$, with $p_i$ a
  positive probability distribution over $\Sigma_i=\{a_{i-1},a_i\}$\,,
  such that the random infinite trace $\xi\in\BM$ produced by the
  \BSA\ based on $(p_1,\ldots,p_N)$ has the distribution
  probability~$\pr$.

  The probability distributions $p_1,\ldots,p_N$ are computed
  recursively by:
\begin{align}
\label{eq:5}
  p_1(a_0)&=t_0&p_1(a_1)&=1-t_0\\
\label{eq:6}
i\in\{2,\ldots,N\}\quad  p_i(a_{i-1})&=\frac{t_{i-1}}{p_{i-1}(a_{i-1})}&p_i(a_i)&=1-p_i(a_{i-1})\,.
\end{align}
\end{theorem}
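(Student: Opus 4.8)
The plan is to verify that the recursively defined tuple $(p_1,\ldots,p_N)$ does the job, and then establish uniqueness. Recall from Theorem~\ref{thr:5} that the \BSA\ based on $(p_1,\ldots,p_N)$ produces a multiplicative measure whose valuation $f$ satisfies $f(a)=\prod_{i\in R(a)}p_i(a)$. Using the resources~(\ref{eq:7}), this reads $f(a_0)=p_1(a_0)$, $f(a_N)=p_N(a_N)$, and $f(a_i)=p_i(a_i)\cdot p_{i+1}(a_i)$ for $0<i<N$. So the first task is purely algebraic: show that the $p_i$ defined by~(\ref{eq:5})--(\ref{eq:6}) are genuine positive probability distributions on $\Sigma_i=\{a_{i-1},a_i\}$, and that they yield $f(a_i)=t_i$ for every $i$.

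For the second part I would argue by induction on~$i$. For $i=1$, positivity of $p_1(a_0)=t_0$ is clear since $\pr$ is Bernoulli (so $\pr(\up a_0)>0$), and $p_1(a_1)=1-t_0>0$ requires $t_0<1$; this holds because $\M$ is irreducible (no single generator exhausts the trace), so $\pr(\up a_0)<1$. The matching identity $f(a_0)=p_1(a_0)=t_0$ is immediate. For the inductive step, assuming $p_{i-1}$ is a positive probability distribution with $p_{i-1}(a_{i-1})>0$, equation~(\ref{eq:6}) defines $p_i(a_{i-1})=t_{i-1}/p_{i-1}(a_{i-1})$, which is positive; then $f(a_{i-1})=p_{i-1}(a_{i-1})\cdot p_i(a_{i-1})=t_{i-1}$ as required, and $p_i(a_i)=1-p_i(a_{i-1})$. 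Finally, for $i=N$ we additionally need $f(a_N)=p_N(a_N)=t_N$, which pins down the last value consistently. Uniqueness follows from the same recursion read forward: the constraints $f(a_0)=t_0$, $f(a_i)=t_i$, and the normalization $p_i(a_{i-1})+p_i(a_i)=1$ determine $p_1$, then $p_2$, and so on, with no freedom at any step.

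The main obstacle, and the only genuinely nonroutine point, is to show that the recursion does not break down, i.e.\ that $p_i(a_{i-1})<1$ (equivalently $p_i(a_i)>0$, equivalently $t_{i-1}<p_{i-1}(a_{i-1})$) at every step, and symmetrically that the terminal consistency $f(a_N)=t_N$ holds automatically rather than being an extra constraint. Both are facts about Bernoulli measures on the path model, not about the algorithm, so I would extract them from the structure of $\pr$. The clean way is to use the M\"obius valuation characterization: since $\pr$ is Bernoulli, its valuation $f$ with $f(a_i)=t_i$ must satisfy the M\"obius conditions~(\ref{eq:11}), namely $\mu_\M(t_0,\ldots,t_N)=0$ together with $\mu_{\M^{(c)}}(\ldots)>0$ for all nonempty cliques~$c$. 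For the path model, $\M^{(a_i)}$ is (up to the commuting generators $a_{i-1},a_i,a_{i+1}$ being removed) again a path model, and its multivariate M\"obius polynomial evaluated at the $t_j$'s being positive is exactly the statement that the partial products appearing in the recursion stay in $(0,1)$. Concretely, one shows by induction that $p_i(a_{i-1}) = 1 - \mu_i/\mu_{i-1}$ where $\mu_j$ is the M\"obius polynomial of the sub-path model on $\{a_0,\ldots,a_j\}$ evaluated at $(t_0,\ldots,t_j)$, so positivity of $p_i(a_i)$ is positivity of $\mu_i/\mu_{i-1}$, and the terminal equation $\mu_N=0$ (the M\"obius condition) forces $f(a_N)=t_N$ to be automatic. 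I would present this computation of the $\mu_j$ recursion as the technical heart, deriving $\mu_j = \mu_{j-1} - t_j\,\mu_{j-2}$ from the clique structure of the path model, and then reading off positivity from the fact that $\mu_j>0$ for $j<N$ (all being proper sub-monoids) while $\mu_N=0$.
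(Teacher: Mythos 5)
Your proposal follows essentially the same route as the paper's proof: uniqueness and the recursion~(\ref{eq:5})--(\ref{eq:6}) are read off from Theorem~\ref{thr:5} and the resource sets~(\ref{eq:7}), and existence is settled through the evaluated M\"obius polynomials $\mu_j$ of the initial sub-path monoids on $\{a_0,\dots,a_j\}$, their three-term recurrence $\mu_j=\mu_{j-1}-t_j\,\mu_{j-2}$, their positivity, and the vanishing $\mu_N=0$ which makes the terminal constraint $p_N(a_N)=t_N$ automatic. Two details to repair when writing it out: the closed form should be $p_i(a_{i-1})=1-\mu_{i-1}/\mu_{i-2}$ (your indices are shifted by one), and the positivity of $\mu_j$ for $j\le N-2$ does not follow from the single-piece clique $a_i$ (the polynomial of $\M^{(a_i)}$ factors as a product over two disjoint path segments) nor merely from $\M_{0,j}$ being a proper sub-monoid, but from choosing the cliques $c_j=a_{j+2}\cdot a_{j+4}\cdots$ so that $\M^{(c_j)}$ is exactly the initial segment, as the paper does (alternatively, invoke Theorem~\ref{thr:4}, point~\ref{item:9}, which applies since the path model is irreducible).
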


\begin{remark}
  The relations~(\ref{eq:5})--(\ref{eq:6}) are necessary conditions,
  almost immediate to establish from the result of
  Theorem~\ref{thr:5}. What is not obvious is that the numbers thus
  defined stay within $(0,1)$, yielding indeed probability
  distributions $\begin{pmatrix}p_i(a_{i-1})&p_i(a_i)\end{pmatrix}$ 
  over~$\Sigma_i$\,, and that $p_N(a_N)=t_N$.
\end{remark}

\begin{proof}
  Let $f:\M\to\bbR$ be the valuation associated with~$\pr$. According
  to Theorem~\ref{thr:2}, $f$~is a M\"obius valuation---we will use
  this fact in a moment.

  \emph{Proof of uniqueness of\/ $(p_1,\ldots,p_N)$ and proof
    of\/~\eqref{eq:5}--\eqref{eq:6}.}\quad With respect to a tuple
  $(p_1,\ldots,p_N)$ of positive probability distributions over
  $(\Sigma_1,\ldots,\Sigma_N)$\,, the \BSA\ produces a random infinite
  trace $\xi\in\BM$. Let $g:\M\to\bbR$ be the valuation associated
  with the probability distribution of~$\xi$.  Then, according to
  Theorem~\ref{thr:5}, one has:
  \begin{gather*}
\forall i\in\{0,\ldots,N\}\quad g(a_i)=\prod_{r\in R(a_i)}p_r(a_i)\,.
  \end{gather*}

  Referring to~(\ref{eq:7}), $f=g$ is equivalent to:
\begin{align}
\label{eq:14}
  t_0&=p_1(a_0)\\
\label{eq:15}
t_N&=p_N(a_N)\\
\label{eq:16}
0<i<N\qquad t_i&=p_i(a_i)p_{i+1}(a_i)
\end{align}

It follows at once that the tuple $(p_1,\ldots,p_N)$ inducing~$\pr$
through the \BSA, if it exists, is unique and satisfies necessarily
the recurrence relations~(\ref{eq:5})--(\ref{eq:6}).

\medskip \emph{Proof of existence of $(p_1,\ldots,p_N)$.}\quad Instead
of starting from the recurrence relations~(\ref{eq:5})--(\ref{eq:6}),
we use a different formulation. For $i\in\{-1,\ldots,N\}$, let
$\M_{0,i}$ be the sub-trace monoid of $\M$ generated by
$\{a_0,\ldots,a_i\}$. Let also $\mu_{0,i}$ be the evaluation on
$(t_0,\ldots,t_i)$ of the multivariate M\"obius polynomial
of~$\M_{0,i}$ (see Section~\ref{sec:multi-variate-mobius}). Hence:
\begin{align*}
\mu_{0,-1}&=1\\
  \mu_{0,0}&=1-t_0\\
\mu_{0,1}&=1-t_0-t_1\\
\mu_{0,2}&=1-t_0-t_1-t_2+t_0t_2\\
\mu_{0,3}&=1-t_0-t_1-t_2-t_3+t_0t_2+t_0t_3+t_1t_3\\
\mu_{0,4}&=1-t_0-t_1-t_2-t_3-t_4+t_0t_2+t_0t_3+t_0t_4+t_1t_3+t_1t_4+t_2t_4-t_0t_2t_4\\
&\cdots
\end{align*}

For any $i\in\{-1,\ldots,N-2\}$, the monoid $\M_{0,i}$ coincides with
the sub-monoid $\M^{(c_i)}$ as defined
in Section~\ref{sec:mobius-transform}, where $c_i$ is the following clique
of~$\M$:
\begin{gather*}
  c_i=\begin{cases} a_{i+2}\cdot a_{i+4}\cdot\ldots\ldots\cdot
    a_N\,,&\text{if $N-i\equiv 0\mod 2$,}\\
    a_{i+2}\cdot a_{i+4}\cdot\ldots\ldots\cdot a_{N-1}\,,&\text{if
      $N-i\equiv1\mod 2$.}
  \end{cases}
\end{gather*}

The clique $c_i$ is non empty as long as $i<N-1$.  Therefore,
according to~(\ref{eq:11}), the M\"obius conditions on $f$ ensure the
positivity of all numbers~$\mu_{0,i}$ for $i<N-1$.

Let $i\in\{0,\dots,N-2\}$. Any clique $\gamma$ of $\M_{0,i+1}$ either
belongs to~$\M_{0,i}$, or contains an occurrence of~$a_{i+1}$. In the
later case, this clique $\gamma$ is of the form
$\gamma=a_{i+1}\cdot\gamma'$, with $\gamma'$ ranging over cliques
of~$\M_{0,i-1}$. It follows at once that the following recurrence
relation holds:
\begin{align}
  \label{eq:13}
0\leq i<N-1\quad \mu_{0,i+1}&=\mu_{0,i}-t_{i+1}\mu_{0,i-1}
\end{align}

Similarly, any clique $\gamma$ of~$\M_{0,N}$ is either contained
in~$\M_{0,N-2}$\,, or it contains an occurrence of~$a_{N-1}$, in which
case it writes as $\gamma=a_{N-1}\cdot\gamma'$ with $\gamma'$ ranging
over cliques of~$\M_{0,N-3}$, or it contains occurrence of~$a_N$, in
which case it writes as $\gamma=a_N\cdot\gamma'$ with $\gamma'$
ranging over cliques of~$\M_{0,N-2}$. We deduce:
$\mu_{0,N}=\mu_{0,N-2}-t_{N-1}\mu_{0,N-3}-t_N\mu_{0,N-2}$.  But
$\mu_{0,N}$~is the evaluation on $(t_0,\dots,t_N)$ of the multivariate
M\"obius polynomial of $\M=\M_{0,N}$, hence $\mu_{0,N}=0$ since $f$ is
a M\"obius valuation. We obtain:
\begin{gather}
  \label{eq:3}
(1-t_N)\mu_{0,N-2}-t_{N-1}\mu_{0,N-3}=0.
\end{gather}

Now, since all the numbers $\mu_{0,i}$ for $i\in\{-1,\dots,N-2\}$ are
positive, we define the family $(p_i)_{1\leq i\leq N}$ as follows:
\begin{align}
  \label{eq:12}
p_1(a_0)&=t_0&p_1(a_1)&=1-t_0\\
\label{eq:17}
1<i< N\quad p_i(a_{i-1})&=t_{i-1}\frac{\mu_{0,i-3}}{\mu_{0,i-2}}&
p_i(a_i)&=\frac{\mu_{0,i-1}}{\mu_{0,i-2}}\\
\label{eq:18}
p_N(a_{N-1})&=1-t_N&p_N(a_N)&=t_N
\end{align}

All numbers appearing in~(\ref{eq:12}),
(\ref{eq:17})~and~(\ref{eq:18}) are positive, and
equations~(\ref{eq:14}) and~(\ref{eq:15}) are satisfied. As
for~(\ref{eq:16}), we write, for $1<i<N-1$:
\begin{align*}
p_i(a_i)p_{i+1}(a_i)&=\frac{\mu_{0,i-1}}{\mu_{0,i-2}}t_i\frac{\mu_{0,i-2}}{\mu_{0,i-1}}=t_i
\end{align*}
For $i=N-1$, we have:
\begin{align*}
  p_{N-1}(a_{N-1})p_N(a_{N-1})&=\frac{\mu_{0,N-2}}{\mu_{0,N-3}}(1-t_N)=t_{N-1},
\end{align*}
the later equality by~(\ref{eq:3}). We have shown so far
that~(\ref{eq:14}), (\ref{eq:15})~and~(\ref{eq:16}) are satisfied.

It remains to see that all
$\left(\xymatrix@1@C=1em{ p_i(a_{i-1})&p_i(a_i)}\right)$\,, for $i$
ranging over $\{2,\ldots,N-1\}$, are positive probability vectors,
since this is trivially true for $i=1$ and for $i=N$. We have already observed that
they are all positive vectors. For $1<i<N$, one has:
\begin{align*}
  p_i(a_i)+p_i(a_{i-1})&=\frac{\mu_{0,i-1}+t_{i-1}\mu_{0,i-3}}{\mu_{0,i-2}}=1
\end{align*}
by virtue of~(\ref{eq:13}). Hence each $p_i$ is indeed a positive
probability distribution on~$\Sigma_i$\,, which completes the proof.
\end{proof}

Theorem~\ref{thr:6} can be adapted to the case of sub-Bernoulli
measures instead of Bernoulli measures, still for the path model, as
follows. In a nutshell: every sub-Bernoulli measure can be simulated
by synchronization of sub-Bernoulli sequences, but there is no
uniqueness in the choice of the local sub-probability distributions.

\begin{theorem}
  \label{thr:8}
  Let\/ $\pr$ be a sub-Bernoulli measure on~$\M$, where $\M$ is the
  synchronization monoid associated with the path model. For
  $i\in\{0,\ldots,N\}$, put:
  \begin{gather*}
    t_i=\pr(\up a_i)\,.
  \end{gather*}

  Then there exists a tuple $(p_1,\ldots,p_N)$, with $p_i$ either a
  probability or a sub-probability distribution over
  $\Sigma_i=\{a_{i-1},a_i\}$\,, such that the random trace
  $\xi\in\Mbar$ produced by the \BSA\ based on $(p_1,\ldots,p_N)$ is
  finite with probability~$1$ and has the distribution\/~$\pr$.
\end{theorem}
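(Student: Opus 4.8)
The plan is to deduce the result from Theorem~\ref{thr:6} by enlarging the path with one extra generator. Let $f:\M\to\bbR_+^*$ be the valuation attached to $\pr$ by Definition~\ref{def:1}, so that $f(a_i)=t_i$ for all $i$; since $\pr$ is sub-Bernoulli, Theorem~\ref{thr:2} tells us that $f$ is sub-M\"obius. First I would dispose of the trivial cases $N\leq1$, where $\M$ is free on $\{a_0,\dots,a_N\}$: one simply takes the single sub-probability distribution $p_1(a_i)=t_i$, whose total mass is $\sum_i t_i=1-\mu_\M(t_0,\dots,t_N)<1$ by the sub-M\"obius condition. So assume from now on $N\geq2$.

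Next I would introduce the path model $\widetilde\M$ on the $N+2$ generators $a_0,\dots,a_{N+1}$, realized as the synchronization monoid of $(\Sigma_1,\dots,\Sigma_N,\Sigma_{N+1})$ with $\Sigma_{N+1}=\{a_N,a_{N+1}\}$; it is irreducible, and $\M$ is exactly the submonoid of $\widetilde\M$ generated by $\widetilde\Sigma\setminus\{a_{N+1}\}$. By point~\ref{item:10} of Theorem~\ref{thr:4}, the sub-M\"obius valuation $f$ extends to a (unique) M\"obius valuation $\widetilde f$ on~$\widetilde\M$, and Theorem~\ref{thr:2} attaches to $\widetilde f$ a Bernoulli measure $\widetilde{\pr}$ on~$\partial\widetilde\M$. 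Applying Theorem~\ref{thr:6} to $\widetilde\M$ produces positive probability distributions $\widetilde p_1,\dots,\widetilde p_{N+1}$, with $\widetilde p_i$ over~$\Sigma_i$, such that $\widetilde f(a)=\prod_{i\in\widetilde R(a)}\widetilde p_i(a)$ for every generator~$a$, $\widetilde R$ being the resource map of the $(N+1)$-alphabet network. I would then keep $p_i=\widetilde p_i$ for $1\leq i\leq N-1$ and set $p_N(a_{N-1})=\widetilde p_N(a_{N-1})$, $p_N(a_N)=t_N$. Since the resource sets of $a_0,\dots,a_{N-1}$ are unchanged when one passes from the $N$-alphabet to the $(N+1)$-alphabet network, whereas $R(a_N)=\{N\}$ while $\widetilde R(a_N)=\{N,N+1\}$, Theorem~\ref{thr:5} gives in one line that the valuation of the \BSA\ on $(\Sigma_1,\dots,\Sigma_N)$ run with $(p_1,\dots,p_N)$ agrees with $f$ on every generator (the only mildly interesting case being $a_{N-1}$, where $p_{N-1}(a_{N-1})p_N(a_{N-1})=\widetilde p_{N-1}(a_{N-1})\widetilde p_N(a_{N-1})=\widetilde f(a_{N-1})=t_{N-1}$), hence equals~$f$; by Theorem~\ref{thr:2} its output is then distributed as $\pr$ and is finite almost surely, $f$ being sub-M\"obius.

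The one point that is not bookkeeping is checking that $p_N$ is an honest sub-probability distribution. Both of its values are positive ($t_N=f(a_N)>0$ and $\widetilde p_N(a_{N-1})>0$), and its total mass is $\widetilde p_N(a_{N-1})+t_N$. The crux is the computation $t_N=\widetilde f(a_N)=\widetilde p_N(a_N)\,\widetilde p_{N+1}(a_N)<\widetilde p_N(a_N)$, where the strict inequality holds because $\widetilde p_{N+1}$, being a \emph{positive} probability distribution over $\{a_N,a_{N+1}\}$, puts strictly positive mass on $a_{N+1}$; hence $\widetilde p_N(a_{N-1})+t_N<\widetilde p_N(a_{N-1})+\widetilde p_N(a_N)=1$. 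This is exactly where the sub-Bernoulli hypothesis on $\pr$, equivalently $\mu_\M(t_0,\dots,t_N)>0$, is used. Non-uniqueness of the tuple is transparent from the construction: one could equally well keep $p_N$ a probability distribution and turn some $p_i$ with $i<N$ into a strict sub-probability distribution, moving the mass defect elsewhere along the path. An alternative, self-contained route avoiding Theorem~\ref{thr:4} is to reuse the recursive formulas~\eqref{eq:12}--\eqref{eq:17} from the proof of Theorem~\ref{thr:6}, now setting $p_N(a_{N-1})=t_{N-1}\mu_{0,N-3}/\mu_{0,N-2}$ and $p_N(a_N)=t_N$: the combinatorial identity behind~\eqref{eq:3} here reads $(1-t_N)\mu_{0,N-2}-t_{N-1}\mu_{0,N-3}=\mu_{0,N}=\mu_\M(t_0,\dots,t_N)>0$, which again forces $p_N$ to have total mass $1-\mu_{0,N}/\mu_{0,N-2}<1$.
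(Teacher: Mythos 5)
Your argument is correct, and it takes a genuinely different route from the paper's. The paper proves Theorem~\ref{thr:8} by rerunning the explicit construction from the proof of Theorem~\ref{thr:6}: it keeps $p_i(a_{i-1})=t_{i-1}\mu_{0,i-3}/\mu_{0,i-2}$ and $p_i(a_i)=\mu_{0,i-1}/\mu_{0,i-2}$ for $1<i<N$, changes only the last distribution to $p_N(a_{N-1})=t_{N-1}/p_{N-1}(a_{N-1})$, $p_N(a_N)=t_N$, and the single new verification is $p_N(a_{N-1})+p_N(a_N)\leq1\iff\mu_\M(t_0,\ldots,t_N)\geq0$ --- which is precisely the ``alternative self-contained route'' you sketch at the end. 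Your main argument instead bootstraps from the already-proved Bernoulli case: append a phantom generator $a_{N+1}$, extend the sub-M\"obius valuation $f$ to the unique M\"obius valuation $\widetilde f$ on the longer path (Theorem~\ref{thr:4}, point~\ref{item:10}, applicable since the extended path model is irreducible), realize $\widetilde f$ by Theorem~\ref{thr:6}, and truncate; the strict mass deficit of $p_N$ comes from $\widetilde p_{N+1}(a_N)<1$, and the identification of the output law with $\pr$ follows from Theorems~\ref{thr:5} and~\ref{thr:2} exactly as in the paper. What the paper's computation buys is explicit closed-form parameters --- reused verbatim in the worked example of Section~\ref{sec:example} --- and independence from the extension theorem, whose existence half is the least elementary ingredient of Section~\ref{sec:prel-prob}; what your route buys is brevity and a conceptual point (the sub-Bernoulli path case is the shadow of the Bernoulli case one size up), and it is the mirror image of what the paper itself does for the ring model in Theorem~\ref{thr:9}, where a M\"obius valuation is restricted rather than extended. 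Two cosmetic remarks: the parenthetical ``sub-Bernoulli, equivalently $\mu_\M(t_0,\ldots,t_N)>0$'' is loose, since sub-M\"obius also demands positivity of the M\"obius transform on every nonempty clique (your proof never relies on that loose equivalence, so nothing breaks); and the separate treatment of $N\leq1$ is harmless --- your main construction in fact already covers $N=1$, while $N=0$ is degenerate and outside the \BSA's standing assumption $N\geq2$ in any case.
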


\begin{proof}
  Using the same notations as in the proof of Theorem~\ref{thr:6}, we
  define $(p_i)_{1\leq i\leq N}$ as follows:
\begin{align}
\label{eq:32}
p_1(a_0)&=t_0&p_1(a_1)&=1-t_0\\
\label{eq:33}
1<i< N\quad p_i(a_{i-1})&=t_{i-1}\frac{\mu_{0,i-3}}{\mu_{0,i-2}}&
p_i(a_i)&=\frac{\mu_{0,i-1}}{\mu_{0,i-2}}\\
\label{eq:34}
p_N(a_{N-1})&=\frac{t_{N-1}}{p_{N-1}(a_{N-1})}&p_N(a_N)&=t_N
\end{align}

The only difference with the definitions introduced in the proof of
Theorem~\ref{thr:6} lies in~(\ref{eq:34}). As a consequence, all
$p_i(a_{i-1})$ and $p_i(a_i)$ are positive. They satisfy
$p_i(a_i)p_{i+1}(a_i)=t_i$ for all $i\in\{2,\ldots,N-1\}$, and
obviously $p_1(a_0)=t_0$ and $p_N(a_N)=t_N$\,. What remains to be
proved is that the sums $p_i(a_{i-1})+p_i(a_i)$ stay within $(0,1]$
for all $i\in\{1,\ldots,N\}$.

For $i\in\{1,\ldots,N-1\}$, the sum is~$1$, just as in the proof of
Theorem~\ref{thr:6}. And for $i=N$, we have:
\begin{align*}
  p_N(a_{N-1})+p_N(a_N)\leq1&\iff
  t_N+t_{N-1}\frac{\mu_{0,N-3}}{\mu_{0,N-2}}\leq1\\
&\iff \mu_{0,N-2}-t_N\mu_{0,N-2}-t_{N-1}\mu_{0,N-3}\geq0\\
&\iff \mu_\M(t_0,\ldots,t_N)\geq0
\end{align*}
The last condition is satisfied since $\pr$ is a sub-Bernoulli
measure. The proof is complete.
\end{proof}

\subsection{Finitary cases}
\label{sec:finitary-cases}

\subsubsection{Example of a ring model}
\label{sec:ring-example}

Consider the ring model already introduced in
Section~\ref{sec:illustrating-bsa-2}. The trace monoid is:
\begin{gather*}
  \M=\langle a_0,a_1,a_2,a_3\;|\; a_0a_2=a_2a_0,\ a_1a_3=a_3a_1\rangle.
\end{gather*}
This is the synchronization monoid associated with the network of
alphabets $(\Sigma_0,\Sigma_1,\Sigma_2,\Sigma_3)$ given by
$ \Sigma_0=\{a_3,a_0\}$, $\Sigma_1=\{a_0,a_1\}$,
$\Sigma_2=\{a_1,a_2\}$ and $\Sigma_3=\{a_2,a_3\}$.  Contrasting with
the path model, we shall see on an example that the \BSA\ for this
ring model with four generators produces finite traces with
probability~$1$.

Let $p_0,\ldots,p_3$ be uniform distributions on
$\Sigma_0,\ldots,\Sigma_3$\,.  According to Theorem~\ref{thr:5}, the
\BSA\ yields a multiplicative and uniform probability measure $\pr$ on
$\Mbar$ with parameter $p=(1/2)(1/2)=1/4$:
\begin{gather}
  \label{eq:8}
  \forall x\in\M\quad\pr(\Up x)=\Bigl(\frac14\Bigr)^{|x|}\,.
\end{gather}

The M\"obius polynomial of $\M$ is $\mu_\M(t)=1-4t+2t^2$\,, which root
of smallest modulus is $p_0=1-1/\sqrt2$. Since $1/4<p_0$, it follows
from Theorem~\ref{thr:3} that the \BSA\ produces a finite trace
$y\in\M$ with probability~$1$. 

The average length of $y$ is easily computed. Following the notations
introduced in Section~\ref{sec:bern-finite-bern-1}, we put
 $G(z)=\sum_{x\in\M}z^{|x|}$, $p=1/4$ and $\varepsilon=\mu_\M(p)$.
  Using that $\pr(y=x)=\varepsilon p^{|x|}$ and $G(z)=1/\mu_\M(z)$, we have:
\begin{gather*}
  \esp|y|=\sum_{x\in\M}|x|\pr(y=x)=\varepsilon p\frac{dG}{dz}\Bigr|_{[z=p]}=
  -p\frac{\mu_\M'(p)}{\mu_\M(p)}=6.
\end{gather*}

% \subsubsection{About the introductory example}
% \label{sec:about-intr-example}

% The reader may have recognized by now that the example described in
% the Introduction corresponds to the \BSA\ on the following alphabets:
% \begin{align*}
%   X&=\{a,b\}&Y&=\{b,c\}&Z&=\{a,c\}
% \end{align*}
% and this also corresponds to the ring example
% (Section~\ref{sec:ring-example}) with three alphabets instead of four. 

% On three alphabets however, the synchronization monoid is the free
% monoid over $\Sigma=\{a,b,c\}$. With each alphabet being equipped with
% the uniform distribution, the \BSA\ produces a random element
% $\xi\in\overline{\Sigma^*}$ such that the associated valuation
% $f:\Sigma^*\to\bbR$ is uniform: $f(x)=(1/4)^{|x|}$\,. Here the
% smallest root of the M\"obius polynomial $\mu_{\Sigma^*}(t)=1-3t$ is
% simply $p_0=1/3$, it corresponds to the uniform distribution
% over~$\Sigma$. Since $1/4<1/3$ we recover that the \BSA\ produces a
% random finite word $\xi\in\Sigma^*$ with probability~$1$.  The average
% length of the word it produces is:
% \begin{gather*}
%   \esp|\xi|=-p\frac{\mu_{\Sigma^*}'(p)}{\mu_{\Sigma^*}(p)}=3.
% \end{gather*}

\subsubsection{Generalization: trees and cycles}
\label{sec:gener-trees-cycl}

Since the probability distribution of the random element produced by
the \BSA\ is a multiplicative probability measure, we know by
Theorem~\ref{thr:2} that the trace is either finite with
probability~$1$ or infinite with probability~$1$. The examples studied
above show that both cases may occur indeed.

It turns out that the dichotomy is solved by a simple criterion, as
stated below.

\begin{proposition}
  \label{prop:3}
  Assume that the synchronization monoid $\M=\M(\Sigma,I)$ associated
  with a tuple of alphabets $(\Sigma_1,\ldots,\Sigma_N)$ is
  irreducible, and let $D$ be the dependence relation of~$\M$. Let
  $p_1,\ldots,p_N$ be positive probability distributions on
  $\Sigma_1,\ldots,\Sigma_N$ respectively. Then
  \begin{enumerate}
  \item\label{item:3} The \BSA\ based on $(p_1,\ldots,p_N)$ produces
    an infinite trace with probability~$1$ if and only if:
  \begin{enumerate}
  \item\label{item:1} $i\neq j\implies |\Sigma_i\cap\Sigma_j|\leq1$; and
  \item\label{item:2} the non oriented graph $(\Sigma,D)$ has no cycle.
  \end{enumerate}
\item\label{item:4} If the output $X$ of the \BSA\ is finite with
  probability~$1$, then the length $|X|$ has a finite average.
  \end{enumerate}
\end{proposition}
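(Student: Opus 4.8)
The plan is to route everything through Theorems~\ref{thr:5} and~\ref{thr:2} and then do combinatorics on the M\"obius polynomial. By Theorem~\ref{thr:5} the random output $X$ of the \BSA\ is a multiplicative measure on $\Mbar$ whose valuation is $f(a)=\prod_{i\in R(a)}p_i(a)$, as in~\eqref{eq:9}. By Theorem~\ref{thr:2} this valuation is either M\"obius or sub-M\"obius, the two cases being separated by the sign of $\varepsilon=\mu_\M\bigl(f(a_1),\dots,f(a_N)\bigr)$: if $\varepsilon=0$ then $X$ is infinite with probability~$1$, and if $\varepsilon>0$ then $X$ is finite with probability~$1$, with $\pr(X=x)=\varepsilon f(x)$ and $\sum_{x\in\M}f(x)=1/\varepsilon$. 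So Part~\ref{item:3} reduces to proving $\mu_\M(f)=0$ iff conditions~\ref{item:1} and~\ref{item:2} hold, and Part~\ref{item:4} is a statement purely about the sub-M\"obius regime $\varepsilon>0$.

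For the sufficiency in Part~\ref{item:3}: assume~\ref{item:1} and~\ref{item:2}. Condition~\ref{item:2} already forces every $\Sigma_i$ to have at most two letters (three letters in one alphabet would be a triangle in $(\Sigma,D)$), and with~\ref{item:1} the two-letter alphabets are exactly the edges of the forest $(\Sigma,D)$. I would then induct on $|\Sigma|$ by peeling a leaf $a$ of $(\Sigma,D)$: writing $\M'$ for the submonoid generated by $\Sigma\setminus\{a\}$, one obtains a recursion relating $\mu_\M$ to $\mu_{\M'}$ and to $\mu_{\M^{(a)}}$ of exactly the type of~\eqref{eq:13}; substituting $t_b=f(b)$ and using that the local distribution on the (at most two) resources of $a$ sums to~$1$ makes the recursion telescope to $\mu_\M(f)=0$. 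An equivalent but longer route is probabilistic: with all streams $Y_i$ infinite and the matching graph acyclic, Algorithm~\ref{algo1} never returns a deadlock flag, so the generalized synchronization trace is infinite almost surely.

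For the necessity in Part~\ref{item:3}, I would argue the contrapositive: if~\ref{item:1} fails or~\ref{item:2} fails, then $\mu_\M(f)>0$, i.e.\ $X$ is finite almost surely. The mechanism is a deadlock. If two distinct alphabets $\Sigma_i,\Sigma_j$ share two letters $a,b$, then with probability~$1$ the streams $Y_i$ and $Y_j$ eventually exhibit a matched pair of occurrences of $a$ and of $b$ in opposite orders, which permanently blocks the synchronization (the trace would need $a\prec b$ and $b\prec a$). If instead $(\Sigma,D)$ carries a cycle $a_{j_1}-a_{j_2}-\dots-a_{j_\ell}-a_{j_1}$, each consecutive pair lies in a common alphabet and with probability~$1$ the local orders eventually close up into a cyclic constraint $a_{j_1}\prec a_{j_2}\prec\dots\prec a_{j_\ell}\prec a_{j_1}$, again a deadlock; one concludes via the zero--one alternative of Theorem~\ref{thr:2}. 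I expect this to be \emph{the main obstacle}: one must describe exactly how occurrences on different resources are matched by Algorithm~\ref{algo1}, show the obstructing pattern occurs infinitely often (hence almost surely), and verify that the cyclic constraint genuinely kills every extension — taking care that the cycle one exploits spans enough resources to produce a true conflict, rather than an order already realised inside a single alphabet.

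For Part~\ref{item:4}: assume $X$ finite almost surely, so $\varepsilon=\mu_\M(f)>0$, and compute $\esp|X|=\sum_{x\in\M}|x|\,\varepsilon f(x)=\varepsilon\sum_{a\in\Sigma}f(a)\,\dfrac{\partial}{\partial t_a}\Bigl[\sum_{x\in\M}\prod_b t_b^{|x|_b}\Bigr]_{t=f}$, using $|x|=\sum_a|x|_a$. The multivariate series $\sum_x\prod_b t_b^{|x|_b}$ equals $1/\mu_\M(t_1,\dots,t_N)$, and $f$ lies in the interior of its domain of convergence: indeed $\mu_\M$ is a polynomial with $\mu_\M(\mathbf 0)=1>0$ and $\mu_\M(f)=\varepsilon>0$, and the segment $\{sf:0\le s\le1\}$ cannot meet the zero set of $\mu_\M$, since a zero at some $s_*<1$ would make the positive series $\sum_x s_*^{|x|}f(x)$ diverge and hence force $\sum_x f(x)=\infty$ by monotonicity, contradicting $\sum_x f(x)=1/\varepsilon<\infty$. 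Therefore the differentiation is legitimate and
\[
  \esp|X|=-\frac1\varepsilon\sum_{a\in\Sigma}f(a)\,\frac{\partial\mu_\M}{\partial t_a}\bigl(f(a_1),\dots,f(a_N)\bigr)<\infty,
\]
finite because $\mu_\M$ is a polynomial and $\varepsilon>0$ — the multivariate version of the identity $\esp|y|=-p\,\mu_\M'(p)/\mu_\M(p)$ already obtained for the ring example.
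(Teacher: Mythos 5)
Your architecture is largely the paper's own: the reduction through Theorems~\ref{thr:5} and~\ref{thr:2} (the \BSA\ output is multiplicative, and the finite/infinite dichotomy is governed by $\varepsilon=\mu_\M\bigl(f(a_1),\dots,f(a_N)\bigr)$), the necessity direction via almost-sure blocking patterns (two shared letters forcing incompatible orders, a cycle forcing a circular wait), and the finiteness of $\esp|X|$ via the generating series $1/\mu_\M$ all coincide with the paper's proof. Two remarks of degree rather than substance: for part~\ref{item:4} the paper works with the univariate series $G(\lambda)=\sum_{x\in\M}\lambda^{|x|}f(x)$ and invokes Pringsheim's theorem; your observation that $\mu_\M$ has no zero on the segment $\{sf:0\le s\le1\}$ is correct, but by itself it does not place $f$ in the interior of the domain of convergence --- you still need the Pringsheim-type step that the dominant positive singularity of a nonnegative series which formally inverts $\mu_\M$ must sit at a zero of $\mu_\M$, which is exactly the ingredient the paper supplies. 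For the necessity in part~\ref{item:3}, your ``main obstacle'' is treated in the paper at essentially the same informal level as your sketch, so you are not missing anything the paper makes rigorous.

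The genuinely different piece is your sufficiency argument, and there the gap is real. The paper does not prove $\mu_\M(f)=0$ combinatorially: it argues directly that under conditions~\ref{item:1} and~\ref{item:2} every letter recurs infinitely often in every local stream and the synchronization trace is therefore infinite almost surely --- i.e.\ the route you set aside as ``longer'' is the paper's (short) proof, and the vanishing of $\mu_\M$ at $f$ is then a \emph{consequence} via Theorem~\ref{thr:2}, not an input. Your leaf-peeling induction, as stated, does not telescope. Deleting a leaf $a$ whose edge alphabet is $\{a,b\}$ gives $\mu_\M(t)=\mu_{\M'}(t)-t_a\,\mu_{\M''}(t)$, with $\M'$ generated by $\Sigma\setminus\{a\}$ and $\M''$ by $\Sigma\setminus\{a,b\}$; but the restriction of $f$ to $\Sigma\setminus\{a\}$ is \emph{not} the valuation attached to the reduced network, because the neighbor $b$ retains the factor $p_{\{a,b\}}(b)<1$ coming from the deleted edge. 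So the naive induction hypothesis ``$\mu_{\M'}$ vanishes at the restricted point'' is simply false: by Theorem~\ref{thr:4} point~\ref{item:9}, restrictions of M\"obius valuations are sub-M\"obius, so that quantity is strictly positive. To make the induction work you must strengthen the statement and propagate two quantities per rooted subtree, exactly as the paper does for the path model with the pair $\mu_{0,i},\mu_{0,i-1}$ and the recursion~\eqref{eq:13} in the proof of Theorem~\ref{thr:6}; a rooted-tree version of that computation does close the argument, but it is a genuine piece of work that your single sentence does not supply. As written, your primary sufficiency argument is incomplete and is rescued only by the probabilistic alternative you mention in passing (where, note, you also need condition~\ref{item:1}, not just acyclicity, to exclude deadlocks).
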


\begin{proof}
  Assume first that both conditions~\ref{item:1} and~\ref{item:2} are
  met. Let $Y=(Y_i)_{i\in\Sigma}$ be a tuple of infinite sequences
  $Y_i\in(\Sigma_i)^\omega$.  With probability~$1$, for any distinct
  $i$ and~$j$ such that $\Sigma_i\cap\Sigma_j\neq\emptyset$, the
  sequence $Y_i$~has infinitely many occurrences of the unique element
  belonging to $\Sigma_i\cap\Sigma_j$. Just as in the case of the path
  model, this is enough to guarantee that the synchronization trace of
  $Y$ is infinite.

  Conversely, assume that one of conditions~\ref{item:1}
  and~\ref{item:2} is not met, for instance
  condition~\ref{item:1}. Let $i$ and $j$ be distinct indices such
  that $\Sigma_i\cap\Sigma_j$ has at least two distinct elements $a$
  and~$b$. Let $Y=(Y_r)_{r\in\Sigma}$ be a random vector of sequences
  such that the synchronization trace of $Y$ is infinite. Since $\M$
  is assumed to be irreducible, we observe that, with probability~$1$,
  if the synchronization trace of $Y$ is infinite then all coordinates
  of $Y$ are infinite. Hence there is no loss of generality in
  assuming that both coordinate $Y_i$ and $Y_j$ are infinite. Then the
  order of occurrences of $a$ and $b$ in both coordinates must be the
  same, which has probability~$0$ to occur.

  Finally, assume that condition~\ref{item:2} is not met, hence the
  presence of a cycle $(r_1,r_2,\dots,r_k)$ in the dependence
  relation. We assume without loss of generality that $r_1,\dots,r_{k-1}$
  are pairwise disjoint. Let $a_1\in\Sigma_{r_1}\cap\Sigma_{r_2}$,
  $a_2\in\Sigma_{r_2}\cap\Sigma_{r_3},\dots$,
  $a_k\in\sigma_{r_k}\cap\Sigma_{r_1}$. Focusing on the coordinates
  $r_1,\dots,r_k$ of $Y$ only, a pattern of the form
  $(a_1a_k,a_2a_1,a_3a_2,\dots,a_{k-1}a_{k-2},a_ka_{k-1})$ shall occur
  with probability~$1$. Since such a pattern is blocking the
  synchronization, it follows that the synchronization trace is finite
  with probability~$1$.

We have proved so far the equivalence stated in point~\ref{item:3}. We
now come to the proof of point~\ref{item:4}, and assume that $|X|<\infty$
with probability~$1$. The average of $|X|$ is computed as the
following mathematical expectation:
\begin{gather*}
  \esp|X|=\sum_{x\in\M}|x|\pr(X=x)=\varepsilon\sum_{x\in\M\setminus\{\unit\}}|x|f(x),
\end{gather*}
where $f$ is the sub-M\"obius valuation associated with~$\pr$, and
$\varepsilon$ is the constant given by the M\"obius transform of $f$
evaluated at the empty heap (see Section~\ref{sec:mult-prob-meas}).
Let $G(\lambda)$ be the power series:
\begin{gather*}
  G(\lambda)=\sum_{x\in\M}\lambda^{|x|}f(x)
\end{gather*}
Then $G(1)=\varepsilon^{-1}<\infty$ according to Theorem~\ref{thr:1}
point~\ref{item:6}, and since $G$ has non negative coefficients, it
implies by the Pringsheim Theorem that the radius of convergence of $G$ is
greater than~$1$. Hence so does its derivative, and thus
$G'(1)<\infty$. Since $\esp|X|=\varepsilon G'(1)$, the result of
point~\ref{item:4} follows.
\end{proof}

\section{The Probabilistic Full Synchronization Algorithm}
\label{sec:iter-meas-full}

The \BSA\ produces random traces, either finite or infinite. 
Proposition~\ref{prop:3} shows that the ability of the \BSA\ to
produce finite or infinite traces does not depend on the probabilistic
parameters one chooses to equip the local alphabets with. It rather depends
on the structure of the synchronization monoid.

To be sound, testing procedures and statistical averaging techniques
require arbitrary large traces, which binds us to the mathematical
model of infinite traces. In case where the \BSA\ fails to produce
infinite traces, we are thus left with an unsolved problem. Yet, we
can produce finite traces\ldots\ and the most natural thing to try
from there, is to start the \BSA\ over and over, and to concatenate
the finite random traces obtained at each execution of the \BSA. The
limiting trace is infinite and random, couldn't it just be the one we
were looking for?

It is quite surprising to realize that this strategy fails in
general. Of course the unlimited concatenation of finite traces, with
a positive average length, necessarily produces an infinite trace. But
the failure comes from the \emph{distribution} of this random infinite
trace. We will show on an example below that it is \emph{not} a
Bernoulli measure in general. In particular, the uniform measure is
thus unreachable by this technique.

Nevertheless, we introduce an algorithm based on recursive
concatenation of finite random traces and that outputs, if executed
indefinitely, an infinite trace which is always distributed according
to a Bernoulli measure. The intermediate, finite random traces,
are obtained by a trial-and-reject procedure based on the \BSA. The
whole procedure constitutes the Probabilistic Full Synchronization
Algorithm (\FSA). In Section~\ref{sec:examples-1}, we show that any
Bernoulli measure can be simulated by the output of the \FSA\ for the
ring model.

\subsection{Convolution of probability distributions and random walks}
\label{sec:conv-prob-distr}

\subsubsection{Convolution and random walks}
\label{sec:conv-rand-walks}

We recall the general definition of convolution for a countable
monoid~$\M$. Let $\nu$ and $\theta$ be two probability distributions
on~$\M$. Assume that $X$ and $Y$ are two independent random variables
with values in~$\M$, distributed according to $\nu$ and to $\theta$
respectively. Then the \emph{convolution $\nu*\theta$} is the distribution of
the random variable~$X\cdot Y$, and it is given by the Cauchy product
formula:
\begin{gather*}
\forall x\in\M\quad  \nu*\theta(x)=\sum_{(y,z)\in\M\tq y\cdot z=x}\nu(y)\cdot\theta(z).
\end{gather*}
The convolution product is associative.

Given a probability distribution $\nu$ over~$\M$, let $(X_n)_{n\geq1}$
be a sequence of independent random variables with values in~$\M$, and
identically distributed according to~$\nu$. The \emph{random walk}
associated with $\nu$ is the sequence of random variables
$(Y_n)_{n\geq0}$ defined by $Y_0=\unit$, the unit element of the
monoid, and inductively: $Y_{n+1}=Y_n\cdot X_{n+1}$ for all
integers~$n$. If $\nu_n$ denotes the distribution of~$Y_n$, we have
$\nu_n=\nu^{*(n)}$ for all integers~$n$, the $n^\text{th}$
convolution power of $\nu$ with itself.

Each trajectory $(Y_n)_{n\geq0}$ of the random walk is nondecreasing
for the divisibility relation in the monoid. Hence, if we assume now
that $\M$ is a trace monoid, the nondecreasing sequence
$(Y_n)_{n\geq1}$ has a least upper bound in the completion~$\Mbar$,
say $Y_\infty=\bigvee_{n\geq1}Y_n$. We introduce the notation
$\nu^{*\infty}$ for the probability distribution of~$Y_\infty$, which we call
the \emph{limit distribution of the random walk} (it is also called the
harmonic measure of the random walk).

\subsubsection{An example where the concatenation of\/ \BSA\ traces does
  not yield a limit multiplicative measure}
\label{sec:negative-result}

We consider the ring synchronization monoid on four generators
$\M=\langle a_0,a_1,a_2,a_3\;|\; a_0a_2=a_2a_0,\ a_1a_3=a_3a_1\rangle$,
corresponding to the network of alphabets
$(\Sigma_0,\Sigma_1,\Sigma_2,\Sigma_3)$ with $\Sigma_0=\{a_3,a_0\}$,
$\Sigma_1=\{a_0,a_1\}$, $\Sigma_2=\{a_1,a_2\}$ and
$\Sigma_3=\{a_2,a_3\}$.  Each alphabet~$\Sigma_i$\,, for
$i=0,\ldots,3$, is equipped with the uniform probability
distribution. Let $\nu^{*\infty}$ be the distribution on $\BM$ of the infinite
trace obtained by concatenating infinitely many independent copies of
a finite trace generated by the \BSA. Then we claim: \emph{the limit
  distribution $\nu^{*\infty}$ is not Bernoulli.}

Seeking a contradiction, assume that it is. Clearly,
$\nu^{*\infty}$~is concentrated on the boundary~$\BM$. And for
symmetry reasons, it is necessarily the uniform distribution,
and thus given by $\nu^{*\infty}(\up x)=p_0^{|x|}$ where $p_0$ is
the root of smallest modulus of the M\"obius polynomial
$\mu(z)=1-4z+2z^2$.

We extend the concatenation of traces $x\cdot y$ with
$(x,y)\in\M\times\M$ to the case where $y$ is an infinite trace by
putting $x\cdot y=\bigvee\{x\cdot y_n\tq n\geq1\}$ for
$y=\bigvee\{y_n\tq n\geq1\}$, and this definition does not depend on
the choice of the sequence~$(y_n)_{n\geq1}$. This yields also an
extension of the notion of convolution $\nu*\theta$ to the case where
$\nu$ is concentrated on~$\M$, but $\theta$ might be a probability
distribution on~$\Mbar$. The construction of $\nu^{*\infty}$ implies
the fix point property $\nu_p*\nu^{*\infty}=\nu^{*\infty}$, where
$\nu_p(\Up x)=p^{|x|}$ is the distribution of the \BSA, here given by $p=1/4$.
In particular for the cylinder $\up a_0$\,, this yields:
\begin{align*}
  p_0&=\sum_{k\geq0}\nu_p(\{a_2^k\})\nu^{*\infty}(\up a_0)+\nu_p(\Up
  a_0)=p_0(1-4p+2p^2)\frac1{1-p}+p
\end{align*}

Simplifying by $p\neq0$, we obtain: $ p=(3p_0-1)/(2p_0-1)$, and since
$1-4p_0+2p_0^2=0$, it yields $p=p_0$, a contradiction. Actually, we
have shown the strongest result that no random walk based on the
distributions $\nu_p(\Up x)=p^{|x|}$ with $p\in(0,p_0)$ has a
Bernoulli measure as limit distribution~$\nu_p^{*\infty}$.

\subsection{First hitting times and pyramidal heaps}
\label{sec:first-hitting-times-2}

\subsubsection{First hitting times}
\label{sec:first-hitting-times-3}

First hitting times for random heaps formalize the idea of the first
time of occurrence of a given piece---yet, without an explicit notion
of time at hand. It generalizes to random heaps the analogous notion,
for a Bernoulli sequence, of first time of reaching a given letter.

Let $\M=\M(\Sigma,I)$ be a trace monoid, and let $a\in\Sigma$ be a
given letter. The number of occurrences of $a$ in the congruent words
defining a trace $x$ is constant, and depends thus only on the
trace~$x$. We denote it~$|x|_a$. For any infinite trace $\xi\in\BM$,
let $L_a(\xi)=\{x\in\M\tq x\leq\xi\wedge|x|_a>0\}$. If non empty, the
set $L_a(\xi)$ has a minimum which we denote by $V_a(\xi)$, and it
satisfies $|V_a(\xi)|_a=1$. Intuitively, $V_a(\xi)$~represents the
smallest sub-trace of $\xi$ with at least an occurrence of~$a$.

If $\BM$ is equipped with a Bernoulli measure~$\pr$, then
$L_a(\xi)\neq\emptyset$ with probability~$1$. Hence, neglecting a set
of zero probability, we may assume that $V_a:\BM\to\M$ is well
defined. The mapping $V_a$ is called the \emph{first hitting time
  of~$a$}. The \emph{distribution of the first hitting time of~$a$} is the
probability distribution of the random variable~$V_a$. It is a
discrete probability distribution on~$\M$, which we denote by~$\pr_a$,
and which is defined by $\pr_a(x)=\pr(V_a=x)$ for all $x\in\M$.

We will base our random generation of infinite heaps on the following
result.

\begin{theorem}
  \label{thr:10}
  Let\/ $\pr$ be a Bernoulli measure equipping the boundary $\BM$ of
  an irreducible trace monoid $\M=\M(\Sigma,I)$. Let $a\in\Sigma$, and
  let\/ $\pr_a$ be the distribution of the first hitting time of~$a$. Then\/
  $\pr_a^{*\infty}=\pr$, where\/ $\pr_a^{*\infty}$ is the limit distribution of
  the random walk on $\M$ associated with\/~$\pr_a$.
\end{theorem}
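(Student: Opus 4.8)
**Proof plan for Theorem 10 ($\pr_a^{*\infty} = \pr$).**

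The plan is to identify $\pr_a^{*\infty}$ with $\pr$ by exploiting the first-hitting-time decomposition of the infinite trace $\xi$ together with the multiplicativity of $\pr$. First I would unwind the random-walk construction: let $X_1, X_2, \ldots$ be i.i.d.\ copies of $V_a$, and set $Y_n = X_1 \cdot \ldots \cdot X_n$, with $Y_\infty = \bigvee_n Y_n$, whose law is $\pr_a^{*\infty}$. The key structural observation is that, under $\pr$ itself, the infinite trace $\xi$ can be decomposed canonically into an infinite concatenation $\xi = \zeta_1 \cdot \zeta_2 \cdots$, where $\zeta_1 = V_a(\xi)$ is the first hitting time of $a$ in $\xi$, and inductively $\zeta_{k+1}$ is the first hitting time of $a$ in the ``residual'' infinite trace $\xi_k$ defined by $\xi = (\zeta_1 \cdot \ldots \cdot \zeta_k) \cdot \xi_k$. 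Since $\M$ is irreducible and $\pr$ is Bernoulli, $a$ occurs infinitely often in $\xi$ almost surely, so this decomposition is well defined a.s.\ and exhausts $\xi$, i.e.\ $\xi = \bigvee_k (\zeta_1 \cdot \ldots \cdot \zeta_k)$.

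The heart of the argument is then to show that, under $\pr$, the blocks $\zeta_1, \zeta_2, \ldots$ are i.i.d.\ with common law $\pr_a$. This is a regeneration / strong-Markov-type property for Bernoulli measures on trace monoids. I would establish it by a cylinder computation: for traces $x_1, \ldots, x_k \in \M$ each containing exactly one occurrence of $a$ and each of the form of a possible value of $V_a$, one checks that
\begin{gather*}
  \pr(\zeta_1 = x_1, \ldots, \zeta_k = x_k) = \pr_a(x_1) \cdots \pr_a(x_k).
\end{gather*}
The mechanism is that the event $\{V_a(\xi) = x_1\}$ depends only on the ``past'' up to the first $a$, and on this event $\xi = x_1 \cdot \xi_1$ with $\xi_1$ again $\pr$-distributed and independent of $x_1$ — this independence being exactly the multiplicative property $\pr(\Up(x \cdot y)) = \pr(\Up x)\pr(\Up y)$ translated to the level of the hitting-time filtration. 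Concretely, $\{V_a = x_1\}$ is a disjoint union over the finite-trace ``reasons'' why $a$ has not yet appeared, and multiplicativity lets one factor $\pr(\Up(x_1 \cdot z)) = \pr(\Up x_1)\pr(\Up z)$ uniformly; summing gives the product formula, and iterating yields the full i.i.d.\ statement.

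Granting the regeneration property, the conclusion is immediate: the law of $\xi$ under $\pr$ is the law of $\bigvee_k (\zeta_1 \cdot \ldots \cdot \zeta_k)$ where the $\zeta_i$ are i.i.d.\ $\pr_a$, which is by definition the limit distribution $\pr_a^{*\infty}$ of the random walk driven by $\pr_a$; hence $\pr_a^{*\infty} = \pr$. I expect the main obstacle to be the rigorous justification of the regeneration step, specifically making precise the ``residual trace is again $\pr$-distributed and independent of the first block'' claim. The subtlety is that $V_a(\xi)$ is not simply a prefix determined by a stopping rule on a linear sequence: its complement in $\xi$ interacts with it through the partial-commutation structure, so one must carefully argue that conditioning on $\{V_a(\xi) = x_1\}$ (an event in $\FFF$) leaves the ``upper part'' $\xi_1$ with the conditional law $\pr(\Up \cdot \mid \xi = x_1 \cdot \xi_1)$ equal to $\pr$ — and this is precisely where the \emph{multiplicativity} of $\pr$, rather than just its being a probability measure, is indispensable. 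A clean way to package this is to first prove the one-step identity $\pr = \pr_a * \pr$ (i.e.\ $\pr$ is a fixed point of convolution by $\pr_a$, viewing $\pr_a$ as concentrated on $\M$ and $\pr$ on $\Mbar$), then invoke uniqueness of the limit distribution of the random walk — or simply iterate the fixed-point identity $n$ times and pass to the limit $n \to \infty$, using that $|\,Y_n| \to \infty$ a.s.\ because $\pr_a$ gives traces of positive (indeed finite, nonzero) expected length.
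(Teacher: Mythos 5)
Your plan is correct and follows essentially the same route as the paper's own (sketched) proof: the paper cites the iterated stopping times $V^n$ built from $V_a$, the fact that $V^n$ is distributed as the $n$-th step of the random walk driven by $\pr_a$, and the exhaustiveness of this sequence under irreducibility, which is exactly the regeneration-plus-exhaustion decomposition $\xi=\zeta_1\cdot\zeta_2\cdots$ you describe. The only difference is that you propose to re-derive these cited facts directly from multiplicativity (via $\{V_a=x\}=\up x$ for pyramidal $x$), which is a legitimate filling-in of the paper's sketch rather than a different approach.
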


\begin{proof}[Sketch of proof]
  Let $(V^n)_{n\geq0}$ be the sequence of iterated stopping times
  associated with the first hitting time~$V_a$, as defined
  in~\cite[Def.~5.2]{abbes15}.  Under the probability~$\pr$, it
  follows from~\cite[Prop.~5.3]{abbes15} that, for each integer~$n$,
  $V^n$~has the same distribution as the $n^\text{th}$ step of the
  random walk associated with~$\pr_a$. Since $\M$ is assumed to be
  irreducible, the sequence $(V^n)_{n\geq0}$ is exhaustive as defined
  in~\cite[Def.~5.5]{abbes15}, from which the result derives.
\end{proof}

As a consequence, if we can simulate the distribution $\pr_a$ of the
first hitting time of some piece~$a$, we will be able to simulate a
$\pr$-distributed infinite random heap. The improvement lies in the
fact that first hitting times are finite heaps. Our next task consists
thus in studying more closely the distribution of the first hitting time, after
which we shall see how to simulate it.

\subsubsection{Pyramidal heaps and the distribution of the first hitting time}
\label{sec:pyramidal-heaps-1}

Recall that any trace has a interpretation as a labeled partial order
of pieces (see Section~\ref{sec:traces-as-labelled}). A trace $x$ is
\emph{pyramidal} if, as labeled partially ordered set, it has a
unique maximal element (a notion introduced by
Viennot~\cite{viennot86}). Any trace of the form $x=V_a(\xi)$ for some
$\xi\in\BM$ is pyramidal, with its unique occurrence of $a$ as its
unique maximal piece.

\begin{figure}
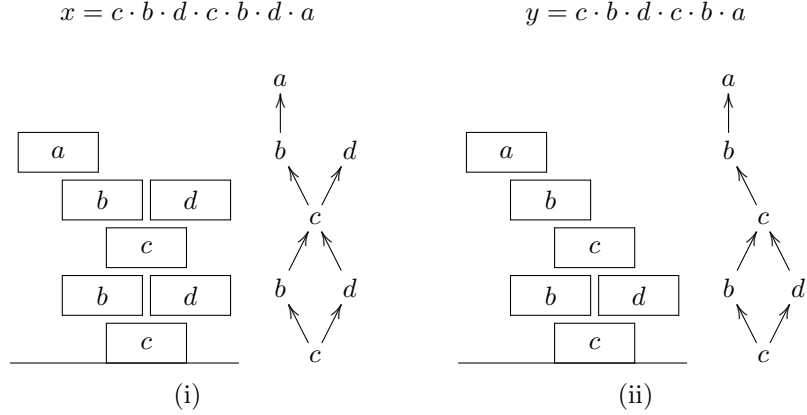

\begin{align*}
\begin{gathered}
 x=c\cdot b\cdot d\cdot c\cdot b\cdot d\cdot a\\[1em]
  \xy<.15em,0em>: (-2,0);(55,0)**@{-}, (0,48)="G",
  @={"G","G"+(20,0),"G"+(20,10),"G"+(0,10)}, s0="prev"
  @@{;"prev";**@{-}="prev"}, "G"+(10,5)*{a}, (22,0)="G",
  @={"G","G"+(20,0),"G"+(20,10),"G"+(0,10)}, s0="prev"
  @@{;"prev";**@{-}="prev"}, "G"+(10,5)*{c}, (11,12)="G",
  @={"G","G"+(20,0),"G"+(20,10),"G"+(0,10)}, s0="prev"
  @@{;"prev";**@{-}="prev"}, "G"+(10,5)*{b}, (33,12)="G",
  @={"G","G"+(20,0),"G"+(20,10),"G"+(0,10)}, s0="prev"
  @@{;"prev";**@{-}="prev"}, "G"+(10,5)*{d}, (22,24)="G",
  @={"G","G"+(20,0),"G"+(20,10),"G"+(0,10)}, s0="prev"
  @@{;"prev";**@{-}="prev"}, "G"+(10,5)*{c}, (11,36)="G",
  @={"G","G"+(20,0),"G"+(20,10),"G"+(0,10)}, s0="prev"
  @@{;"prev";**@{-}="prev"}, "G"+(10,5)*{b}, (33,36)="G",
  @={"G","G"+(20,0),"G"+(20,10),"G"+(0,10)}, s0="prev"
  @@{;"prev";**@{-}="prev"}, "G"+(10,5)*{d},
\endxy 
\quad \xy<.13em,0em>:
(10,0)*+{c};(20,20)*+{d}?>*\dir{>}**@{-},(20,20)*+{\phantom{d}};(10,40)*+{\phantom{c}}**@{-}?>*\dir{>},%
(10,0)*+{\phantom{c}};(0,20)*+{b}**@{-}?>*\dir{>};%
(0,20)*+{\phantom{b}};(10,40)*+{c}**@{-}?>*\dir{>},
(10,40)*+{\phantom{c}};(20,60)*+{d}**@{-}?>*\dir{>},
(0,60)*+{b}**@{-}?>*\dir{>},
(0,60)*+{\phantom{b}};(0,80)*+{a}**@{-}?>*\dir{>}
\endxy\\
\text{(i)}
\end{gathered}
&&
\begin{gathered}
  y=c\cdot b\cdot d\cdot
  c\cdot b\cdot a\strut\\[1em]
\xy<.15em,0em>:
(-2,0);(55,0)**@{-},
(0,48)="G",
@={"G","G"+(20,0),"G"+(20,10),"G"+(0,10)},
s0="prev" @@{;"prev";**@{-}="prev"}, "G"+(10,5)*{a},
(22,0)="G",
@={"G","G"+(20,0),"G"+(20,10),"G"+(0,10)},
s0="prev" @@{;"prev";**@{-}="prev"}, "G"+(10,5)*{c},
(11,12)="G",
@={"G","G"+(20,0),"G"+(20,10),"G"+(0,10)},
s0="prev" @@{;"prev";**@{-}="prev"}, "G"+(10,5)*{b},
(33,12)="G",
@={"G","G"+(20,0),"G"+(20,10),"G"+(0,10)},
s0="prev" @@{;"prev";**@{-}="prev"}, "G"+(10,5)*{d},
(22,24)="G",
@={"G","G"+(20,0),"G"+(20,10),"G"+(0,10)},
s0="prev" @@{;"prev";**@{-}="prev"}, "G"+(10,5)*{c},
(11,36)="G",
@={"G","G"+(20,0),"G"+(20,10),"G"+(0,10)},
s0="prev" @@{;"prev";**@{-}="prev"}, "G"+(10,5)*{b},
\endxy 
\quad \xy<.13em,.0em>:
(10,0)*+{c};(20,20)*+{d}?>*\dir{>}**@{-},(20,20)*+{\phantom{d}};(10,40)*+{\phantom{c}}**@{-}?>*\dir{>},%
(10,0)*+{\phantom{c}};(0,20)*+{b}**@{-}?>*\dir{>};%
(0,20)*+{\phantom{b}};(10,40)*+{c}**@{-}?>*\dir{>},
(10,40)*+{\phantom{c}};%(20,60)*+{d}**@{-}?>*\dir{>},
(0,60)*+{b}**@{-}?>*\dir{>},
(0,60)*+{\phantom{b}};(0,80)*+{a}**@{-}?>*\dir{>}
\endxy\\
\text{(ii)}
  \end{gathered}
\end{align*}
\caption{\textsl{{\normalfont(i)}: For $\M=\langle a,b,c,d\tq ac=ca,\ ad=da,\
    bd=db\rangle$: a trace $x\notin\V_a$ (it is not pyramidal) and its
    associated labeled poset.\quad {\normalfont(ii)}: a trace $y\in\V_a$ and
    its associated labeled poset on the right.}}
\label{fig:poqpqq}
\end{figure}

Then the set $\V_a$ of traces $x\in\M$ in the image of the mapping
$V_a:\BM\to\M$ can be described as follows: $\V_a$~is the set of
pyramidal traces $x\in\M$ such that the piece $a$ only occurs as the
unique maximal piece; see Figure~\ref{fig:poqpqq}.  Furthermore, we
observe that, if $x\in\V_a$, then $\{V_a=x\}=\up x$ (an intuitive
property, also proved in~\cite[Prop.~4.2]{abbes15}). It follows that
the distribution $\pr_a$ of the first hitting time has the following
simple expression:
\begin{gather}
  \label{eq:4}
\forall x\in\M\quad\pr_a(x)=
\begin{cases}
0,&\text{if $x\notin\V_a$}\\
\pr(\up x),&\text{if $x\in\V_a$}  
\end{cases}
\end{gather}

\subsubsection{Generating pyramidal heaps}
\label{sec:gener-pyram-heaps}

We consider a network of alphabets $(\Sigma_1,\dots,\Sigma_N)$ with
$\Sigma=\Sigma_1\cup\dots\Sigma_N$, such that the synchronization
trace monoid $\M=\M(\Sigma,I)$ is irreducible. We pick an arbitrary
letter $a\in\Sigma$. For each $i\in\{1,\dots,N\}$, let
$\Sigma'_i=\Sigma_i\setminus\{a\}$ if $a\in\Sigma_i$, and
$\Sigma'_i=\Sigma_i$ otherwise. Then the synchronization monoid $\M'$
of $(\Sigma'_1,\dots,\Sigma'_N)$ coincides with the sub-monoid of $\M$
generated by $\Sigma\setminus\{a\}$.

We assume, for each $i\in\{1,\dots,N\}$, to be given~$p'_i$, either a
probability distribution or a sub-probability distribution
over~$\Sigma'_i$, such that the output $X$ of the \BSA\ executed on
$(\Sigma'_1,\dots,\Sigma'_N)$ with those parameters, is finite with
probability~$1$. 

Given the parameters $(p'_1,\dots,p'_N)$, we consider the execution of
the trial-and-reject procedure described in pseudo-code in
Algorithm~\ref{algo7} below. At each run of the loop, the algorithm
needs to decide whether some trace $V$ is pyramidal in $\M$ or not. It
is clear that a---far from being optimal---scanning procedure,
examining all elements of $V$ starting from the right, will
successfully complete this job.

\begin{algorithm}
\caption{Outputting a pyramidal trace in $\V_a$}
\label{algo7}
\begin{algorithmic}[1]
\Require--
\State $X\gets\unit$\Comment{Initialization}
\Repeat
\State\textbf{call} Algorithm~\ref{algo4}\Comment{Calling the \BSA}
\State $X\gets$ output of Algorithm~\ref{algo4}
\State $V\gets X\cdot a$
\Until $V$ is pyramidal in $\M$\Comment{See comment}
\State\Return $V$
\end{algorithmic}
\end{algorithm}

By assumption, the \BSA\ which is repeatedly called in
Algorithm~\ref{algo7} always terminates and outputs a finite
trace. Since the probability of success if positive, since for
instance $X=\unit$ yields a success and has positive probability to be
output by the \BSA, Algorithm~\ref{algo7} always terminates. The
probability distribution of the output is given in the following
lemma, stated with the notations introduced above.

\begin{lemma}
  \label{lem:3}
Let $f':\M'\to\bbR_+^*$ be the sub-M\"obius valuation associated with
the output of the\/ \BSA\ running on~$\M'$. Then the distribution of the output
$V$ of Algorithm~\ref{algo7} is concentrated on $\V_a$ and given by:
\begin{gather}
  \label{eq:24}
\forall v\in\V_a\quad \pr(V=v)=K\cdot f'(v/a),
\end{gather}
where $v/a$~denotes the heap obtained by removing from $v$ its unique
maximal piece~$a$, and $K$ is a normalization constant.
\end{lemma}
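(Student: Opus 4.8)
The plan is to recognize Algorithm~\ref{algo7} as a rejection sampler and to read off its output law from the standard formula for such a procedure. First I recall the law of a single call to the \BSA\ on~$\M'$: by Theorem~\ref{thr:5} this law is multiplicative with valuation~$f'$, and since by hypothesis the output is finite with probability~$1$, Theorem~\ref{thr:2} (case~\ref{item:8}) tells us that $f'$ is sub-M\"obius and that the output $X$ satisfies $\pr(X=x)=\ew'f'(x)$ for all $x\in\M'$, where $\ew'>0$ is the value at the empty clique of the M\"obius transform of~$f'$, with $\sum_{x\in\M'}f'(x)=1/\ew'<\infty$.

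Next I set up the bijection underlying~\eqref{eq:24}. Since $X$ is a trace over $\Sigma\setminus\{a\}$, the trace $V=X\cdot a$ has exactly one occurrence of~$a$. Writing $X=a_1\cdot\ldots\cdot a_n$ and viewing $V$ as a labelled partial order on $\{1,\dots,n+1\}$ as in Section~\ref{sec:traces-as-labelled}, the element $n+1$ carrying the appended letter~$a$ is maximal in~$V$, because the strict order on $\{1,\dots,n+1\}$ contains no pair $(n+1,j)$. Hence $V$ is pyramidal in~$\M$ if and only if $a$ is its \emph{unique} maximal piece, which is exactly the condition $V\in\V_a$. Conversely, every $v\in\V_a$ has a single occurrence of~$a$, which is its unique maximal piece, so $v/a$ belongs to~$\M'$ and $(v/a)\cdot a=v$. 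Using cancellativity of the trace monoid, it follows that $x\mapsto x\cdot a$ restricts to a bijection from $A:=\{x\in\M'\tq x\cdot a\in\V_a\}$ onto~$\V_a$, with inverse $v\mapsto v/a$.

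Now I invoke the rejection-sampling identity. The successive calls to the \BSA\ in the loop of Algorithm~\ref{algo7} produce i.i.d.\ copies $X_1,X_2,\dots$ of~$X$, and the loop exits at the first index $k$ with $X_k\in A$; since $\unit\in A$ has positive probability under the \BSA, we have $\pr(X\in A)=\ew'\sum_{x\in A}f'(x)>0$ and the loop terminates almost surely. The returned value is $V=X_k\cdot a$ for that first accepted index, so $V$ has the law of $X\cdot a$ conditioned on the event $\{X\in A\}$. In particular $V$ is supported on~$\V_a$, and for $v\in\V_a$,
\begin{gather*}
  \pr(V=v)=\frac{\pr(X=v/a)}{\pr(X\in A)}=\frac{\ew'f'(v/a)}{\ew'\sum_{x\in A}f'(x)}=K\cdot f'(v/a),
\end{gather*}
with $K=\bigl(\sum_{x\in A}f'(x)\bigr)^{-1}$, the factor~$\ew'$ cancelling between numerator and denominator. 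This is exactly~\eqref{eq:24}.

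The only point that is not bookkeeping is the simple fact that the appended occurrence of~$a$ is always maximal in~$X\cdot a$, so that the pyramidality test performed by Algorithm~\ref{algo7} is equivalent to membership in~$\V_a$; this is immediate from the construction of the order of a trace recalled in Section~\ref{sec:traces-as-labelled}, since no adjacent commutation can move the last letter of a word to the right of a preceding one. Everything else is the standard description of the law of a rejection sampler together with the positivity of the acceptance probability already noted after the statement of Algorithm~\ref{algo7}.
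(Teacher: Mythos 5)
Your proof is correct and follows essentially the same route as the paper's: identify Algorithm~\ref{algo7} as a rejection sampler, so the output is $X\cdot a$ conditioned on acceptance, and apply the sub-Bernoulli formula $\pr(X=x)=\ew' f'(x)$ from Theorems~\ref{thr:5} and~\ref{thr:2}, the constants being absorbed into~$K$. The extra details you supply (the bijection $x\mapsto x\cdot a$ between accepted outputs and $\V_a$, maximality of the appended~$a$, and almost-sure termination via $\unit$ being accepted) are points the paper treats more briefly in the surrounding text, but the argument is the same.
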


\begin{proof}
  Let $\prq$ denote the probability distribution of the output $X$ of
  the \BSA\ running on~$\M'$ with the specified parameters
  $(p'_1,\dots,p'_N)$. Then $f'(x)=\prq(\Up x)$ for
  $x\in\M'$. According to Theorem~\ref{thr:2}, for some constant
  $\varepsilon>0$, we have $\prq(X=x)=\varepsilon f'(x)$ for all
  $x\in\M'$.

  The rejection procedure amounts to considering the distribution of
  $V=X\cdot a$ conditioned on $X\cdot a\in\V_a$. Hence the probability
  for Algorithm~\ref{algo7} to issuing an element $v\in\V_a$ is: 
\begin{gather*}
 \pr(V=v)=\frac{\prq(X\cdot a=v)}{\prq(X\cdot
    a\in\V_a)}=\frac{\prq(X=v/a)}{\prq(X\cdot a\in\V_a)}=K\cdot f'(v/a),
\end{gather*}
where $K$ is the constant $K=\varepsilon/{\prq(X\cdot a\in\V_a)}$.
\end{proof}

Note that the form~(\ref{eq:24}) is almost that of a valuation
evaluated at~$v$. The contribution of the last piece~$a$ is missing,
but the constant $K$ is adequately placed to play the role of this
missing contribution. This will be used in the proof of
Theorem~\ref{thr:7}.

\subsection{The Probabilistic Full Synchronization Algorithm}
\label{sec:full-synchr-algor}

We are now ready for constructing a probabilistic algorithm generating
Bernoulli-distributed infinite traces. The framework consists of a
network $(\Sigma_1,\dots,\Sigma_N)$ of alphabets, such that the
synchronization trace monoid $\M=\M(\Sigma,I)$ is irreducible, with
$\Sigma=\Sigma_1\cup\dots\cup\Sigma_N$.

\subsubsection{Description of the algorithm}
\label{sec:prob-full-synchr}

Having chosen an arbitrary piece $a\in\Sigma$, we consider a family
$(p'_1,\dots,p'_N)$ of probabilistic parameters as above, \emph{i.e.},
with the constraint that the \BSA\ executed on the sub-monoid $\M'$
generated by $\Sigma\setminus\{a\}$ and with these parameters, outputs
a finite trace with probability~$1$. 

The Probabilistic Full Synchronization Algorithm (\FSA) is described
in pseudo-code in Algorithm~\ref{algo5} below. The \FSA\ is an endless
loop, incrementally writing to its output register~$X$. It simulates
thus the random walk on $\M$ with increments distributed according to
the distribution established in Lemma~\ref{lem:3}.

\begin{algorithm}
\caption{Probabilistic Full Synchronization Algorithm}
\label{algo5}
\begin{algorithmic}[1]
\Require--
\State $X\gets\unit$\Comment{Initialization}
\Repeat
\State\textbf{call} Algorithm~\ref{algo7}
\State $V\gets$ output of Algorithm~\ref{algo7}\Comment{Random pyramidal trace $V\in\V_a$}
\State $X\gets X\cdot V$\Comment{Increments the random walk}
\Until\textbf{false}
\end{algorithmic}
\end{algorithm}

The analysis of Algorithm~\ref{algo5} is twofold: a probabilistic
analysis carried on below and a complexity analysis carried on in
Section~\ref{sec:compl-analys-fsa}.

\subsubsection{Probabilistic analysis of the \FSA}
\label{sec:prob-analys-fsa}

Recall that, by convention, the \emph{output} of the \FSA\ is the
random infinite heap, least upper bound in $\BM$ of the finite heaps
recursively written in its output register. The probability
distribution of this infinite heap is as follows.

\begin{theorem}
  \label{thr:7}
  We consider the execution of the\/ \FSA\ in the framework described
  in Section~\ref{sec:prob-full-synchr}, and we adopt the same
  notations. Let $f':\M'\to\bbR_+^*$ be the sub-M\"obius valuation
  associated with the\/ \BSA\ executed with the chosen parameters\/
  $(p'_1,\dots,p'_N)$, and let $X_\infty$ be the output (with the
  convention recalled above) of the\/ \FSA.

  Then $X_\infty$ is distributed according to a Bernoulli measure
  on~$\BM$. The associated valuation $f:\M\to\bbR$ is the M\"obius
  valuation on $\M$ that extends $f':\M'\to\bbR$ (the existence and
  uniqueness of which are stated in Theorem~\ref{thr:4}).
\end{theorem}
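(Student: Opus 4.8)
The plan is to identify the output $X_\infty$ of the \FSA\ as the limit distribution of a random walk on $\M$, and then to recognize that walk as the one governed by a first hitting time, so that Theorem~\ref{thr:10} applies. First I would observe that by construction (Algorithm~\ref{algo5}) the register $X$ performs a random walk on $\M$: at each step an independent increment $V$ is drawn from the distribution $\pr(V=\cdot)$ described in Lemma~\ref{lem:3}, namely $\pr(V=v)=K\cdot f'(v/a)$ for $v\in\V_a$ and $0$ otherwise, and $X_\infty=\bigvee_n X_n$ is by definition the limit distribution of this walk. Call this increment distribution $\theta$; thus $X_\infty$ has law $\theta^{*\infty}$.

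Next I would let $f:\M\to\bbR_+^*$ be the \emph{unique} M\"obius valuation extending $f'$, whose existence is guaranteed by Theorem~\ref{thr:4}~\ref{item:10} (we are in the irreducible case, and $\Sigma'=\Sigma\setminus\{a\}$ with $f'$ sub-M\"obius by hypothesis, precisely the situation of that statement). Let $\pr$ be the Bernoulli measure on $\BM$ associated with $f$ via Theorem~\ref{thr:2}~\ref{item:7}, and let $\pr_a$ be the distribution of the first hitting time of $a$ under $\pr$. The key computation is then to check that $\theta=\pr_a$. By~(\ref{eq:4}), $\pr_a$ is concentrated on $\V_a$ with $\pr_a(v)=\pr(\up v)=f(v)$ for $v\in\V_a$; and since $a$ is the unique maximal piece of $v$, the valuation property gives $f(v)=f(v/a)\cdot f(a)$. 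Comparing with Lemma~\ref{lem:3}, $\theta(v)=K\cdot f'(v/a)=K\cdot f(v/a)$ (the last equality because $v/a\in\M'$ and $f$ extends $f'$). Both $\theta$ and $\pr_a$ are probability distributions proportional to $v\mapsto f(v/a)$ on $\V_a$; hence they coincide, and in passing $K=f(a)$. This is the observation already flagged in the remark following Lemma~\ref{lem:3}, that $K$ plays the role of the missing contribution of the last piece~$a$.

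Finally, with $\theta=\pr_a$ established, Theorem~\ref{thr:10} (applicable since $\M$ is irreducible and $\pr$ is Bernoulli) gives $\pr_a^{*\infty}=\pr$. Therefore $X_\infty$, whose law is $\theta^{*\infty}=\pr_a^{*\infty}$, is distributed according to $\pr$, the Bernoulli measure whose associated valuation is the M\"obius extension $f$ of $f'$, which is exactly the claim. I expect the main obstacle to be the bookkeeping in the identification $\theta=\pr_a$: one must be careful that the image set $\V_a$ of the first hitting time coincides with the set of heaps $X\cdot a$ that pass the pyramidality test in Algorithm~\ref{algo7} (this uses the characterization of $\V_a$ as pyramidal heaps with $a$ occurring only at the maximum, recalled before~(\ref{eq:4})), and that the normalizing constants on both sides match — but both being probability measures on the same countable set with proportional densities, normalization is automatic. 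A secondary point worth spelling out is that $f'$ being sub-M\"obius is exactly the hypothesis needed to invoke Theorem~\ref{thr:4}~\ref{item:10}, which follows from the assumption that the \BSA\ on $\M'$ produces a finite trace with probability~$1$, via Theorem~\ref{thr:2}~\ref{item:8}.
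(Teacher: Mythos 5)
Your proposal is correct and follows essentially the same route as the paper's proof: identify the \FSA\ increment law from Lemma~\ref{lem:3} with the first-hitting-time distribution $\pr_a$ of the Bernoulli measure attached to the M\"obius extension $f$ of~$f'$ (the proportionality/normalization argument forcing $K=f(a)$ is exactly the paper's summation over~$\V_a$), and then conclude via Theorem~\ref{thr:10}. Nothing essential is missing.
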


\begin{proof}
  Let $f:\M\to\bbR_+^*$ be as in the statement. Let $\pr$ be the
  Bernoulli measure on $\BM$ defined by $\pr(\up x)=f(x)$, which is
  well defined according to Theorem~\ref{thr:2}.

  Letting $\prq$ be the distribution of~$X_\infty$, we have to prove
  that $\pr=\prq$. Let $g:\V_a\to\bbR_+^*$ be the probability
  distribution of the increment~$V$ in the \FSA.  Since $\prq$ is the
  limit distribution of the random walk on $\M$ with increments
  distributed identically to~$V$, it follows from Theorem~\ref{thr:10}
  that we only need to show that $g(x)=\pr(V_a=x)$ for all $x\in\V_a$.

  According to~(\ref{eq:4}), we have:
  \begin{align}
\label{eq:23}
\forall x\in\V_a\quad    \pr(V_a=x)&=\pr(\up x)=f(x).
  \end{align}
Whereas, according to Lemma~\ref{lem:3}, we have for some constant~$K$:
\begin{align}
\label{eq:25}
  \forall x\in\V_a\quad g(x)=K f'(x/a)=\frac{K}{f(a)}f(x).
\end{align}

Summing up over $\V_a$ in~(\ref{eq:23}) yields
$\sum_{x\in\V_a}f(x)=1$, whereas summing up over $\V_a$
in~(\ref{eq:25}) yields
$1=\bigl(K/f(a)\bigr)\cdot\bigl(\sum_{x\in\V_a}f(x)\bigr)=K/f(a)$. Therefore
$K=f(a)$, which yields after re-injecting in~(\ref{eq:25}): $g(x)=f(x)$
and thus $g(x)=\pr(V_a=x)$ for all $x\in\V_a$. Since $\pr$ has the
desired properties, the proof is complete.
\end{proof}

\subsubsection{Complexity analysis of the \FSA}
\label{sec:compl-analys-fsa}

We will limit our analysis to the following observation: \emph{the
  size of the output register of the \FSA\ grows linearly with time in
  average.}

Indeed, each \BSA\ call in Algorithm~\ref{algo7} takes a finite amount
of time in average according to Proposition~\ref{prop:3},
point~\ref{item:4}. Since the probability of success in the
trial-and-reject procedure of Algorithm~\ref{algo7} is positive, it
will thus execute in average in a fixed amount of time, whence the
average linear growth of the output register of the \FSA.

Another question is to compute adequately the probabilistic
parameters. We will discuss it briefly in
Section~\ref{sec:computational-issues}, after having examined some
examples.

\subsection{Example: ring models}
\label{sec:examples-1}

\subsubsection{A general result}
\label{sec:general-result}

For the ring models, the following result shows that any Bernoulli
measure can be simulated by executions of the \FSA. 

\begin{theorem}
  \label{thr:9}
Let $(a_0,\ldots,a_{N-1})$ be $N$ distinct symbols, and let\/
$(\Sigma_0,\ldots,\Sigma_{N-1})$ be the network of alphabets
defined by:
\begin{align*}
0<i\leq N-1\quad  \Sigma_i&=\{a_{i-1},a_i\}\,,&\Sigma_0&=\{a_{N-1},a_0\}\,.
\end{align*}

The synchronization monoid $\M=\M(\Sigma,I)$ is described as
follows:
\begin{align*}
\Sigma&=\{a_0,\ldots,a_{N-1}\}\,,&
I&=\{(a_i,a_j)\tq (i-j\mod N\geq 2)\wedge(j-i\mod N\geq2)\}\,.
\end{align*}

Then any Bernoulli measure on $\BM$ can be simulated by the endless
execution of the\/ \FSA, and in particular the uniform measure
on~$\BM$.
\end{theorem}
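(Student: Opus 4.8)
The plan is to reduce the statement to three ingredients already at hand: Theorem~\ref{thr:8} on the path model, Theorem~\ref{thr:4} on restriction and extension of valuations, and Theorem~\ref{thr:7} on the probabilistic behaviour of the \FSA. Fix a Bernoulli measure $\pr$ on~$\BM$; by Theorem~\ref{thr:2} it corresponds to a M\"obius valuation $f:\M\to\bbR_+^*$, and we put $t_i=f(a_i)$. Observe that $\M$ is irreducible, since for $N\geq3$ the dependence relation makes $(\Sigma,D)$ the $N$-cycle on $a_0,\ldots,a_{N-1}$, which is connected; hence Theorems~\ref{thr:4} and~\ref{thr:7} are applicable.

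I would run the \FSA\ with the removed piece $a=a_0$. As in Section~\ref{sec:gener-pyram-heaps}, the primed network is then $\Sigma'_0=\{a_{N-1}\}$, $\Sigma'_1=\{a_1\}$ and $\Sigma'_i=\{a_{i-1},a_i\}$ for $2\leq i\leq N-1$, and the corresponding synchronization monoid $\M'$ is the submonoid of $\M$ generated by $\{a_1,\ldots,a_{N-1}\}$. The structural point is that on the index set $\{1,\ldots,N-1\}$ the independence relation of $\M$ becomes simply $a_i\parallel a_j\iff|i-j|\geq2$, so that $\M'$ is, up to relabelling, the path model with $N-1$ generators. By Theorem~\ref{thr:4}, point~\ref{item:9}, the restriction $f'$ of $f$ to $\M'$ is a sub-M\"obius valuation, and by point~\ref{item:10} the M\"obius extension of $f'$ to $\M$ is unique, hence equal to~$f$.

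Next I would realize $f'$ by a \BSA\ on the primed network whose output is finite almost surely. Since $f'$ is sub-M\"obius, Theorem~\ref{thr:2} identifies it with the valuation of a sub-Bernoulli measure on the path model $\M'$, so Theorem~\ref{thr:8} provides (sub-)probability distributions realizing $f'$ by the \BSA\ on the \emph{standard} path network $(\{a_1,a_2\},\ldots,\{a_{N-2},a_{N-1}\})$, with output finite with probability~$1$. Transport this to the primed network by taking $p'_0$ and $p'_1$ to be the deterministic distributions of mass~$1$ on $a_{N-1}$ and on $a_1$, and $p'_2,\ldots,p'_{N-1}$ as just obtained. Since any trace projects to a singleton coordinate as a prefix of the corresponding constant sequence, the coordinates $\Sigma'_0,\Sigma'_1$ impose no constraint on the synchronization trace; hence the \BSA\ on $(\Sigma'_0,\ldots,\Sigma'_{N-1})$ has the same output as the \BSA\ on the standard path network, finite with probability~$1$, and by Theorem~\ref{thr:5} its valuation on the end letters is $f'(a_1)=p'_1(a_1)p'_2(a_1)=p'_2(a_1)$ and $f'(a_{N-1})=p'_0(a_{N-1})p'_{N-1}(a_{N-1})=p'_{N-1}(a_{N-1})$, matching the single-resource values of the path model. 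Thus the \BSA\ on the primed network realizes $f'$ while meeting the finiteness requirement built into the \FSA.

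Finally I would invoke Theorem~\ref{thr:7}: the \FSA\ run with removed piece $a_0$ and parameters $(p'_0,\ldots,p'_{N-1})$ outputs an infinite trace distributed according to the Bernoulli measure whose valuation is the M\"obius extension of $f'$ to $\M$, which by the uniqueness in Theorem~\ref{thr:4} equals $f$; so the output has distribution~$\pr$. The uniform measure on $\BM$, being the Bernoulli measure with $f(x)=p_0^{|x|}$ (Theorem~\ref{thr:3}), is then a particular case. The step I expect to need the most care is checking that the passage from the ring's primed network to the standard path network is harmless: that $\M'$ is genuinely the path monoid on $N-1$ generators, a short computation on the independence relation modulo~$N$, and that adjoining the two deterministic singleton coordinates alters neither the synchronization trace nor the almost-sure finiteness of the \BSA\ output. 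One should also keep an eye on the degenerate small cases, e.g.\ $N=3$ where $\M$ is free on three generators and $\M'$ is the path model on two generators, for which all the cited results still hold.
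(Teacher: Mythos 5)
Your proposal is correct and follows essentially the same route as the paper's proof: remove the piece $a_0$, note that the remaining submonoid $\M'$ is a path model, use Theorem~\ref{thr:4} point~\ref{item:9} to see that the restriction of $f$ is sub-M\"obius, realize it by the \BSA\ via Theorem~\ref{thr:8}, run the \FSA\ and conclude by Theorem~\ref{thr:7} together with the uniqueness of the M\"obius extension (Theorem~\ref{thr:4} point~\ref{item:10}). The only difference is that you spell out the harmless bookkeeping about the singleton alphabets $\Sigma'_0,\Sigma'_1$ in the primed network, a detail the paper passes over silently.
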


\begin{proof}
  Let $\pr$ be a target Bernoulli measure on~$\BM$, and let
  $f:\M\to\bbR_+^*$ be the associated M\"obius valuation. We pick
  $a_0$ as the piece to be removed. Let $\M'$ be the submonoid of
  $\M$ generated by $a_1,\dots,a_N$. Then $\M'$ is a path
  model. Furthermore, since $\M$ is irreducible, it follows from
  Theorem~\ref{thr:4} point~\ref{item:9} that the restriction of $f$ to
  $\M'$ is sub-M\"obius. According to Theorem~\ref{thr:8}, the
  associated probability distribution can be obtained by running the
  \BSA\ with suitable parameters. Running the \FSA\ based on this
  instance of the \BSA, we generate a Bernoulli measure which M\"obius
  valuation, say~$g$, extends to $\M$ the restriction of $f$
  to~$\M'$. By the uniqueness of the extension of sub-M\"obius
  valuations to M\"obius valuations (Theorem~\ref{thr:4}
  point~\ref{item:10}), it follows that $g=f$.
\end{proof}

\subsubsection{Example}
\label{sec:example}

We consider the example of the ring model $\M$ on five generators, the
synchronization graph of which is depicted in
Figure~\ref{fig:fiveoknq},~(i). We aim at generating the uniform measure
on~$\BM$, say~$\nu$, characterized by  $\nu(\up x)=p_0^{|x|}$\,, where
$p_0=\frac12-\frac{\sqrt5}{10}$ is the root of smallest modulus of
$\mu_\M(t)=1-5t+5t^2$\,. 

We pick $a_0$ as our distinguished piece, as depicted in
Figure~\ref{fig:fiveoknq},~(ii). Then, we wish to run on the sub-monoid
\begin{gather*}
\M'=\langle a_1,a_2,a_3,a_4\;|\;a_1a_3=a_3a_1,\ a_1a_4=a_4a_1,\ a_2a_4=a_4a_2\rangle
\end{gather*}
a \BSA\ with associated valuation $f':\M'\to\bbR$ such that
$f'(x)=p_0^{|x|}$ for all $x\in\M'$\,. Note that $p_0$ is \emph{not}
the root of~$\mu_{\M'}$\,!

\begin{figure}
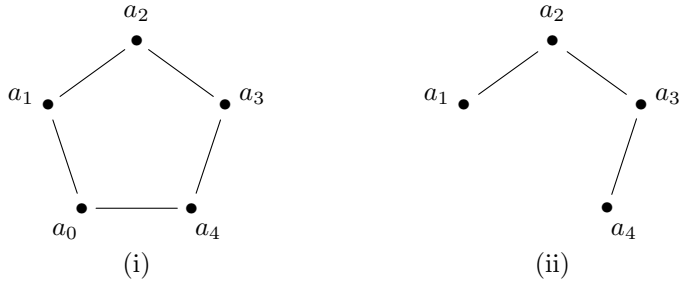

    \begin{align*}
\begin{gathered}
\xy{%
{\xypolygon5"A"{~:{(3.5,0):(3.5,0):}{\hbox{\strut$\;\bullet\;$}}}%
%\xypolygon4"B"{~:{(2,2):(4,4):}~><{@{}}{a_{\xypolynode}}}
\xypolygon5"B"{~:{(4,0):(4,0):}~><{@{}}{}}}
,"B2",*{a_2},
,"B3",*{a_1},
,"B4",*{a_0},
,"B1",*{a_3},
,"B5",*{a_4}
}%
\endxy\\
\text{(i)}
\end{gathered}
&&
\begin{gathered}
\xy{%
{\xypolygon5"A"{~:{(3.5,0):(3.5,0):}~><{@{}}{}}%
\xypolygon5"B"{~:{(4,0):(4,0):}~><{@{}}{}}}
,"B1",*{a_3},
,"B5",*{a_4},
,"B2",*{a_2},
,"B3",*{a_1},
"A3",*+{\bullet};"A2"*+{\bullet}**@{-};"A1"*+{\bullet}**@{-};"A5"*+{\bullet}**@{-};%"A2"*+{\bullet}**@{-};
}%
\endxy\\
\text{(ii)}
\end{gathered}
% \xymatrix{\bullet\ar@{-}[r]\labelu{a_1}&\bullet\ar@{-}[r]\labelu{a_2}&\bullet\labelu{a_3}}
\end{align*}
\caption{\textsl{{\normalfont(i)}:~synchronization monoid for the ring
    model with five generators.\quad {\normalfont(ii):~}generators of
    the path model on which the \BSA\ is run at each loop of the
    \FSA.}}
  \label{fig:fiveoknq}
\end{figure}

Referring to the computations performed in the proof of
Theorem~\ref{thr:8}, adequate solutions for $p_2$, $p_3$ and~$p_4$\,,
respectively on $\{a_1,a_2\}$, on $\{a_2,a_3\}$ and on
$\{a_3,a_4\}$\,, are obtained as follows:
\begin{align*}
  p_2(a_1)&=p_0=\frac12-\frac{\sqrt5}{10}\approx0.276
  &p_2(a_2)&=1-p_0=\frac12+\frac{\sqrt5}{10}\approx0.724\\
  p_3(a_2)&=\frac{p_0}{1-p_0}=\frac32-\frac{\sqrt5}2\approx0.382
  &p_3(a_3)&=\frac{1-2p_0}{1-p_0}=-\frac12+\frac{\sqrt5}2\approx0.618\\
  p_4(a_3)&=\frac{p_0(1-p_0)}{1-2p_0}=\frac1{\sqrt5}\approx0.447
  &p_4(a_4)&=p_0=\frac12-\frac{\sqrt5}{10}\approx0.276
\end{align*}

\section{Computational issues and perspectives}
\label{sec:computational-issues}

When trying to use the \FSA\ in practice for simulation and testing,
one might be concerned by the fact that it incrementally outputs heaps
with a particular shape, namely they are all pyramidal. Furthermore,
the tip of these pyramidal heaps is always labeled with the same
letter, corresponding to an arbitrary choice made before executing the
algorithm. Actually, there are good reasons not to worry about
that. Indeed, a large class of statistics on heaps can safely be
computed on these particular heaps, and they will asymptotically be
indistinguishable from statistics computed on arbitrary large
heaps. For instance, an asymptotics of the speedup, \emph{i.e.}\ the
ratio height over number of pieces in large heaps, can be estimated in
this way. Precise results on this topic are found in~\cite{abbes15},
under the name of cut-invariance.

Another concern is the following.  Even if the \FSA\ were proved to be
able to simulate any Bernoulli measure for any topology, not only for
the ring topology, there is no doubt that its execution would still
need the precomputation of adequate probabilistic parameters, and in
particular the root of smallest modulus of the M\"obius polynomial of
the synchronization monoid. Given that the determination of this
polynomial on a general synchronization graph is an NP-complete
problem (since the independence set decision problem is
NP-complete~\cite{dasgupta06:_algor}), this precomputation method
appears unrealistic to be used in practice.

However, what we need is not the M\"obius polynomial itself, but only
its root of smallest modulus. It is not theoretically forbidden to
think that this root might be approximated in polynomial time, even
though the M\"obius polynomial is hard to find. Actually, one can even
think of a feedback procedure based on our generation algorithms to
find an approximation of this root. Indeed, we could execute the
generation algorithm with arbitrary parameters, then adjust the
probabilistic parameters in order to increase uniformity, then re-run
the generation algorithm and re-adjust the parameters, ans so on. It
is reasonable to expect that such a procedure would lead
the parameters to converge toward the critical value entailing
uniformity, which is the root of smallest modulus of the M\"obius
polynomial. This interesting question may deserve a dedicated
work.

\bigskip
\noindent{\bfseries\sffamily Acknowledgments.}\quad
A number of ideas in this paper have emerged through animated
discussions with \'E.~Fabre, B.~Genest and N.~Bertrand at INRIA Rennes
during Spring 2015. I am grateful to A.~Muscholl for pointing out the
reference~\cite{cori85}; and I am grateful to C.~Male for pointing out
the reference~\cite{khanin10}. I am grateful to the anonymous reviewers
who greatly helped me improving the paper.

\printbibliography

\end{document}